\newtheorem{theorem}{Theorem}
\newtheorem{proposition}{Proposition}
\newtheorem{lemma}{Lemma}
\newtheorem{corollary}{Corollary}
\theoremstyle{definition}
\newtheorem{claim}{Claim}
\newtheorem{example}{Example}
\newtheorem{remark}{Remark}
\newtheorem{fact}{Fact}
\newtheorem{observation}{Observation}
\newtheorem{result}{Result}
\newcommand*{\cproofname}{Proof}
\begin{document}
\title{
\Large{Efficiency in Multiple-Type Housing Markets
\thanks{I gratefully acknowledge financial support from the Swiss National Science Foundation (SNSF) through Project 100018$\_$192583. Additionally, I would like to acknowledge the University of Lausanne for their generous financial support as well. Moreover, this article is based upon work supported by the National Science Foundation under Grant No.~DMS-1928930 and by the Alfred P.~Sloan Foundation under grant G-2021-16778, while I was in residence at the Simons Laufer Mathematical Sciences Institute (formerly MSRI) in Berkeley, California, during the Fall 2023 semester. Furthermore, I am grateful to Bettina Klaus for her invaluable advice, support, and encouragement, and for her patient reviews and detailed suggestions. I would like to thank Takuma Wakayama, William Thomson, Flip Klijn, and Ning Neil Yu, and the audience at  several seminars and conferences for their very helpful comments and discussions. 
All remaining errors are my own. 
This paper was awarded the SING Best Student Paper Prize at the 18th European Meeting on Game Theory, and it subsumes part of the results contained in my job market paper ``Endowments-swapping-proofness and Efficiency in Multiple-type Housing Markets,'' which was awarded the Kanematsu Prize from the Research Institute for Economics and Business Administration, Kobe University, in 2022.}
}
}

\author{Di Feng\thanks{Department of Finance, Dongbei University of Finance and Economics, Jianshan Jie 217, 116025, Shahekou, Dalian, P.R.\ China; \textit{e-mail}: \href{mailto:dfeng@dufe.edu.cn}{\tt dfeng@dufe.edu.cn}.}}

\date{\today}
\maketitle

\begin{abstract}
\linespread{1.1}\selectfont	
\noindent We consider multiple-type housing markets \citep{moulin1995}, which extend Shapley-Scarf housing markets \citep{shapley1974} from one dimension to higher dimensions. 
In this model, \textsl{Pareto efficiency} is incompatible with \textsl{individual rationality} and \textsl{strategy-proofness} \citep{konishi2001}. Therefore, we consider two weaker efficiency properties:
	\textsl{coordinatewise efficiency} and \textsl{pairwise efficiency}. 
	
	We show that these two properties both  (i) are compatible with \textsl{individual rationality} and \textsl{strategy-proofness}, and (ii) help us to identify two specific mechanisms. 
	To be more precise, on various domains of preference profiles, together with other well-studied properties (\textsl{individual rationality}, \textsl{strategy-proofness}, and \textsl{non-bossiness}), \textsl{coordinatewise efficiency} and \textsl{pairwise efficiency} respectively characterize two extensions of the top-trading-cycles mechanism (TTC): the coordinatewise top-trading-cycles mechanism (cTTC) and the bundle top-trading-cycles mechanism (bTTC). For multiple-type housing markets with strict preferences, our characterization of bTTC constitutes the first characterization of an extension of the prominent TTC mechanism.

    
Our proof is nonstandard and its novelty has independent methodological interest. Specifically, the absence of \textsl{non-bossiness} in the cTTC characterization and its presence in the bTTC characterization highlight both the uniqueness of our proof approach and the differences between our results and those in the existing literature.
\medskip

\noindent {\it JEL classification:} C78; D61; D47.\medskip

\noindent {\it Keywords:} multiple-type housing markets; strategy-proofness; constrained efficiency; top-trading-cycles (TTC) mechanism; market design.


\end{abstract}


\section{Introduction}\label{sec:intro}
\subsection{Motivating examples and main results}
We consider two multiple-type objects reallocation problems represented below, which motivates our research.\smallskip

\noindent\textbf{Task reallocation.}
Consider a scenario where we are organizing paper presentations for a class of ten students, we have ten papers and ten dates that need to be distributed among the students. Initially, the assignments might be made randomly. However, students have different preferences over paper-date pairs. Some students may prefer to present earlier and have a preference for presenting a paper on a specific topic of personal interest, while others may prefer to do it later and prefer another topic.
Therefore, reallocating papers and reallocating dates among the students, may be Pareto improving.  \medskip


\noindent\textbf{Job rotation.}
(1) Time schedule.
Every doctor enrolled in a hospital residency program is obligated to take turns being on an emergency duty schedule every month.
The schedule includes all doctors in the hospital and covers an entire year. The schedule is automatically carried over to the next year, but can be modified at the beginning of each year. When making changes, the preferences of each doctor regarding their duty schedule in the new year can be taken into consideration. 
(2) Location arrangement. Similarly, we can consider scheduling issues based on different locations. For instance, one anesthesiologist may be on duty at the North zone for the entire day, while another is at the South zone. Clearly, sometimes they could exchange their locations.\medskip

The common characteristics of the above examples are as follows: there are several objects that are labeled by different types and assigned to a group of agents (in the first example, there are two types, undergraduate course and graduate course, and in the second example, types refer to time periods). Each agent owns one object of each type and consumes exactly one object of each type.\footnote{Although doctors may not officially own certain time slots, they may unofficially inherit their time slot from the previous year, which can be regarded as their ownership of the slot. Furthermore, some people may officially own certain duties in certain situations, see \citet{klaus2008} in detail.} 
Additionally, there are no monetary transfers during the reallocation process. Therefore, agents have preferences over bundles, each consisting of one object of each type.\smallskip

We investigate a multiple-type reallocation model without monetary transfers that has these characteristics: the multiple-type housing markets model, as introduced by \citet{moulin1995}. In this model, a (re)allocation specifies how the objects are assigned to agents, and a mechanism is a mapping from a class of agents' preference profiles to the set of allocations.\smallskip

A common preference domain is the strict preference domain, in which all bundles are linearly ranked. In this domain, agents' preferences may exhibit complementarity. We also consider a subdomain of strict preferences which captures the substitution effect: (strictly) separable preferences.
Separability means that each agent has a list of preferences over objects of each type, and if two bundles differ only in one type, then the comparison between the two bundles is based on the agent's preference for that type. In a related paper, \cite{echenique2022} consider dichotomous preferences, which allow indifference between bundles.
\smallskip

In this paper, 
we consider the following properties for mechanisms.
\textsl{Individual rationality} is a voluntary participation condition that states that no agent will be worse off after the reallocation.
\textsl{Strategy-proofness} states that no agent can benefit from misreporting his preferences, and \textsl{group strategy-proofness} means that no group of agents can benefit from joint misreporting. 
\textsl{Non-bossiness} says that if a change in an agent's preferences does not affect his allocation, then it should not result in a change in the allocation of others. Consequently, the entire allocation would remain unchanged.
An allocation can be \textsl{Pareto improved} if an agent can be made better off without reducing the welfare of others.
\textsl{Pareto efficiency} states that the selected allocation cannot be \textsl{Pareto improved}. 
Similarly, an allocation can be \textsl{coordinatewise improved}
if there exists a Pareto improvement that reallocates exactly one object for each agent,
and \textsl{coordinatewise efficiency} means that the selected allocation cannot be \textsl{coordinatewise improved}.\footnote{The idea of \textsl{coordinatewise efficiency} is brorrowed from \citet{biro2022}, , which originally refers to it as \textsl{ig-Pareto-efficiency}. 
This property and its variants are also well-studied in the literature, e.g., \citet{aziz2019efficient}, \citet{caspari2020}, and \citet{coreno2022}.} 
Finally, an allocation can be \textsl{pairwise improved} if two agents can be both better off by exchanging their received entire bundles, and \textsl{pairwise efficiency} says that the selected allocation cannot be  \textsl{pairwise improved}.\smallskip\footnote{ 
This property is introduced by \citet{ekici2022}, which originally refers to it as \textsl{pair efficiency}.} 

When there is only one type, our model reduces to the Shapley-Scarf housing market model \citep{shapley1974}, and it is known that for such markets, \textsl{Pareto efficiency} is compatible with \textsl{individual rationality} and \textsl{strategy-proofness}. Furthermore, these three properties together uniquely identify the prominent top-trading-cycles mechanism (TTC) \citep{ma1994,svensson1999}. 
However, when there are multiple types, \textsl{Pareto efficiency} is incompatible with \textsl{individual rationality} and \textsl{strategy-proofness} \citep{konishi2001}. 
Therefore, we aim to determine the level of efficiency that remains feasible by preserving \textsl{individual rationality} and \textsl{strategy-proofness}, i.e., we investigate which type of efficiency property is compatible with these two criteria.

To address this question, we explore relaxations of \textsl{Pareto efficiency} by restricting the set of efficiency improvements that agents might employ. 
Since, in our model, implementing a full Pareto improvement is often impossible due to its complexity and the difficulty of coordinating intricate trades involving multiple objects among many agents, we focus on coordinatewise improvement and pairwise improvement: Because these types of improvements can be easily coordinated due to their relatively simple structures.
Consequently, we consider two weaker efficiency properties, \textsl{coordinatewise efficiency} and \textsl{pairwise efficiency}. 
First, we show that for separable preferences, \textsl{coordinatewise efficiency} is compatible with \textsl{individual rationality} and \textsl{strategy-proofness}.
More precisely, we show that these three properties uniquely identify one mechanism, the coordinatewise top-trading-cycles mechanism (cTTC), which is an extension of TTC (Theorem~\ref{thm:cTTC}).
However, for strict preferences, \textsl{coordinatewise efficiency} is also incompatible with \textsl{individual rationality} and \textsl{strategy-proofness} (Theorem~\ref{thm:imposs}).
We then turn to our second efficiency property, \textsl{pairwise efficiency}. We show that for both separable preferences and strict preferences, only another TTC extension,
the bundle top-trading-cycles mechanism (bTTC), 
satisfies \textsl{individual rationality}, \textsl{group strategy-proofness} (or the combination of \textsl{strategy-proofness} and \textsl{non-bossiness}), and \textsl{pairwise efficiency} (Theorems~\ref{thm:bttc} and \ref{thm:bttc2}). 
Finally, we propose several variations of our efficiency properties. We find that each of them is either satisfied by cTTC or bTTC, or leads to an impossibility result (together with \textsl{individual rationality} and \textsl{strategy-proofness}).
Therefore, our characterizations can be primarily interpreted as a compatibility test, where we determine whether certain efficiency properties are compatible with \textsl{individual rationality} and \textsl{strategy-proofness}. Loosely speaking, any ``reasonable'' efficiency property (defined by efficiency improvements) that is not satisfied by cTTC or bTTC is incompatible with \textsl{individual rationality} and \textsl{strategy-proofness}.\footnote{\label{myfootnote}It is worth noting that there might be a few restrictive efficiency properties compatible with these requirements that are not satisfied by cTTC or bTTC; however, such exceptions may not be particularly interesting to focus on. In Appendix~\ref{footnoteexample}, we provide an example of one exception and demonstrate that it is uninteresting. } 
Overall, our results can be divided into two main components: (1) the characterization results of our TTC extensions (cTTC and bTTC) and (2) the compatibility results between our efficiency properties with \textsl{individual rationality} and \textsl{strategy-proofness}.

In the discussion section (Section~\ref{discussion}), we first discuss the impact of our results (Subsection~\ref{discussion1}). Furthermore, in Subsection~\ref{discussion2} we look at general environments, such as multiple reallocation problems, in which objects are not categorized by different types, and agents' endowments may be unevenly distributed.
In such general environments, we still have some TTC extensions, e.g., segmented trading cycles \citep{papai2003} and generalized trading cycles \citep{tamura2021}. 
We discuss the relation between our characterization results with characterizations of these TTC extensions. 
Subsequently, in Subsection~\ref{discussion3}, we discuss the external validity of our efficiency properties and their compatibilities with \textsl{individual rationality} and \textsl{strategy-proofness} in these general environments.

\subsection{Our contributions}
Our results contribute to the following strands of literature.\medskip

\noindent\textbf{Objects allocation problems with multi-unit demands.}  
Assignments of scarce resources are attracting a lot of attention in economics design.\footnote{See the Nobel prize lectures in economic sciences in 2012 and 2020 for examples.}
Many studies, both theoretical and empirical, focus on indivisible resources, e.g., in auctions \citep{myerson1981,hortaccsu2010}, school choice \citep{abdulkadirouglu2003,kapor2020},  
and medical resource allocation \citep{pathak2021}. 
Existing literature has often investigated such indivisible resource allocation problems with ``unit-demand,'' in the sense that each agent only consumes one object.
Although in some important cases (e.g.,
kidney exchange, school choice, etc.), unit-demand is an appropriate assumption, in many other cases, agents may receive more than one object. Many studies analyze such situations \citep{papai2001,papai2007,anno2016,kazumura2020,manjunath2021,biro2022,biro2021,echenique2022}.
We contribute to this strand of literature by studying the multiple-type housing markets model, which allows agents to consume one object of each type. We believe that the analysis of multiple-type housing markets is relevant since this model is applicable to many problems in reality: a familiar example for most readers would be students' enrollments at many universities, where courses are taught in small groups and multiple sessions \citep{klaus2008}. Additionally, various other scenarios exist where individuals may wish to exchange their assigned resources, including term paper topics and dates during a course \citep{mackin2016}, staff, operating rooms, and dates in hospitals to improve surgery schedules \citep{huh2013}, and shifts in work settings due to personal reasons \citep{manjunath2021}. Furthermore, recent technological developments have enabled the allocation of several resource types together, such as cloud computing \citep{ghodsi2011,ghodsi2012} and 5G network slicing \citep{peng2015recent,bag2019,han2019}. Such situations can be modeled as multiple-type housing markets. Thus, the analysis of multiple-type housing markets may impact the real world.
Moreover, the analysis of multiple-type housing markets, as the first step, is potentially useful for addressing issues for other multi-unit demands models as mentioned above. \medskip

\noindent\textbf{Strategic robustness and efficiency.} 
In mechanism design, the characterization of strategically robust and efficient mechanisms is an important issue \citep{holmstrom1979,barbera1997,moulin1980,ma1994,pycia2017,shinozaki2023}. Standard properties of strategic robustness and efficiency are \textsl{strategy-proofness} and \textsl{Pareto efficiency}, respectively. However, when agents consume more than one object, the combination of \textsl{strategy-proofness} and \textsl{Pareto efficiency} essentially results in serially dictatorial mechanisms \citep{klaus2002,monte2015}. However, such mechanisms ignore property rights driven by the endowments, which violates \textsl{individual rationality}. In other words, in the presence of \textsl{individual rationality}, \textsl{strategy-proofness} and \textsl{Pareto efficiency} are incompatible. This impossibility leads to a common trade-off between strategic robustness and efficiency in the literature \citep{arrow1950,satterthwaite1975,myerson1983,Alva2020}. 
We want to keep \textsl{individual rationality} since it induces voluntary participation. That is, agents may lose interest in participating in a \textsl{individually irrational} mechanism. 
To ensure strategic robustness, it is important for the social planner to have knowledge of agents' true preferences in order to efficiently allocate resources. For example, if agents report some fake preferences, then, some allocations that are efficient with respect to reported preferences, may not be efficient for their true preferences. 
To avoid such situations, a mechanism that is both \textsl{individually rational} and \textsl{strategy-proof} should be used.\footnote{One may argue that perhaps we could follow another approach: maintain \textsl{individual rationality} and \textsl{Pareto efficiency}, while disregarding \textsl{strategy-proofness}. However, this approach also proves to be challenging. To illustrate this point, suppose that two agents want to efficiently reallocate~/~trade their endowments by themselves in a decentralized manner, rather than participating in a centralized mechanism. However, to obtain a better individual outcome, agents might have incentives to untruthfully share their private information (which represents their preferences, in our model). Consequently, after their negotiation, the resulting outcome may not be efficient with respect to their true preferences. In fact, it could be worse than the allocation they would have received if they had attended the mechanism.} 
Therefore, we search for a plausible mechanism by relaxing \textsl{Pareto efficiency} while keeping \textsl{individual rationality} and \textsl{strategy-proofness}.
There are various weakened efficiency properties that could be used, and some of them are satisfied by several mechanisms \citep{klaus2008,anno2016,feng2022}. However, \textsl{coordinatewise efficiency} and \textsl{pairwise efficiency}, are unique in that (i)  they are compatible with \textsl{individual rationality} and \textsl{strategy-proofness}, and (ii) they are only satisfied by two specific mechanisms, respectively. 
Moreover, by considering some other efficiency properties that are derived from
\textsl{coordinatewise efficiency} and \textsl{pairwise efficiency}, we discuss how to find a reasonable efficiency that is compatible with \textsl{individual rationality} and \textsl{strategy-proofness}.  
In this respect, our results are closely related to \citet{papai2007}, \citet{klaus2008}, \citet{anno2016}, \citet{nesterov2017}, \citet{manjunath2021}, \citet{shinozaki2022}, and \citet{shinozaki2022sp}.\medskip

\noindent\textbf{TTC based mechanisms.} The top-trading-cycles (TTC) algorithm (due to David Gale) is commonly used for object allocation problems with unit demand. In particular, as we mentioned earlier, for Shapley-Scarf housing markets with strict preferences, only TTC satisfies \textsl{individual rationality}, \textsl{strategy-proofness}, and \textsl{Pareto efficiency}. Thus, one could conjecture that for multiple-type housing markets, some extensions of TTC would still satisfy some desirable, although perhaps not all of the three properties. We confirm this conjecture by proving several characterizations of two TTC extensions: cTTC and bTTC. Our characterizations successfully extend characterizations of TTC from one dimensional Shapley-Scarf housing markets to higher dimensional multiple-type housing markets. 
Additionally, our methodology for obtaining these results is new and may be useful for other studies seeking more efficient ways to analyze higher-dimensional models.
Moreover, such characterizations give strong support for the use of TTC extensions. 
We provide a detailed discussion in Section~\ref{discussion}.
\medskip

\noindent\textbf{Complementary preferences.} As we mentioned earlier, our characterization for bTTC is also valid for strict preferences. 
The analysis on the domain of strict preference is demanding, because it allows agents' preferences to exhibit complementarity. Thus, our results also contribute to the literature on allocation problems with complements \citep{sun2006,che2019,rostek2020,jagadeesan2021,huang2023}.
\medskip

\noindent\textbf{The role of \textsl{non-bossiness}.}
A property of mechanisms that has played a significant role in the development of the axiomatic framework of allocation problems is the so-called \textsl{non-bossiness} \citep{satterthwaite1981}. 
It says that if a change in an agent's preferences does not affect his own allotment, it should not result in a change in anyone else's allotment. It is used to characterize various mechanisms in different contexts.\footnote{See \citet{thomson2016} for an excellent survey. } 
Mainly, \textsl{non-bossiness} is used as a complement to \textsl{strategy-proofness}, as their combination eqauls to \textsl{group strategy-proofness}. However, this is not the case in our model. By comparing our results with previous findings, we explicitly highlight the role of \textsl{non-bossiness} from both a normative standpoint and its mathematical significance in the proof.

\subsection{Organization}
Our paper is organized as follows. In the following section, Section~\ref{sec:model}, we introduce multiple-type housing markets, mechanisms and their properties, and two TTC extensions: bTTC and cTTC. 
We state our results in Section~\ref{sec:results}.

\begin{itemize}
    \item In subsection~\ref{subs:Coordinatewise efficiency}, we show that for separable preferences, a mechanism is \textsl{individually rational}, \textsl{strategy-proof}, and \textsl{coordinatewise efficient} if and only if it is cTTC (Theorem~\ref{thm:cTTC}).However, for strict preferences, these three properties are incompatible (Theorem~\ref{thm:imposs}). Additionally, to emphasize the novelty of our results, we provide two remarks (Remakrs~
\ref{remark:secondbest}~and~\ref{remark:ir}) to discuss the differences between ours and previous result (Result~\ref{result1}). 

    \item In subsection~\ref{subs:Pairwise efficiency}, we show that for separable preferences and for strict preferences, a mechanism is \textsl{individually rational}, \textsl{group strategy-proof}, and \textsl{pairwise efficient} if and only if it is bTTC (Theorems~\ref{thm:bttc} and \ref{thm:bttc2}). Furthermore, to distinguish our results from previous result (Result~\ref{result3}), we provide a remark (Remark~\ref{remark:nb}) to discuss why \textsl{non-bossiness} is also necessary for our characterization.

    \item In subsection~\ref{subs:other efficiency}, we consider other efficiency properties that are derived from \textsl{coordinatewise efficiency} and \textsl{pairwise efficiency}. By providing several characterizations and impossibilities (Theorems~\ref{thm:pce}, \ref{thm:coale}, and \ref{thm:mtpe}), we discuss how to obtain a (constrained) efficiency property that is compatible with \textsl{individual rationality} and \textsl{strategy-proofness}.
\end{itemize}

In Section~\ref{discussion}, we provide a discussion of our results and how they relate to the literature. Section~\ref{sec:conclusion} concludes.
In Appendix~\ref{section:Appendix1}, we provide the proofs of our results that are not included in the main text. In Appendix~\ref{section:Appendix2}, we provide several examples to establish the logical independence of the properties in our characterizations.

\section{Preliminaries}\label{sec:model}
\subsection{Multiple-type housing markets}
We consider a barter economy without monetary transfers formed by $n$ agents and $n\times m$ indivisible objects. Let $N=\{1,\ldots,n\}$ be a finite \textit{set of agents}, where $n\geq 2$. 
A nonempty subset of agents $S\subseteq N$ is a \textit{coalition}. There exist $m\geq 1 $ \textit{(distinct) types of indivisible objects} and $n$ \textit{(distinct) indivisible objects of each type}. We denote the \textit{set of object types} by $T=\{1,...,m\}$. For each $t\in T$, the set of type-$t$ objects is $O^t=\{o^t_i\}_{i\in N}$, and the \textit{set of all objects} is $O=\{O^t\}_{t\in T}$, where $|O|= n\times m$. Each agent owns exactly one object of each type. Without loss of generality, let  $o_i^t$ be agent $i$'s endowment of type-$t$. Thus, each agent $i$'s endowment is a list
$e_i=(o_i^1,\ldots,o_i^m)$. Moreover, each agent exactly consumes one object of each type, and hence, each agent's \textit{(feasible) consumption set} is $\Pi_{t\in T} O^t$. An element in $\Pi_{t\in T} O^t$ is a \textit{(consumption) bundle}. If $m=1$, then our model is the classical Shapley-Scarf housing market model \citep{shapley1974}. \medskip

An \textit{allocation $x$} partitions the set of all objects $O$ into $n$ bundles assigned to agents. Formally, $x=\{x_1,\ldots,x_n\}$ is such that for each $t\in T$, $\bigcup_{i\in N}x_i^t=O^t$ and for each pair $i\neq j$, $x_i^t\neq x_j^t$. The \textit{set of all allocations} is denoted by $X$, and the \textit{endowment allocation} is denoted by $e=\{e_1,\ldots,e_n\}$. Given an allocation $x\in X$, for each agent $i\in N$, we say that $x_i$ is agent $i$'s \textit{allotment} at $x$ and for each type $t\in T$, $x_i^t$ is agent $i$'s \textit{type-$t$ allotment} at $x$. For simplicity, sometimes we will restate an allocation as a list $x=(x_1,\ldots,x_n)\in (\Pi_{t\in T} O^t)^N$. Given $x$, let $x^t=(x_1^t,\ldots,x_n^t)$ be the allocation of type-$t$; $x_{-i}=(x_1,\ldots,x_{i-1},x_{i+1},\ldots,x_n)$ be the list of all agents' allotments, except for agent $i$'s allotment; and $x_S=(x_i)_{i\in S}$ to be the list of allotments of the members of coalition $S$.\medskip

Each agent $i$ has \textit{complete}, \textit{antisymmetric}, and \textit{transitive} \textit{preferences $R_i$} over all bundles (allotments), i.e., $R_i$ is a linear order over $\Pi_{t\in T}O^t$.\footnote{Preferences $R_i$ are \textit{complete} if for any two allotments $x_i,y_i$, $x_i\mathbin{R_i} y_i$ or $y_i\mathbin{R_i} x_i$; they are \textit{antisymmetric} if $x_i\mathbin{R_i} y_i$ and $y_i\mathbin{R_i} x_i$ imply $x_i=y_i$; and they are \textit{transitive} if for any three allotments $x_i,y_i,z_i$, $x_i\mathbin{R_i} y_i$ and $y_i\mathbin{R_i} z_i$ imply $x_i\mathbin{R_i} z_i$.} 
For two allotments $x_i$ and $y_i$, $x_i$ is \textit{weakly better than} $y_i$ if $x_i\mathbin{R_i} y_i$, and $x_i$ is \textit{strictly better than} $y_i$ if [$x_i \mathbin{R_i} y_i$ and not $y_i \mathbin{R_i} x_i$], denoted $x_i \mathbin{P_i} y_i$. Finally, since preferences over allotments are strict, $x_i$ is indifferent to $y_i$ only if $x_i=y_i$. We denote preferences as ordered lists, e.g., $R_i: x_i,\ y_i,\ z_i$ instead of $x_i\mathbin{P_i}y_i\mathbin{P_i}z_i$. The \textit{set of all preferences} is denoted by $\mathcal{R}$, which we will also refer to as the \textit{strict preference domain}.
Throughout the paper, we focus on strict preferences.
\medskip


A \textit{preference profile} is a list $R=(R_1,\ldots,R_n)\in\mathcal{R}^N$. We use the standard notation $R_{-i}=(R_1,\ldots,R_{i-1},R_{i+1},\ldots,R_n)$ to denote the list of all agents' preferences, except for agent $i$'s preferences. For each coalition $S\subseteq N$ we define $R_S=(R_i)_{i \in S}$ and $R_{-S}=(R_i)_{i\in N\setminus S}$ to be the lists of preferences of the members of $S$ and $N\setminus S$, respectively.\medskip

In addition to the domain of strict preferences, we consider a preference subdomains based on agents' ``marginal preferences'': assume that for each agent $i\in N$ and for each type $t\in T$, $i$ has complete, antisymmetric, and transitive preferences $R_i^t$ over the set of type-$t$ objects $O^t$. We refer to $R_i^t$ as \textit{agent $i$'s type-$t$ marginal preferences},  and denote by $\mathcal{R}^t$ the \textit{set of all type-$t$ marginal preferences}. Then, we can define the following preference domain.\medskip

\noindent\textbf{(Strictly) Separable preferences.}
Agent~$i$'s preferences $R_i\in\mathcal{R}$ are \textit{separable} if for each $t\in T$ there exist type-$t$ marginal preferences $R_i^t\in \mathcal{R}^t$ such that for any two allotments $x_i$ and $y_i$, $$\mbox{if for all } t\in T,\ x_i^t\mathbin{R_i^t} y_i^t,\mbox{ then }x_i \mathbin{R_i} y_i.$$
$\mathcal{R}_s$ denotes the \textit{domain of separable preferences}.
\medskip

We use the standard notation $R^t=(R^t_1,\ldots,R^t_n)$ to denote the list of all agents' marginal preferences of type-$t$, and $R^{-t}=(R^1,\ldots,R^{t-1},R^{t+1},\ldots,R^m)$ to denote the list of all agents' marginal preferences of all types except for type-$t$.\medskip

A \textit{(multiple-type housing) market} is a triple $(N,e,R)$.
When no confusion is possible about the set of agents $N$ and the endowment allocation $e$, we denote market $(N,e,R)$ by $R$.
Thus, the strict preference profile domain $\mathcal{R}^N$ also denotes the set of all markets with strict preferences. Similarly, $\mathcal{R}_s^N$ is also the set of all markets with separable preferences.

\subsection{Mechanisms and their properties}
Note that all following definitions for the set of markets with strict preferences can be formulated for markets with separable preferences. \medskip

A \textit{mechanism} is a function $f:\mathcal{R}^N\to X$ that selects for each market $R$ an allocation $f(R)\in X$, and 
\begin{itemize}
\item for each $i\in N$, $f_i(R)$ denotes agent $i$'s allotment
\item for each $i\in N$ and each $t\in T$, $f^t_i(R)$ denotes agent $i$'s type-$t$ allotment. Moreover, $f^t(R)$ denotes the allocation of type-$t$, i.e., $f^t(R)=(f^t_1(R),\ldots,f^t_n(R))$.
\end{itemize}
\medskip

We next introduce and discuss some well-known properties for allocations and mechanisms.  \medskip

First we consider a voluntary participation condition for an allocation $x$ to be implementable without causing agents any harm: no agent will be worse off than at his endowment.
Let $R\in \mathcal{R}^N$. An allocation $x\in X$ is \textit{individually rational} at $R$ if for each agent $i\in N$, $x_i \mathbin{R_i}e_i$.\medskip

\noindent \textbf{Individual rationality}: For each $R\in \mathcal{R}^N$, $f(R)$ is 
\textsl{individually rational} at $R$.\medskip

Next, we consider two well-known efficiency criteria. Let $R\in \mathcal{R}^N$. An allocation $y\in X$ is a \textit{Pareto improvement} over allocation $x\in X$ at $R$ if for each agent $i\in N$, $y_i \mathbin{R_i} x_i$, and for at least one agent $j\in N$, $y_j \mathbin{P_j} x_j$. An allocation is \textit{Pareto efficient} at $R$ if there is no \textsl{Pareto improvement}.\medskip 

\noindent \textbf{Pareto efficiency}: For each $R\in \mathcal{R}^N$, $f(R)$ is 
\textsl{Pareto efficient} at $R$.\medskip

An allocation $x\in X$ is \textit{unanimously best} at $R$ if for each agent $i\in N$ and each allocation $y\in X$, we have $x\mathbin{R_i} y$.\footnote{Since all preferences are strict, the set of unanimously best allocations is empty or single-valued.} \medskip

\noindent \textbf{Unanimity}: For each $R\in \mathcal{R}^N$, $f(R)$ is \textsl{unanimously best} whenever it exists.\medskip

If a \textsl{unanimously best} allocation exists at $R$, then that allocation is the only \textsl{Pareto efficient} allocation at $R$. Hence, \textsl{Pareto efficiency} implies \textsl{unanimity}.\medskip

The next two properties, \textsl{strategy-proofness} and \textsl{group strategy-proofness}, are two of the incentive properties that are most frequently used in the literature on mechanism design. They model that no agent~/~coalition can benefit from misrepresenting his~/~their preferences.	\medskip

\noindent \textbf{Strategy-proofness}: For each $R\in \mathcal{R}^N$, each $i\in N$, and each $R'_i\in \mathcal{R}$, $f_i(   R_i,R_{-i} ) \mathbin{R_i} f_i(   R'_i,R_{-i})$.\medskip

\noindent \textbf{Group strategy-proofness}: For each $R\in \mathcal{R}^N$, there are no coalition $S\subseteq N$ and no preference list $R'_S=(R'_i)_{i\in S}\in \mathcal{R}^S$ such that for each $i\in S$, $f_i(R'_S,R_{-S}) \mathbin{R_i} f_i(R)$, and for some $j\in S$, $f_j(R'_S,R_{-S}) \mathbin{P_j} f_j(R)$.\medskip

Next, we consider a well-known property for mechanisms that restricts each agent's influence: no agent can change other agents' allotments without changing his own allotment by changing his reported preference.\medskip

\noindent \textbf{Non-bossiness}: For each $R\in \mathcal{R}^N$, each $i\in N$, and each $R'_i\in \mathcal{R}$, $f_i(R_i,R_{-i}) =f_i(R'_i,R_{-i})$ implies $f(R_i,R_{-i}) =f(R'_i,R_{-i})$.
\medskip

By the definition, it is easy to verify that \textsl{group strategy-proofness} implies \textsl{strategy-proofness} and \textsl{non-bossiness}. However, the  converse is not true. To be more precise, (i) for strict preferences, \textsl{group strategy-proofness} coincides with \textsl{strategy-proofness} and \textsl{non-bossiness} \citep{alva2017}, but (ii) for separable preferences, the former is stronger than the latter \citep{feng2022}.\footnote{Also see Example~\ref{example:notPE} for details.}

\subsection{TTC extensions}\label{subsection:ttc}

We next focus on the domain of separable preferences $\mathcal{R}_s$ and extend Gale's famous top trading cycles (TTC) algorithm to multiple-type housing markets.

\noindent \textbf{The type-\textit{t} TTC algorithm} 

    Consider a market $(N,e,R)$ such that $R\in\mathcal{R}_s$. For each type $t\in T$, let $(N,e^t,R^t)= \newline
    (N,(o^t_1,\ldots,o_n^t),(R_1^t,\ldots,R_n^t))$ be its \textit{associated type-$t$ submarket}.\smallskip
    
    We define the top trading cycles (TTC) allocation for each type-$t$ submarket as follows.\smallskip
    
    \noindent \textbf{Input.} A type-$t$ submarket $(N,e^t,R^t)$.\smallskip

    \noindent\textbf{Step~$k(\geq 1)$.} Each agent points to his most-preferred remaining object given $R^t$. Each remaining object points to its owner. There exists at least one \textit{trading cycle}. \textit{Execute} all cycles by assigning each agent involved in a cycle the object to which he points. Remove all agents and objects involved in a cycle. If some agents or objects remain, then proceed to step~$k+1$. \smallskip

    \noindent \textbf{Output.} The type-$t$ TTC algorithm terminates when each agent in $N$ is assigned an object in $O^t$, which takes at most $n$ steps. We denote the object in $O^t$ that agent $i\in N$ obtains in the type-$t$ TTC algorithm by $TTC_i^t(e^t,R^t)$ and the final type-$t$ allocation by $TTC^t(e^t,R^t)$.

\noindent \textbf{The cTTC allocation~/~mechanism}: The \textit{coordinatewise top trading cycles (cTTC) allocation}, $cTTC(R)$, is the collection of all type-$t$ TTC allocations, i.e., for each $R\in \mathcal{R}_s^N$,
    \[cTTC(R)=\big(\left(TTC^1_1(R^1),\ldots, TTC^m_1(R^m)\right),\ldots,\left(TTC^1_n(R^1),\ldots, TTC^m_n(R^m)\right)\big).\]
    The $cTTC$ mechanism \citep[introduced by][]{wako2005} selects each market its cTTC allocation.\medskip

Next, we consider another TTC extension, which only allows agents to trade their endowments completely.

\noindent \textbf{The bundle top trading cycles (bTTC) algorithm / mechanism} 
    The \textit{bundle top trading cycles mechanism (bTTC)} assigns to each market $R$ the unique top-trading allocation that results from the TTC algorithm if agents are only allowed to trade their whole endowments among each other.
    
    Formally, for each market $R$ and $i\in N$, let $R_i|^e$ be the restriction\footnote{That is, for each $i\in N$, $R_i|^{e}$ are preferences over $\{e_1,\ldots,e_n\}$ such that for each $e_j,e_k\in \{e_1,\ldots,e_n\}$, $e_j \mathbin{R_i|^{e}}e_k  $ if and only if $e_j \mathbin{R_i} e_k$.} 
    of $R_i$ to endowments $\{e_1,\ldots,e_n\}$ and $R |^e\equiv (R_i| ^e)_{i\in N}$ be the restriction profile. 
    We then use the TTC algorithm to compute the bTTC allocation for $R|^e$. Note that the difference with the classical TTC algorithm (for Shapley-Scarf housing markets) is that instead of an object, each agent can only point to a whole endowment. The $bTTC$ mechanism assigns the bTTC allocation above to each market.

    \begin{remark} {\textbf{Comparison of cTTC and bTTC}}\ \\
\noindent i) Generalizability: bTTC is well-defined on the domain of strict preference profiles as well as separable preference profiles. Meanwhile, cTTC is only well-defined for separable preferences. Also note that for $m=1$, cTTC and bTTC reduce to TTC.\smallskip 

\noindent ii) Strategic robustness: one can verify that cTTC and bTTC inherit \textsl{strategy-proofness} and \textsl{non-bossiness} from the underlying top trading cycles algorithm. However, bTTC is also \textsl{group strategy-proof} \citep{fkkgsp} while cTTC is not \citep{feng2022}.\smallskip 

\noindent iii) Efficiency: it is easy to see that cTTC is more flexible than bTTC in terms of trading, however, the example below shows that neither of them can Pareto improve the other.
\hfill~$\diamond$ \label{remark:domain}
    \end{remark}

    \begin{example}
        Consider a market with two agents and two types, i.e., $N=\{1,2\}$, $T=\{H(ouse),C(ar)\}$, $O=\{H_1,H_2,C_1,C_2\}$, and for each $i\in N$, $e_i=(H_i,C_i)$. $R\in \mathcal{R}_s^N$ is as follows:
       $$R_1^H:H_2,\bm{H_1}; R_1^C:\bm{C_1},C_2;\text{ and }R_2^H:\bm{H_2},H_1; R_2^C:C_1,\bm{C_2},$$ 
       $$R_1^e:e_2,\bm{e_1};\text{ and }R_2^e:e_1,\bm{e_2}.$$
       
       Thus, agent $1$ would like to trade houses but not cars and agent $2$ would like to trade cars but not houses. One easily verifies that $cTTC(R)=e=(e_1,e_2)=( (H_1,C_1),(H_2,C_2) )$ and $bTTC(R)=x=(e_2,e_1)=( (H_2,C_2),(H_1,C_1) )$. Since $e_2\mathbin{R_1}e_1$ and $e_1\mathbin{R_2}e_2$, both agents prefer the $bTTC$ allocation to the $cTTC$ allocation at $R$.
Furthermore, consider $\hat{R}\in \mathcal{R}_s^N$ as follows: 
       $$\hat{R}_1^H:\bm{H_1},H_2;\hat{R}_1^C:C_2,\bm{C_1};\text{ and }\hat{R}_2^H:\bm{H_2},H_1; \hat{R}_2^C:C_1,\bm{C_2},$$ 
       $$\hat{R}_1^e:\bm{e_1},e_2;\text{ and }\hat{R}_2^e:\bm{e_2},e_1.$$
       
       Thus, both agents would like to trade cars but not houses.
       One easily verifies that $cTTC(\hat{R})=y=(  (H_1,C_2),(H_2,C_1)  )$, $bTTC(\hat{R})=e$, and both agents prefer the $cTTC$ allocation to the $bTTC$ allocation at $\hat{R}$.
       
Finally, we additionally show that cTTC is not \textsl{group strategy-proof}.
Assume that both agents misreport their preferences as follows:
       $$\bar{R}_1^H:H_2,\bm{H_1}; \bar{R}_1^C:C_2,\bm{C_1};\text{ and }\bar{R}_2^H:H_1,\bm{H_2}; \bar{R}_2^C:C_1,\bm{C_2}.$$ 

       Then, $cTTC(\bar{R})=x$, making both agents better off compared to $cTTC(R)=e$. It means that agents $1$ and $2$ have incentives to jointly misreport $\bar{R}$ at $R$.
       \hfill~$\diamond$ \label{example:notPE}
       \end{example}

\section{Results}\label{sec:results}
As mentioned before, for $|T|=m=1$ our model is the classical Shapley-Scarf housing markets model. The Shapley-Scarf housing markets (with strict preferences) results that are pertinent for our analysis of multiple-type housing markets are the following.
\begin{result}[A positive result]\ \\
For Shapley-Scarf housing markets, only TTC is \textsl{individually rational}, \textsl{strategy-proof}, and \textsl{Pareto efficient}  \citep{ma1994,svensson1999}. \label{result1}
\end{result}

However, Result~\ref{result1} is not valid for multiple-type housing markets.

\begin{result}[Two impossibilities]\ \\
    For multiple-type housing markets 
\begin{itemize}
    \item[(a)] with separable preferences, no mechanism is \textsl{individually rational}, \textsl{strategy-proof}, and \newline\textsl{Pareto efficient} \citep{konishi2001}.
    \item[(b)] with strict preferences, no mechanism is \textsl{individually rational}, \textsl{strategy-proof}, and\newline \textsl{unanimous} \citep{feng2022}.
\end{itemize}
\label{result2}
\end{result}

Given this incompatibility, we explore relaxations of \textsl{Pareto efficiency} by restricting the set of efficiency improvements that agents might employ. In particular, we focus on ``coordinatewise improvement'' and ``pairwise improvement.''

The motivation for considering such restrictions on improvements is as follows. In object (re)allocation problems, if an allocation is not \textsl{Pareto efficient}, it admits an efficiency improvement. Therefore, after its implementation, agents could trade their allotments and destabilize the allocation by executing that improvement. However, when agents own more than one object, such destabilizations are generally complex and difficult to coordinate, potentially involving intricate trades of multiple objects among many agents. Indeed, doing so is computationally hard even when agents' preferences are additive (a subdomain of separable preferences) \citep{de2009complexity}. Since computational complexity is a suitable proxy for the plausibility of agents' behavior, these negative results suggest that full \textsl{Pareto efficiency} is unnecessarily strong.

Therefore, we consider two efficiency properties that are substantially weaker than \textsl{Pareto efficiency}. These properties rule out destabilizing efficiency improvements that can be easily coordinated due to their relatively simple structures.


\subsection{Coordinatewise efficiency} \label{subs:Coordinatewise efficiency}

Here, borrowing the clever idea from \citet{biro2022},
we consider a natural modification of \textsl{Pareto efficiency} for multiple-type housing markets, \textsl{coordinatewise efficiency}, which rules out Pareto-improvements upon the allocation by executing some ``single-object exchange.'' That is, an exchange involving a group of agents each of whom relinquishes one object and receives one object. 
Moreover, note that since in our model, each agent can receive only one object of each type, this restriction on Pareto improvements equivalently rules out improvements within a single type.

Let $R\in \mathcal{R}^N$. An allocation $y\in X$ is a \textit{coordinatewise improvement} of allocation $x\in X$ at $R$ if (i) $y$ is a \textsl{Pareto improvement} of $x$, and (ii) $y$ and $x$ only differ in one type $t\in T$, i.e., $y^t\neq x^t$ and for each $\tau\in T\setminus\{ t\}$, $y^\tau=x^\tau$. An allocation is \textit{coordinatewise efficient} at $R$ if there is no \textsl{coordinatewise improvement}. \medskip

\noindent \textbf{Coordinatewise efficiency}: For each $R\in \mathcal{R}^N$, $f(R)$ is \textsl{coordinatewise efficient} at $R$.\medskip

One easily verifies that \textsl{Pareto efficiency} implies \textsl{coordinatewise efficiency}.


\begin{remark}{\textbf{Coordinatewise efficiency for separable preferences and strict preferences}}\ \\
 For a multiple-type housing market with separable preferences, \textsl{coordinatewise efficiency} simply means that the selected allocation of each type is \textsl{Pareto-efficient} for agents' marginal preferences for the type. Formally, $f:\mathcal{R}^N_s\to X$ is \textit{coordinatewise efficient} if for each $R\in \mathcal{R}^N_s$ and each $t\in T$, $f^t(R)$ is \textsl{Pareto efficient} at $R^t$. Also, one easily verifies that for multiple-type housing markets with separable preferences, \textsl{coordinatewise efficiency} implies \textsl{unanimity}; however, it is not true for strict preferences.\footnote{We thank an anonymous referee for pointing this out, and the proof is in Appendix~\ref{footnoteceexample}.\label{footnotece}}

 Moreover, since TTC is \textsl{Pareto efficient} for Shapley-Scarf housing markets, it is easy to see that for each $R\in \mathcal{R}^N_s$ and each $t\in T$, $cTTC^t(R)$ is \textsl{Pareto efficient} at $R^t$, and hence cTTC is \textsl{coordinatewise efficient}. \hfill~$\diamond$ \label{remark:cef}
\end{remark}

We first characterize cTTC for separable preferences with respect to \textsl{coordinatewise efficiency}.

\begin{theorem}
    For multiple-type housing markets with separable preferences, only cTTC satisfies 
    \begin{itemize}
        \item \textsl{individual rationality},
        \item \textsl{strategy-proofness}, and
        \item \textsl{coordinatewise efficiency}. 
    \end{itemize}
 \label{thm:cTTC}
\end{theorem}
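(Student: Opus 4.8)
The plan is to split the statement into the (easy) sufficiency direction and the (substantive) necessity direction, and to base the latter on a \emph{decomposition lemma}: every mechanism with the three properties acts type by type, so that it is forced to coincide with $cTTC$ via the classical Shapley--Scarf characterization (Result~\ref{result1}).

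\textbf{Sufficiency.} By Remark~\ref{remark:cef}, $cTTC$ is \textsl{coordinatewise efficient}. Each of its type-$t$ coordinates is the Shapley--Scarf outcome $TTC^t(R^t)$, which is \textsl{individually rational} and \textsl{strategy-proof}; since by Remark~\ref{remark:cef} a separable preference enters $cTTC$ only through its marginals $R_i^1,\dots,R_i^m$, each feeding its own $TTC$ component, \textsl{individual rationality} is immediate, and \textsl{strategy-proofness} follows by applying \textsl{strategy-proofness} of $TTC$ coordinate by coordinate and then aggregating the resulting weak coordinatewise comparisons via separability.

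\textbf{Necessity, Step~1 (decomposition).} Let $f$ satisfy the three properties; the claim is that $f^t(R)$ depends only on $R^t$ for each $t$. By Remark~\ref{remark:cef}, \textsl{coordinatewise efficiency} says precisely that $f^t(R)$ is \textsl{Pareto efficient} in the Shapley--Scarf submarket $(N,e^t,R^t)$. It suffices to show that whenever $R_i,R_i'$ are separable with $R_i^t=R_i'^t$, one has $f^t(R_i,R_{-i})=f^t(R_i',R_{-i})$ --- iterating this over agents (each time keeping that agent's type-$t$ marginal fixed) then connects any two profiles sharing the same $R^t$. Write $a=f_i(R_i,R_{-i})$ and $b=f_i(R_i',R_{-i})$: \textsl{strategy-proofness} in both directions gives $a\mathbin{R_i}b$ and $b\mathbin{R_i'}a$, and \textsl{coordinatewise efficiency} makes $a^t$ and $b^t$ both \textsl{Pareto efficient} for the \emph{same} submarket $(N,e^t,(R_i^t,R_{-i}^t))$. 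To upgrade this to $a^t=b^t$ I would route the comparison through a \emph{lexicographic} separable report with type~$t$ as the highest-priority coordinate and marginal $R_i^t$: for such a report \textsl{individual rationality} forces the type-$t$ allotment to be $R_i^t$-acceptable, and, more importantly, Step~2 below identifies $f$'s type-$t$ behavior on such reports with a Shapley--Scarf mechanism, which pins the type-$t$ allotment down exactly; together with the fact that $a^t$ and $b^t$ are \textsl{Pareto efficient} for the same submarket this yields $a^t=b^t$. The final passage from $a^t=b^t$ to equality of the \emph{whole} type-$t$ allocation uses a non-bossiness-type property --- a unilateral switch that leaves an agent's allotment fixed leaves the allocation fixed --- which can be derived from \textsl{individual rationality}, \textsl{strategy-proofness} and \textsl{coordinatewise efficiency} rather than being assumed.

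\textbf{Necessity, Steps~2--3 (reduction to Shapley--Scarf, and conclusion).} Given Step~1, for each $t$ define $h^t$ on Shapley--Scarf type-$t$ markets by $h^t(P):=f^t(R)$ for any separable $R$ with $R^t=P$; this is well defined by decomposition. Evaluating on the lexicographic separable profile with $R^t=P$ and type~$t$ highest, one reads off the three Shapley--Scarf properties: \textsl{individual rationality} of $f$ plus ``type~$t$ first'' gives $h_i^t(P)\mathbin{R_i^t}o_i^t$, so $h^t$ is \textsl{individually rational}; \textsl{coordinatewise efficiency} is exactly \textsl{Pareto efficiency} of $h^t(P)$ at $P$; and when only agent $i$'s type-$t$ marginal changes (keeping the other marginals and the ``type~$t$ first'' form), \textsl{strategy-proofness} of $f$ delivers a bundle comparison that the ``type~$t$ first'' order converts into $h_i^t(P)\mathbin{R_i^t}h_i^t(P_i',P_{-i})$, whence $h^t$ is \textsl{strategy-proof}. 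By Result~\ref{result1}, $h^t=TTC^t$ for every $t$, hence $f^t(R)=h^t(R^t)=TTC^t(R^t)$ for all $t$ and $R$, i.e.\ $f=cTTC$. The main obstacle is Step~1: \textsl{strategy-proofness} alone does not force $f^t(R)$ to depend only on $R^t$ (an agent can hide a type-$t$ rearrangement behind a compensating move in another type), so \textsl{coordinatewise efficiency} and \textsl{individual rationality} must genuinely be used; and even after the deviating agent's own type-$t$ allotment is pinned down, one must still rule out the non-deviating agents' type-$t$ allotments silently reshuffling within the Pareto-efficient face, which is exactly where the derived non-bossiness-type argument is needed.
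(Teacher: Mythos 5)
Your sufficiency argument and your idea of reading off Result~\ref{result1} on ``type-$t$-first'' lexicographic slices are fine: fixing an importance order with $t$ first and fixing the other marginals, the map $P\mapsto f^t(R(P))$ is a Shapley--Scarf mechanism that inherits \textsl{individual rationality}, \textsl{strategy-proofness} and \textsl{Pareto efficiency}, so Ma's theorem forces it to be $TTC^t$ on that slice. The genuine gap is Step~1, the decomposition lemma, which is where all the difficulty of the theorem lives. As written, the argument is circular: you justify $a^t=b^t$ by appealing to Step~2's identification of $h^t$ with a Shapley--Scarf mechanism, but $h^t$ is only well defined once Step~1 is established. Even if one breaks the circle by first proving the $TTC$ identity on fully lexicographic slices, the passage from a lexicographic type-$t$-first report to a general separable report with the same type-$t$ marginal is exactly what remains unproved. \textsl{Strategy-proofness} only yields a one-directional coordinate inequality here: if $\tilde R_i$ is lexicographic with type $t$ first, then $f_i(\tilde R_i,R_{-i})\mathbin{\tilde R_i}f_i(R_i,R_{-i})$ does give $f_i^t(\tilde R_i,R_{-i})\mathbin{R_i^t}f_i^t(R_i,R_{-i})$, but the reverse comparison $f_i(R_i,R_{-i})\mathbin{R_i}f_i(\tilde R_i,R_{-i})$ does not decompose coordinatewise for a merely separable $R_i$, so the type-$t$ allotment is not pinned down. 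Finally, your plan to rule out the other agents' type-$t$ allotments reshuffling rests on a ``non-bossiness-type property \ldots derived from \textsl{individual rationality}, \textsl{strategy-proofness} and \textsl{coordinatewise efficiency}''; this is asserted, not proved, and it is far from obvious (note that the paper's own proof of this theorem never derives or uses \textsl{non-bossiness}).

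For comparison, the paper avoids a decomposition lemma altogether. It first pins $f$ down on the restricted domain where all agents share one importance order, by an induction on the size of top trading cycles, type by type in the order of importance (including a separate ``coordinatewise individual rationality'' argument for the second type, which is needed precisely because putting type $2$ first changes the slice and \textsl{strategy-proofness} is weak across importance-order changes). It then extends to general separable profiles one agent at a time; the key move for the deviating agent (Claim~\ref{claim:cttc4}) uses \textsl{strategy-proofness} of $f$ together with \textsl{strategy-proofness} of $TTC$ itself to derive a contradiction, and the remaining agents are handled by rerunning the trading-cycle induction. To repair your proof you would need to supply an argument of comparable substance for the lexicographic-to-separable extension; the decomposition statement is true only as a corollary of the theorem, not as a stepping stone you currently have access to.
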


We prove Theorem~\ref{thm:cTTC} in Appendix~\ref{appendix:cTTC}. It is known that cTTC satisfies (i) \textsl{individual rationality} and \textsl{strategy-proofness} \citep{feng2022}, and (ii) \textsl{coordinatewise efficiency} (see Remark~\ref{remark:cef}). For uniqueness, the proof consists of two steps. 
Let us consider a mechanism that satisfies all three properties mentioned above.
We first consider a restricted domain of preference profiles, where all agents exhibit lexicographic preferences, favoring type-$1$ over type-$2$, type-$2$ over type-$3$, and so forth. 
We show that on the restricted domain, this mechanism always selects the cTTC allocation (Proposition~\ref{proposition:cttc1}).
Then by replacing agents' preferences, one by one, from our restricted domain to separable preference domain, we extend this result to the domain of separable preference profiles. We call this the \textit{preference replacement approach}.

The preference replacement approach is a commonly used method in the existing literature, e.g., \citet{papai2001, papai2003}, and \citet{feng2022}. However, our approach here significantly differs from theirs. In their papers, \textsl{non-bossiness} is a key property, which simplifies the replacement process because if we replace an agent's preference and his allotment remains unchanged, then the entire allocation remains unchanged as well.
However, in the proof of Theorem~\ref{thm:cTTC}, we do not have this specific invariance since we do not have \textsl{non-bossiness}. Instead, every time when we replace one agent's preference, we meticulously verify that the entire allocation remains unchanged. This verification process adds considerable complexity to the proof, making our preference replacement approach novel. We believe that in future studies, working with a such an approach can also be useful while studying other higher dimensional models.

The absence of \textsl{non-bossiness} introduces another challenge compared to \citet{feng2022}. Specifically, a key property used in their proof, \textsl{marginal individual rationality}, cannot be implied from the three properties we have.\footnote{A mechanism is \textit{marginally individually rational} if for each $R$, each $i$, and each $t$, $f_i^t(R)\mathbin{R_i^t} o_i^t$.  } As a result, we cannot simply decompose a multiple-type market $(N,e,R)$ into coordinate-wise submarkets $(N,e^t,R^t)_{t\in T}$ and apply Ma's result to obtain our cTTC characterization. This further highlights (1) the novelty of our new proof approach; and (2) the differences between \citet{feng2022}'s result and ours.
\medskip

We would like to make two additional remarks to emphasize the significance of our result.

\begin{remark}{\textbf{Second-best incentive compatibility} }\ \\
When considering various types of markets simultaneously, it is common to observe the independent operation of submarkets in real-life scenarios  \citep{anno2016}. In some cases, implementing separate ``locally optimal'' mechanisms can even result in a ``globally optimal'' outcome \citep{ashlagi2023}. However, this does not hold true for multiple-type housing markets. Therefore, there is a need to find support for the independent operation of markets, such as cTTC, which poses a significant challenge.

For multiple-type housing markets with separable preferences, \citet{klaus2008} weakens \textsl{Pareto efficiency} to another efficiency property, \textsl{second-best incentive compatibility}.\footnote{A \textsl{strategy-proof} mechanism $f:\mathcal{R}^N_s\to X$ is \textit{second-best incentive compatible} if there does not exist another \textsl{strategy-proof} mechanism $g:\mathcal{R}^N_s\to X$ such that (i) for each $R\in \mathcal{R}^N_s$ and each $i\in N$, $g_i(R)\mathbin{R_i}f_i(R)$, and (ii) for some $R'\in \mathcal{R}^N_s$ and  $j\in N$, $g_j(R')\mathbin{P'_j}f_j(R')$.}
She shows that cTTC is \textsl{second-best incentive compatible}, i.e., there exists no other strategy-proof mechanism that Pareto dominates cTTC. However, she also shows that there exists another mechanism that is \textsl{individually rational}, \textsl{strategy-proofness}, and \textsl{second-best incentive compatible}. In a follow-up work, \citet{anno2016} investigate \textsl{second-best incentive compatibility} for \textsl{independent} mechanisms, which treats each submarket independently and separately. In other words, under an \textsl{independent} mechanism, the selected allocation of each type only depends on agents' marginal preferences for each type. 
    They also show that cTTC is not the unique \textsl{independent} mechanism that satisfies these properties. Thus, Theorem~\ref{thm:cTTC} complements \citet{klaus2008} and \citet{anno2016}: by strengthening \textsl{second-best incentive compatibility} to \textsl{coordinatewise efficiency}, we find that cTTC is the only mechanism that satisfies \textsl{individual rationality}, \textsl{strategy-proofness}, and \textsl{coordinatewise efficiency}. 
    In other words, Theorem~\ref{thm:cTTC} demonstrates that cTTC is not only good, but it is good enough: no other mechanism can outperform it if the three properties are indispensable.
\hfill~$\diamond$   \label{remark:secondbest}
\end{remark}

    \begin{remark}{\textbf{Individual rationality}}\ \\
        Although one might view Theorem~\ref{thm:cTTC} as a trivial extension of Result~\ref{result1}, we want to stress that our finding actually adds novelty to the field. In particular, a major challenge with multiple-type housing markets is that \textsl{individual rationality} is weakened considerably. For instance, when agents lexicographically prefer one type over others, if an agent receives a better object than his endowment for his most important object type, then \textsl{individual rationality} of the allotment is respected even if we ignore the endowments of the other object types.
        Let us consider a small domain where there are two types of objects, houses and cars, and all agents lexicographically prefer houses over cars.\footnote{Lexicographic preferences are formally defined in Appendix~\ref{appendix:unclear}.}
        In this domain, an alternative mechanism exists that differs from cTTC and still satisfies \textsl{individual rationality}, \textsl{coordinatewise efficiency}, and \textsl{strategy-proofness}:
        
        Step~1: First apply TTC to houses. 
        If agent $1$ received a new house at Step~1 (and hence improves upon his own house), then move his endowed car to the bottom of his marginal preferences for cars (in terms of cTTC, agent $1$ is not allowed to point at his own car until the end of the algorithm).
        
        Step~2: Apply TTC to cars with the adjusted preferences.\hfill~$\diamond$   \label{remark:ir}  
        \end{remark}

Since cTTC is not well-defined for strict preferences, a natural question is whether there exists an extension of cTTC for strict preferences that satisfies our desired properties. Our answer is no. 

\begin{theorem}
    For multiple-type housing markets with strict preferences, no mechanism satisfies 
    \begin{itemize}
        \item \textsl{individual rationality},
        \item \textsl{strategy-proofness}, and
        \item \textsl{coordinatewise efficiency}. 
    \end{itemize}
    \label{thm:imposs}
\end{theorem}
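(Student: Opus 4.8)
The natural first attempt, given the way the paper is set up, is to invoke the remark stated right after the definition of \textsl{coordinatewise efficiency} — that \textsl{coordinatewise efficiency} implies \textsl{unanimity} — and conclude: any mechanism that is \textsl{individually rational}, \textsl{strategy-proof}, and \textsl{coordinatewise efficient} would then also be \textsl{unanimous}, contradicting Result~\ref{result2}(b). I would, however, want to re-examine that implication with care before leaning on it, since \textsl{coordinatewise efficiency} forbids only \textsl{Pareto improvements} that reallocate a single type and is therefore strictly weaker than \textsl{Pareto efficiency}: when a \textsl{unanimously best} allocation exists at $R$ it is the unique \textsl{Pareto efficient} allocation there, but it need not be the unique \textsl{coordinatewise efficient} one, so this step is not quite as immediate as ``\textsl{Pareto efficiency} $\Rightarrow$ \textsl{unanimity}''. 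If that check goes through, Theorem~\ref{thm:imposs} is an immediate corollary of Result~\ref{result2}(b); if not, I would prove the impossibility directly, as below.

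For a direct proof I would follow the standard template for impossibilities of this kind. Fix a small market — I expect $m=2$ types and $n=2$ or $3$ agents to be enough — and a short family of \textsl{strict} preference profiles linked by single-agent deviations. At least some of these profiles must use \emph{non-separable} preferences: on the separable domain cTTC already witnesses the compatibility of the three properties (Theorem~\ref{thm:cTTC}), so any impossibility has to exploit complementarities. I would begin at a profile close to the endowment, where \textsl{individual rationality} severely restricts $f(R)$, arrange the complementarities so that \textsl{coordinatewise efficiency} pins down a specific single-type trade, and then perturb one agent's preferences at a time, tracking how \textsl{strategy-proofness} forces the outcome to move, until two of the forced allocations are mutually inconsistent (some agent being assigned conflicting objects at neighboring profiles).

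The hard part is getting \textsl{coordinatewise efficiency} to bite all along the chain. Because it rules out only one-type improvements, each time I want to argue ``$f$ cannot select this allocation'' I must exhibit an improving allocation that differs from it in exactly one coordinate; the preferences therefore have to be set up so that such one-coordinate improvements are genuinely present at the incumbent, even though the ``full'' Pareto-improving trade would move several types simultaneously. This is precisely the structural feature distinguishing the strict domain from the separable one in this problem, and lining it up while keeping every allocation along the chain \textsl{individually rational} is the delicate bookkeeping. (If instead the ``\textsl{coordinatewise efficiency} $\Rightarrow$ \textsl{unanimity}'' implication of the first paragraph is settled, all of this work is subsumed by the proof of Result~\ref{result2}(b), which I take as given from \citet{feng2022}.)
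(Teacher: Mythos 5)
Your Plan~A is exactly the paper's proof: the paper disposes of Theorem~\ref{thm:imposs} in one sentence, asserting that \textsl{coordinatewise efficiency} is stronger than \textsl{unanimity} and invoking Result~\ref{result2}~(b). So in terms of approach you have matched the paper. The trouble is that the doubt you raise about that implication is not a formality --- it is well founded, and on the strict domain the implication actually fails. Take $n=2$, $m=2$, and let agent $1$ rank $(o_2^1,o_2^2)\mathbin{P_1}(o_1^1,o_1^2)\mathbin{P_1}(o_2^1,o_1^2)\mathbin{P_1}(o_1^1,o_2^2)$ and agent $2$ rank $(o_1^1,o_1^2)\mathbin{P_2}(o_2^1,o_2^2)\mathbin{P_2}(o_1^1,o_2^2)\mathbin{P_2}(o_2^1,o_1^2)$. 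A \textsl{unanimously best} allocation exists (the full swap), yet the endowment allocation $e$ admits no \textsl{coordinatewise improvement}: the only allocations differing from $e$ in exactly one type are the two partial swaps, and each makes both agents strictly worse off. Hence $e$ is \textsl{coordinatewise efficient} but not \textsl{unanimously best}, so a \textsl{coordinatewise efficient} mechanism need not be \textsl{unanimous} on $\mathcal{R}^N$. (The implication does hold on the separable domain, where the unanimously best allocation is the profile of type-wise tops and any deviation from it can be Pareto-improved one type at a time; that is presumably the case the paper has in mind.)

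Consequently neither your Plan~A nor the paper's one-line argument closes the proof as literally stated, and your Plan~B is only a template --- no profiles are exhibited and no chain of deviations is carried out --- so the proposal does not yet establish the theorem. To finish, you would either have to open up the proof of Result~\ref{result2}~(b) in \citet{feng2022} and verify that every appeal to \textsl{unanimity} there occurs at a profile where the unanimously best allocation is reachable from the mechanism's (individually rational) outcome by a single-type Pareto improvement, so that \textsl{coordinatewise efficiency} can be substituted, or else actually carry out the direct construction you sketch. Your instinct that such a construction must exploit non-separable complementarities is right, since Theorem~\ref{thm:cTTC} shows the three properties are compatible on the separable domain, but as submitted the argument is a plan rather than a proof.
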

\begin{proof}[\textbf{Proof}]
For simplicity, consider a market with two agents and two types, i.e., $N=\{1,2\}$, $T=\{H(ouse),C(ar)\}$, $O=\{H_1,H_2,C_1,C_2\}$, and for each $i\in N$, $e_i=(H_i,C_i)$. The preference profile $R\in \mathcal{R}^N$ is as follows:
$$R_1:(H_1,C_2),(H_2,C_1),\bm{(H_1,C_1)},(H_2,C_2)\text{; and }R_2:(H_1,C_2),(H_2,C_1),\bm{(H_2,C_2)},(H_1,C_1).$$

Let $x\equiv ( (H_1,C_2),(H_2,C_1) )$ and $y\equiv ((H_2,C_1), (H_1,C_2) )$.
Suppose that there is a mechanism $f:\mathcal{R}^N \to X $ that is \textsl{individually rational}, \textsl{strategy-proof}, and \textsl{coordinatewise efficient}. 

By \textsl{coordinatewise efficiency}, $f(R)\in \{x,y\}$. 
If $f(R)=x$. Let $R'_2: (H_1,C_2),\bm{(H_2,C_2)},\ldots $ By \textsl{individual rationality} and \textsl{coordinatewise efficiency}, $f(R_1,R'_2)=y$, which implies that agent $2$ has an incentive to misreport $R'_2$ at $R$; contradicting of \textsl{strategy-proofness}.
If $f(R)=y$. Let $R'_1: (H_1,C_2),\bm{(H_1,C_1)},\ldots $ By \textsl{individual rationality} and \textsl{coordinatewise efficiency}, $f(R'_1,R_2)=x$, which implies that agent $1$ has an incentive to misreport $R'_1$ at $R$; contradicting of \textsl{strategy-proofness}. 
Extending this example to include more than two agents or two types is straightforward, so we omit it.
\end{proof}

We would like to emphasize that the example we used in the proof of Theorem~\ref{thm:imposs} provides valuable insights. When agents have multi-unit demands and we consider a rich domain of preferences, e.g., rich domain in \citet{dasgupta1979}, it becomes apparent that we can easily construct scenarios in which two agents have a conflict of interests. In the presence of \textsl{individual rationality}, this conflict of interest leads to a trade-off between \textsl{strategy-proofness} and many efficiency notions, including \textsl{coordinatewise efficiency} and \textsl{Pareto efficiency}. Furthermore, it is worth noting that this impossibility result can be easily extended to other models that allow multi-unit demands.\medskip

Theorem~\ref{thm:imposs} also leads to a new question: whether there is an efficiency property that is compatible with \textsl{individual rationality} and \textsl{strategy-proofness} for strict preferences. We will address this question in the next subsection.

\subsection{Pairwise efficiency}\label{subs:Pairwise efficiency}
\textsl{Unanimity} is a weak efficiency property. However, for strict preferences, it is still incompatible with \textsl{individual rationality} and \textsl{strategy-proofness}
(Result~\ref{result2}~(b)). 
Therefore, for strict preferences, it seems difficult to find an efficiency property that is compatible with \textsl{individual rationality} and \textsl{strategy-proofness}.
To establish a suitable efficiency property, 
we consider efficiency improvements that only involve a small number of agents \citep{Goldman1982}. To be precise, here we consider \textsl{pairwise efficiency} that rules out efficiency improvements by pairwise reallocation \citep{ekici2022}. 
Let $R\in \mathcal{R}^N$. An allocation $x\in X$ is \textit{pairwise efficient} at $R$ if there is no pair of agents $\{i,j\}\subseteq N$ such that $x_j\mathbin{P_i}x_i$ and $x_i\mathbin{P_j}x_j$. \medskip

\noindent \textbf{Pairwise efficiency}: For each $R\in \mathcal{R}^N$, $f(R)$ is \textsl{pairwise efficient} at $R$.\medskip

For Shapley-Scarf housing markets, the result related to \textsl{pairwise efficiency}, that is pertinent for our analysis of multiple-type housing markets is the following.

\begin{result}{\citep{ekici2022}}\ \\
    For Shapley-Scarf housing markets, only TTC is \textsl{individually rational}, \textsl{strategy-proof}, and \textsl{pairwise efficient}. \label{result3}
    \end{result}

In our context, \textsl{pairwise efficiency} may seem strong at first glance. However, it may be suitable for some real-world applications. Consider refugee resettlement as an example. In this case, there are physical constraints that are given a priori: families cannot be assigned to one locality while working in another place \citep{delacretaz2023}.
In such scenarios, it may be appropriate to impose constraints involving all object-types together, making \textsl{pairwise efficiency} a suitable concept to consider.\medskip

By using arguments similar to arguments in Result~\ref{result3}, we also obtain that bTTC inherits \textsl{pairwise efficiency} from the underlying top trading cycles algorithm for the restricted market $R|^e$. Also, it is known that bTTC is \textsl{individually rational} and \textsl{strategy-proof} \citep{fkkgsp}. 
Hence, based on Result~\ref{result3}, one could now conjecture that for
multiple-type housing markets, bTTC is identified by these three properties. That conjecture is nearly correct, but to fully support it, we need to strengthen \textsl{strategy-proofness} to \textsl{group strategy-proofness} (or the combination of \textsl{strategy-proofness} and \textsl{non-bossiness}: recall that for strict preferences, \textsl{group strategy-proofness} coincides with the combination of \textsl{strategy-proofness} and \textsl{non-bossiness}).

\begin{theorem}
    For multiple-type housing markets with strict preferences, only bTTC satisfies 
    \begin{itemize}
        \item \textsl{individual rationality},
        \item \textsl{group strategy-proofness} (or \textsl{strategy-proofness} and \textsl{non-bossiness}), and
        \item \textsl{pairwise efficiency}. 
    \end{itemize}
 \label{thm:bttc}
\end{theorem}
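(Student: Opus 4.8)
The plan is to prove the two directions separately: (i) \textbf{bTTC satisfies the three properties}, and (ii) \textbf{uniqueness}. For (i), \emph{individual rationality} and \emph{group strategy-proofness} of bTTC are already available from \citet{fkkgsp}; and \emph{pairwise efficiency} follows by transporting Result~\ref{result3} through the restriction map $R\mapsto R|^e$. Concretely, suppose $bTTC(R)$ were pairwise improvable by a pair $\{i,j\}$ by swapping allotments. Since bTTC only ever assigns whole endowments, $bTTC_i(R)=e_k$ and $bTTC_j(R)=e_\ell$ for some $k,\ell$; the swap exchanges $e_k$ and $e_\ell$, so this is exactly a pairwise improvement of the TTC outcome on the Shapley–Scarf market $R|^e$, contradicting \emph{pairwise efficiency} of TTC there. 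So the first direction is routine; the work is all in uniqueness.

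For uniqueness, let $f$ satisfy \emph{individual rationality}, \emph{strategy-proofness}, \emph{non-bossiness}, and \emph{pairwise efficiency} on $\mathcal{R}^N$; I want to show $f=bTTC$. The natural strategy is induction on the structure of the bTTC trading cycles. Fix a profile $R$, run bTTC on $R|^e$, and let $C_1$ be the first cycle to form (the set of agents who trade their whole endowments among themselves using only top-ranked endowments). The goal of the base step is to show that every $i\in C_1$ receives $bTTC_i(R)=e_{\sigma(i)}$ under $f$ as well. Here \emph{pairwise efficiency} plus \emph{individual rationality} does the heavy lifting: since $e_{\sigma(i)}$ is $i$'s top-ranked endowment and (by the cycle structure) $\sigma(i)$'s endowment is owned by $\sigma(i)$ unless $\sigma(i)\in C_1$ too, one argues that if $f$ gave $i$ anything other than $e_{\sigma(i)}$, a pairwise swap around the cycle would Pareto-improve among the $C_1$ agents — but one must be careful that $f_i(R)$ need not be a whole endowment a priori, so the argument has to compare bundles, not endowments, and exploit that $e_{\sigma(i)}$ is $i$'s favorite \emph{endowment} while individual rationality forces $f_i(R)\mathbin{R_i}e_i$. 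A clean way to run this is the standard TTC-style argument: first use \emph{strategy-proofness} to reduce to profiles where each $i\in C_1$ reports $e_{\sigma(i)}$ at the very top, then use \emph{pairwise efficiency} directly, then restore the original preferences using \emph{strategy-proofness} and \emph{non-bossiness} (this is where \emph{non-bossiness}, equivalently the upgrade to \emph{group strategy-proofness}, is essential — it lets me conclude the allotments of agents \emph{outside} $C_1$ are unaffected, so I can peel off $C_1$ and recurse on $N\setminus C_1$ with the reduced endowment set). The inductive step repeats this on the submarket obtained after removing $C_1$ and its objects.

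The main obstacle — and the reason \emph{strategy-proofness} alone is not enough, matching the theorem's hypothesis — is precisely the peeling/recursion step: after pinning down $f$ on $C_1$, I need the outcome for $N\setminus C_1$ to depend only on $R_{-C_1}$ (restricted to the remaining objects) in the way a well-defined reduced market would, and \emph{pairwise efficiency} is too local to force this on its own. \emph{Non-bossiness} bridges the gap by letting me argue that changing a $C_1$-agent's preferences (among those keeping his bTTC allotment on top) cannot move anyone else, so the induction hypothesis applies verbatim to the reduced problem. A secondary technical point to handle carefully is that in multiple-type markets an agent's bTTC allotment is a bundle $e_{\sigma(i)}$, and \emph{pairwise efficiency} only compares \emph{whole} bundles between two agents — fortunately that is exactly the granularity bTTC operates at, so the argument meshes; but one should check at each stage that the bundles $f$ assigns to already-settled agents really are the corresponding endowments and not some recombination, which again follows from individual rationality together with the fact that the remaining objects at stage $k$ form disjoint endowments owned by the remaining agents.
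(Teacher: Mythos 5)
Your first direction and the broad skeleton of the uniqueness argument (induct over the bTTC trading cycles, pin down the first cycle, then peel and recurse using \emph{non-bossiness}) do match the paper's plan, but the two places where you defer to a ``standard argument'' are exactly where the proof lives, and the shortcuts you propose fail. The main gap: for a first-step cycle $C$ with $|S_C|=K\ge 3$ there is no ``pairwise swap around the cycle.'' \emph{Pairwise efficiency} only forbids two agents from exchanging their \emph{entire} bundles, and the no-trade outcome on a $3$-cycle (agent $1$'s top endowment is $e_2$, agent $2$'s is $e_3$, agent $3$'s is $e_1$) is perfectly pairwise efficient --- so ``use pairwise efficiency directly'' cannot force execution of long cycles. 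The paper instead runs an induction on $K$: via \emph{strategy-proofness} it truncates one cycle agent's acceptable set so that he keeps his whole endowment, then modifies his predecessor's marginal preferences so that a \emph{shorter} cycle forms in the modified profile; the induction hypothesis forces that predecessor to receive a specific whole endowment, and only then do two particular agents hold bundles each strictly prefers to his own, yielding the pairwise-efficiency contradiction. That reduction from a $K$-cycle to a genuine two-agent violation is the key idea, and it is absent from your sketch.

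A second genuine gap is your claim that full execution --- each cycle agent receiving the \emph{whole} endowment $e_{\sigma(i)}$ rather than a recombination across types --- ``follows from individual rationality together with the fact that the remaining objects form disjoint endowments.'' It does not: an agent who points at and receives $o_{\sigma(i)}^{t}$ may receive some third agent's object in another type while his bundle still dominates his endowment, so IR alone does not exclude recombinations. The paper separates ``executed'' from ``fully executed'' and proves the latter with its own argument, using \emph{monotonicity} (from SP$+$NB) to reposition preferences so that a new two-agent cycle forms and the already-established execution of short cycles produces a contradiction. Relatedly, the paper does not argue directly on the strict domain as you do: it first proves the characterization on the lexicographic domain (where fixing one marginal lets IR pin down the rest of the bundle) and then lifts to separable and strict preferences one agent at a time via monotonicity of both $f$ and bTTC; working directly with strict preferences you would need to rebuild all the truncation constructions without that scaffolding.
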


\begin{remark}{\textbf{Non-bossiness}}\ \\
 It is surprising that in our characterization of bTTC, \textsl{non-bossiness} is also involved, in comparison with Result~\ref{result3}. In other words, \textsl{non-bossiness} cannot be redundant when we have more than one type.
 Why is this the case? For Shapley-Scarf housing markets, each agent only demands one object. Thus, each agent will trade within only one coalition. Therefore, \textsl{non-bossiness} is redundant since each agent can only influence the coalition in which he is involved. However, the same is not true for multiple-type housing markets because each agent may trade different objects with different coalitions. To see this point, refer to Example~\ref{example:notNB} in Appendix~\ref{section:Appendix2}.\hfill~$\diamond$  \label{remark:nb}   
\end{remark}

Quite interestingly, this characterization is also valid for separable preferences, even if we weaken \textsl{group strategy-proofness} to the combination of \textsl{strategy-proofness} and \textsl{non-bossiness} (recall that for separable preferences, \textsl{group strategy-proofness} is stronger than the combination of \textsl{strategy-proofness} and \textsl{non-bossiness}). This result is noteworthy as it contradicts the common intuition that ``mechanisms operating in such domains (separable preferences) are typically inefficient and not weak group strategy-proof,'' as noted by \citet{barbera2016}.

\begin{theorem}
    For multiple-type housing markets with separable preferences, only bTTC satisfies 
    \begin{itemize}
        \item \textsl{individual rationality},
        \item \textsl{strategy-proofness},
        \item \textsl{non-bossiness}, and
        \item \textsl{pairwise efficiency}. 
    \end{itemize}
 \label{thm:bttc2}
\end{theorem}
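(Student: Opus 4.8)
The plan is to prove the two directions separately. For the \textbf{``if'' direction}, recall that $bTTC$ assigns to each market $R$ the TTC outcome of the Shapley--Scarf housing market $R|^e$ whose ``objects'' are the whole endowments $\{e_1,\ldots,e_n\}$ (with agent $i$ owning $e_i$). Hence $bTTC$ inherits \textsl{individual rationality}, \textsl{strategy-proofness}, and \textsl{non-bossiness} directly from the corresponding properties of TTC on $R|^e$ (as recorded in Remark~\ref{remark:domain}), and it inherits \textsl{pairwise efficiency} because a pair $\{i,j\}$ that would both gain by swapping their $bTTC$-allotments would both gain by swapping their holdings in the Shapley--Scarf market $R|^e$, contradicting \textsl{pairwise efficiency} of TTC (Result~\ref{result3}).

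For \textbf{uniqueness}, let $f:\mathcal{R}_s^N\to X$ satisfy the four properties. I would first isolate the structural claim
\[(\star)\qquad f_i(R)\in\{e_1,\ldots,e_n\}\ \text{ for every }R\in\mathcal{R}_s^N\text{ and every }i\in N,\]
i.e.\ $f$ always has agents trade whole endowments, and then reduce to Result~\ref{result3}. Granting $(\star)$, the reduction is routine: changing one agent's preferences at a time while keeping its restriction to $\{e_1,\ldots,e_n\}$ fixed, \textsl{strategy-proofness} together with $(\star)$ forces that agent's (endowment) allotment to be unchanged, and \textsl{non-bossiness} then freezes the whole allocation; thus $f(R)$ depends only on $R|^e$. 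Since every profile of linear orders over $\{e_1,\ldots,e_n\}$ is the endowment-restriction of some separable profile (take all marginals in the matching order, realized e.g.\ additively), $f$ induces a well-defined mechanism $g$ on the Shapley--Scarf housing market with object set $\{e_1,\ldots,e_n\}$ that is \textsl{individually rational}, \textsl{strategy-proof}, and \textsl{pairwise efficient}; by Result~\ref{result3}, $g$ is TTC, i.e.\ $f=bTTC$.

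The main obstacle is therefore $(\star)$, and it \emph{cannot} be proved profile by profile: at an individual separable profile a ``mixed'' (non-endowment) allotment---for instance the one $cTTC$ selects---can be \textsl{individually rational} and \textsl{pairwise efficient}, and $cTTC$ is itself \textsl{individually rational}, \textsl{strategy-proof}, and \textsl{non-bossy}, so only the interplay across profiles rules mixed allotments out. To prove $(\star)$ I would induct on $n$, following the trading cycles of $bTTC$: take the first $bTTC$ trading cycle $C_1$ of $R|^e$ and show---by an argument in the spirit of Result~\ref{result3}, but carried out with \emph{separable} test preferences (in particular the separable preference whose marginals all rank $o_j^t$ first, which makes $e_j$ the unique top bundle and is therefore an admissible report)---that each $i\in C_1$ receives exactly $bTTC_i(R)$; then replace the preferences of the agents in $C_1$ by such ``$e_j$-top'' preferences, use \textsl{strategy-proofness} to keep their allotments and \textsl{non-bossiness} to keep the rest of the allocation, and apply the induction hypothesis to the reduced instance on $N\setminus C_1$ with their endowments.

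The delicate points---and where I expect essentially all the work to be---are (i) making the first-cycle argument go through using only \emph{separable} manipulations, which is exactly where \textsl{non-bossiness} becomes indispensable and why the strict-preference Theorem~\ref{thm:bttc} gives no shortcut (its proof may use non-separable test preferences, and, moreover, the separable domain is \emph{smaller}, so fewer profiles constrain $f$); and (ii) verifying that freezing $C_1$ via ``$e_j$-top'' preferences genuinely yields a well-posed smaller instance on which $f$ still satisfies all four properties, so that the induction is legitimate even though $f$ is defined on the full market.
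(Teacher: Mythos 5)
Your overall architecture is plausible and your diagnosis of where the difficulty lies is accurate, but the proposal has a genuine gap: the entire substance of the theorem is concentrated in your claim $(\star)$ (equivalently, in the claim that each agent in the first $bTTC$ cycle of $R|^e$ receives the whole endowment he points to), and you do not prove it --- you only assert it can be done ``in the spirit of Result~\ref{result3}.'' That adaptation is not routine. In the unit-demand setting of Result~\ref{result3} an agent's allotment \emph{is} a single object, so ``gets the object he points to'' is the whole story; here, even after one shows that agent $i$ obtains his favourite object $c_i$ (the analogue of Ekici's argument), one must separately rule out that $i$'s remaining $m-1$ coordinates are drawn from \emph{other} agents' endowments. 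The paper's proof (Lemmas~\ref{lemma:bttceff1} and \ref{lemma:bttceff2}) splits the claim into ``executed'' versus ``fully executed'' and handles the second part with a dedicated case analysis (whether the stray object's owner is inside or outside the cycle), relying on \textsl{monotonicity} (derived from \textsl{strategy-proofness} and \textsl{non-bossiness} via Lemma~\ref{lemma:mon}) to reposition the profile so that a new two-agent cycle forms and \textsl{pairwise efficiency} can be invoked. None of this appears in your sketch, and IR plus pairwise efficiency alone cannot deliver it at a single profile, as you yourself note via the cTTC counterexample. A second, smaller gap is your induction on $n$: $f$ is defined on markets with agent set $N$, so ``apply the induction hypothesis to the reduced instance on $N\setminus C_1$'' requires constructing and verifying a well-posed induced mechanism on the submarket; the paper avoids this by instead re-running the first-step argument for second-step cycles within the same market.

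Two structural remarks for comparison. First, your packaging --- prove $(\star)$, then quotient to a Shapley--Scarf market on $\{e_1,\ldots,e_n\}$ and invoke Result~\ref{result3} --- is internally consistent (the quotient argument you give is correct), but it is redundant: the only way you propose to prove $(\star)$ is by showing cycle by cycle that $f_i(R)=bTTC_i(R)$, which already yields $f=bTTC$ and makes the reduction to Result~\ref{result3} unnecessary. Second, the paper does not work directly with arbitrary separable test preferences: it first proves the characterization on the lexicographic subdomain $\mathcal{R}_l^N$ (Theorem~\ref{thm:bttc3}), where the test profiles are easy to control, and then lifts to $\mathcal{R}_s^N$ by a one-agent-at-a-time monotonic-transformation argument sandwiching $f_i(R)$ between $bTTC_i$ values via \textsl{strategy-proofness} of both mechanisms. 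Adopting that two-stage organization would likely make your delicate point (i) tractable.
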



We prove Theorems~\ref{thm:bttc} and \ref{thm:bttc2} in Appendix~\ref{appendix:bttc}. Here we only explain the intuition of the uniqueness part of the proof. We first consider a restricted domain of preference profiles. On this restricted domain, consider a top trading cycle that forms at the first step of bTTC. We first show that, by \textsl{individual rationality}, \textsl{strategy-proofness}, \textsl{non-bossiness}, and \textsl{pairwise efficiency}, agents in this top trading cycle receive their bTTC allotments. We can then consider agents who trade at the second step of bTTC by following the same arguments as for first step trading cycles, and so on. Thus, we find that on this restricted domain, only bTTC satisfies \textsl{individual rationality}, \textsl{strategy-proofness}, \textsl{non-bossiness}, and \textsl{pairwise efficiency} (Theorem~\ref{thm:bttc3}). Then, following a similar approach as in the proof of Theorem~\ref{thm:cTTC}, we extend this result to the domain of separable preference profiles and strict preference profiles.\smallskip

The extension approach we used in Theorem~\ref{thm:bttc2} also has independent methodological interest. In particular, by this approach we confirm that the characterization of bTTC is also true for any domain between separable and strict preferences.

\begin{corollary}
    For multiple-type housing markets with any preference domain $\hat{\mathcal{R}}$ such that $ \mathcal{R}_s\subseteq \hat{\mathcal{R}}\subseteq \mathcal{R}$, only bTTC satisfies \textsl{individual rationality}, \textsl{group strategy-proofness} (or \textsl{strategy-proofness} and \textsl{non-bossiness}), and \textsl{pairwise efficiency}.   
\end{corollary}

We provide three additional remarks to facilitate the reader's understanding.

\begin{remark}{\textbf{Interpretation of Theorems~\ref{thm:bttc} and \ref{thm:bttc2}}}\ \\
How to interpret our characterization of bTTC? There are three ways to explain it.
First, as a positive result, Theorems~\ref{thm:bttc} and \ref{thm:bttc2} demonstrate that bTTC is identified by a list of properties. Consequently, the social planner should select bTTC if he cares about these properties.
Second, Theorems~\ref{thm:bttc} and \ref{thm:bttc2} reveal the trade-off between efficiency and strategic robustness (\textsl{group strategy-proofness}) in the presence of \textsl{individual rationality}: if the social planner wishes to achieve stronger efficiency, he needs to weaken \textsl{group strategy-proofness}.
Third, Theorems~\ref{thm:bttc} and \ref{thm:bttc2} suggest that bTTC can be used as a benchmark for the reallocation of multiple-type housing markets, in the sense that no mechanism should perform worse than bTTC. \hfill~$\diamond$  \label{remark:interpretation}
    \end{remark}

\begin{remark}{\textbf{Independence of Theorems~\ref{thm:bttc} and \ref{thm:bttc2}}}\ \\
Theorem~\ref{thm:bttc} is not a more general result or a trivial extension of Theorem~\ref{thm:bttc2}, and Theorem~\ref{thm:bttc2} is not a direct implication of Theorem~\ref{thm:bttc}. There is a logical independence between proving a characterization on different domains.\footnote{For a detailed discussion of the role of domains in characterization, see \citet[Section 11.3]{thomson2022}.  } 
On the one hand, we may have a characterization on some domain but not on a subdomain. For instance, for Shapley-Scarf housing markets, the characterization of TTC for strict preferences (Result~\ref{result1}) is not necessarily valid on some subdomain \citep{bade2019}. 
Conversely, a mechanism satisfying a given list of properties may not exist on a larger domain, even if it does exist on a smaller domain. For instance, while our characterization of cTTC (Theorem~\ref{thm:cTTC}) is valid on the domain of separable preference profiles, it may not hold true in certain superdomains (Theorem~\ref{thm:imposs}).  \label{remark:indep}
    \hfill~$\diamond$  
    \end{remark}

\begin{remark}{\textbf{Constraints and efficiency}}\ \\
    Trading constraints frequently occur in reality \citep{shinozaki2022}. 
    On the one hand, constraints may exclude some desirable outcomes, on the other hand, they may help us to guarantee positive results \citep{madhav2015}.
    For instance, to ensure the existence of the core, 
    \citet{kalai1978} impose restrictions on trades among certain agents and \citet{papai2007} restricts the set of feasible trades. 
    However, this also raises a new question for the mechanism designer: which constraint should be enforced? In other words, if any constraint is admissible, which constraint is the most suitable? 
   Theorems~\ref{thm:bttc} and \ref{thm:bttc2} partially answer this question: if we still want to achieve some efficiency, then, without loss of other properties, allowing agents to trade their endowments completely is sufficient and necessary to achieve \textsl{pairwise efficiency}.\hfill~$\diamond$    \label{remark:coneff}
\end{remark}

    Moreover, based on our results (Theorems~\ref{thm:cTTC},~\ref{thm:bttc}, and \ref{thm:bttc2}), we also have the following observation, which essentially shows a trade-off between our two efficiency properties in the presence of \textsl{individual rationality} and \textsl{strategy-proofness}. 

\begin{observation}
    For multiple-type housing markets, even with separable preferences, an \textsl{individually rational} and \textsl{strategy-proof} mechanism cannot satisfy both \textsl{coordinatewise efficiency} and \textsl{pairwise efficiency}.\label{observation}
\end{observation}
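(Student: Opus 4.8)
The plan is to derive the impossibility directly from the characterization of cTTC (Theorem~\ref{thm:cTTC}) together with the fact, already implicit in Example~\ref{example:notPE}, that cTTC fails \textsl{pairwise efficiency}. Suppose, towards a contradiction, that $f:\mathcal{R}_s^N\to X$ is \textsl{individually rational}, \textsl{strategy-proof}, \textsl{coordinatewise efficient}, and \textsl{pairwise efficient}. The first three properties and Theorem~\ref{thm:cTTC} force $f=cTTC$. It therefore suffices to exhibit a single separable market at which the cTTC allocation is \textsl{pairwise improved}, since this contradicts \textsl{pairwise efficiency} of $f$.

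For this I would reuse the market $R$ of Example~\ref{example:notPE}: with $N=\{1,2\}$ and $T=\{H,C\}$ one has $cTTC(R)=e=(e_1,e_2)$, while $e_2\mathbin{R_1}e_1$ and $e_1\mathbin{R_2}e_2$. Because preferences over allotments are antisymmetric and $e_1\neq e_2$, these weak rankings are in fact strict, i.e. $e_2\mathbin{P_1}e_1$ and $e_1\mathbin{P_2}e_2$. Hence the pair $\{1,2\}$ witnesses that $cTTC(R)=e$ is not \textsl{pairwise efficient}, contradicting \textsl{pairwise efficiency} of $f=cTTC$. To cover an arbitrary multiple-type market with $n\geq 2$ agents and $m\geq 2$ types, I would embed this two-agent, two-type instance: let agents $3,\dots,n$ have, for every type, marginal preferences ranking their own endowed object first, and let every agent rank her own object first in every type $\tau\in\{3,\dots,m\}$; on the ``active'' coordinates (types $1$ and $2$, agents $1$ and $2$) copy Example~\ref{example:notPE}. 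Under cTTC the inactive agents and coordinates keep their endowments, so $cTTC$ coincides with $e$ on the whole market, and the pair $\{1,2\}$ still produces a \textsl{pairwise improvement} by swapping bundles.

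There is essentially no hard step here: the content is already contained in Theorem~\ref{thm:cTTC} and in the computation of Example~\ref{example:notPE}. The only point requiring a line of care is the passage from the weak rankings recorded in the example to the strict rankings demanded by the definition of \textsl{pairwise efficiency}, which is immediate from antisymmetry of preferences over allotments. (Alternatively, one could bypass Example~\ref{example:notPE} and write out the four-object instance from scratch, or argue via Theorem~\ref{thm:bttc2} using that $cTTC$ is also \textsl{non-bossy}, forcing $cTTC=bTTC$ and contradicting $cTTC(R)\neq bTTC(R)$ in Example~\ref{example:notPE}; recycling the example as above keeps the argument shortest.)
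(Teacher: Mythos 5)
Your proposal is correct and follows essentially the route the paper intends: the Observation is stated as a consequence of Theorem~\ref{thm:cTTC} (forcing any \textsl{individually rational}, \textsl{strategy-proof}, \textsl{coordinatewise efficient} mechanism to be cTTC) combined with the market of Example~\ref{example:notPE}, where $cTTC(R)=e$ is \textsl{pairwise improved} by agents $1$ and $2$ swapping bundles. Your care about strictness via antisymmetry and the embedding into general $n$ and $m$ are both sound, so nothing is missing.
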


\subsection{Other efficiency properties}
\label{subs:other efficiency}
We will now discuss other efficiency properties that are derived from \textsl{coordinatewise efficiency} and \textsl{pairwise efficiency}.\smallskip

First, we consider a weaker version of \textsl{coordinatewise efficiency} that only involves two agents.\smallskip

Let $R\in \mathcal{R}^N$. An allocation $y\in X$ is a \textit{pairwise coordinatewise improvement} of allocation $x\in X$ at $R$ if (i) $y$ is a \textsl{coordinatewise improvement} of $x$, and (ii) $y$ and $x$ only differ in two agents $i,j\in N$ at one type $t\in T$, i.e., for some $t\in T$ and for some distinct $i,j\in N$,
 $y_i^t=x_j^t$, $y_j^t=x_i^t$, for each $k\in N\setminus\{i,j\}$, $y_k^t=x_k^t$, and for each $\tau\in T\setminus\{ t\}$, $y^\tau=x^\tau$. An allocation is \textit{pairwise coordinatewise efficient} at $R$ if there is no \textsl{pairwise coordinatewise improvement}. \medskip

\noindent \textbf{Pairwise coordinatewise efficiency}: For each $R\in \mathcal{R}^N$, $f(R)$ is \textsl{pairwise coordinatewise efficient} at $R$.\medskip

Our results in Theorems~\ref{thm:cTTC} and \ref{thm:imposs}
are still true if we replace \textsl{coordinatewise efficiency} with \textsl{pairwise coordinatewise efficiency}.\medskip 

\begin{theorem}
    For multiple-type housing markets 
    \begin{itemize}
        \item with separable preferences, only cTTC satisfies \textsl{individual rationality}, \textsl{strategy-proofness}, and \textsl{pairwise coordinatewise efficiency}. 
        \item with strict preferences, no mechanism satisfies \textsl{individual rationality}, \textsl{strategy-proofness}, and \textsl{pairwise coordinatewise efficiency}. \label{thm:pce}
    \end{itemize}
\end{theorem}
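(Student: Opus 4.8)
The plan is to piggyback on the already-established results for \textsl{coordinatewise efficiency} (Theorems~\ref{thm:cTTC} and \ref{thm:imposs}) by relating the two efficiency notions. The key structural observation is the following: on the \emph{separable} domain, \textsl{coordinatewise efficiency} of $f^t(R)$ at $R^t$ is equivalent to \textsl{Pareto efficiency} of $f^t(R)$ at the marginal profile $R^t$ (Remark~\ref{remark:cef}), and for Shapley--Scarf housing markets \textsl{Pareto efficiency} coincides with \textsl{pairwise efficiency} together with the absence of larger trading cycles; but in fact a cleaner route is available. I would first prove the implication that matters for the impossibility direction, and then handle the separable direction by a domain-restriction argument paralleling the proof of Theorem~\ref{thm:cTTC}.

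For the \textbf{strict-preferences impossibility}, I would show directly that \textsl{pairwise coordinatewise efficiency} still implies \textsl{unanimity}, so that Result~\ref{result2}(b) applies verbatim. Suppose a \textsl{unanimously best} allocation $x^\ast$ exists at $R$ but $f(R)=x\neq x^\ast$. Since $x^\ast$ is unanimously best and preferences over allotments are strict, there is an agent $i$ with $x^\ast_i \mathbin{P_i} x_i$; moreover $x^\ast$ Pareto-dominates $x$. The point is that when a unanimously best allocation exists, one can reach a coordinatewise — indeed a \emph{pairwise} coordinatewise — improvement of $x$: pick any type $t$ on which $x^t \neq (x^\ast)^t$ and, inside the permutation $x^t \mapsto (x^\ast)^t$, locate a cycle; a single transposition along that cycle moves at least one agent strictly closer to $x^\ast$ in coordinate $t$ while hurting no one, because every agent (weakly) prefers $x^\ast$ to $x$ and, under strictness, any swap that replaces $x^t_j$ by $x^{\ast t}_j$ for the agents involved is weakly preferred provided the other coordinates are held fixed. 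The subtlety here — and the \textbf{main obstacle} — is that a single transposition in coordinate $t$ need not give \emph{everyone} a weakly better bundle than $x$: swapping two agents' type-$t$ objects along a cycle of length $>2$ gives one of them an object that is neither their $x$-object nor their $x^\ast$-object. I expect to resolve this by choosing the transposition more carefully: take a \emph{minimal} type $t$ and a $2$-cycle if one exists in $x^t\mapsto(x^\ast)^t$; if no $2$-cycle exists, argue that iterating ordinary (non-pairwise) coordinatewise improvements still terminates at $x^\ast$, which forces $f(R)=x^\ast$ under \textsl{pairwise coordinatewise efficiency} only if the weaker property already rules out $x$ — so I may instead need the cleaner statement that \textsl{pairwise coordinatewise efficiency} implies \textsl{unanimity} via the observation that at a non-unanimous-best allocation with a unanimous best present, \emph{some} type admits a welfare-improving transposition, which can be shown by a parity/shortest-cycle argument on the composition of the two allocations restricted to each type.

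For the \textbf{separable-preferences characterization}, the containment $\text{(coordinatewise efficient)} \Rightarrow \text{(pairwise coordinatewise efficient)}$ is immediate, so cTTC still satisfies all three properties by Remark~\ref{remark:cef} and \citet{feng2022}. For uniqueness, I would re-run the proof architecture of Theorem~\ref{thm:cTTC}: first establish, on the restricted domain used in Proposition~1, that \textsl{individual rationality} $+$ \textsl{strategy-proofness} $+$ \textsl{pairwise coordinatewise efficiency} already pins down the cTTC allocation — here the key lemma is that on the separable domain a type-$t$ allocation $y^t$ is \textsl{pairwise efficient} for the Shapley--Scarf submarket $(N,e^t,R^t)$ \emph{and individually rational} only if it is \textsl{Pareto efficient} there, which is exactly Result~\ref{result3}'s content reproved at the submarket level, so that \textsl{pairwise coordinatewise efficiency} collapses to \textsl{coordinatewise efficiency} on each submarket; then extend to all separable profiles by the same one-agent-at-a-time preference-replacement argument. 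The only thing to check with care is that the submarket-level equivalence genuinely needs nothing beyond what we have — i.e.\ that \textsl{pairwise efficiency} plus \textsl{individual rationality} plus the fact that $f^t$ is itself the outcome of a \textsl{strategy-proof} rule forces Pareto efficiency of each coordinate — and this is where I would invoke Result~\ref{result3} to conclude $f^t=TTC^t$, hence $f=cTTC$.
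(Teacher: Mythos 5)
Both bullets of your plan have a genuine gap, and both gaps trace to the same missed observation about how the paper actually proves this theorem. The paper's proof of Theorem~\ref{thm:pce} is literally ``the same as Theorems~\ref{thm:cTTC} and \ref{thm:imposs}'', and the reason this works for the separable bullet is that every single appeal to \textsl{coordinatewise efficiency} in the uniqueness argument of Appendix~\ref{appendix:cTTC} has the form ``agents $i$ and $j$ can both be made strictly better off by swapping their type-$t$ allotments, all else unchanged'' --- which is exactly a \textsl{pairwise coordinatewise improvement}. So the argument goes through verbatim under the weaker property and no new lemma is needed (the existence part is immediate, as you say). Your alternative route --- collapsing \textsl{pairwise coordinatewise efficiency} back to \textsl{coordinatewise efficiency} coordinate by coordinate --- rests on a key lemma that is false at the allocation level: an \textsl{individually rational}, \textsl{pairwise efficient} allocation of a Shapley--Scarf market need not be \textsl{Pareto efficient} (three agents with $R_i: o_{i+1},\, o_i,\, o_{i-1}$ indices mod $3$; the endowment is individually rational and pairwise efficient but is Pareto dominated by the three-cycle). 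The mechanism-level fallback --- invoking Result~\ref{result3} to conclude $f^t=TTC^t$ --- is also unavailable, because it requires $f^t$ to be a well-defined \textsl{strategy-proof} Shapley--Scarf mechanism in the marginal profile $R^t$ alone; but $f^t(R)$ may depend on the entire profile $R$, and \textsl{strategy-proofness} of $f$ in bundle preferences does not yield coordinatewise strategy-proofness, so Result~\ref{result3} cannot be applied per submarket.

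For the strict-preferences bullet, the obstacle you flag is not merely a subtlety to be finessed --- the implication you need is false on $\mathcal{R}^N$. Take $n=m=2$ with $R_1: (H_2,C_2),\,(H_1,C_1),\,(H_2,C_1),\,(H_1,C_2)$ and $R_2: (H_1,C_1),\,(H_2,C_2),\,(H_1,C_2),\,(H_2,C_1)$. A unanimously best allocation exists (swap both types), yet the endowment allocation admits no coordinatewise --- a fortiori no pairwise coordinatewise --- improvement, because any one-type swap hands each agent a mixed bundle he ranks below his endowment. The improvement needed to reach the unanimously best allocation must move both types at once, which \textsl{pairwise coordinatewise efficiency} never requires, so no transposition, parity, or shortest-cycle argument can repair this step. (The same example shows that the one-line derivation of Theorem~\ref{thm:imposs} from ``\textsl{coordinatewise efficiency} implies \textsl{unanimity}'' is itself problematic on the strict domain; the robust route for this bullet is to rerun the explicit counterexample behind Result~\ref{result2}(b) and verify that its efficiency appeals are two-agent, one-type swaps, in the same spirit as the separable case, rather than to pass through \textsl{unanimity}.)
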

The proofs are the same as in Theorems~\ref{thm:cTTC} and \ref{thm:imposs} and hence we omit them. Note that Theorem~\ref{thm:pce} implies that \textsl{pairwise coordinatewise efficiency} and \textsl{pairwise efficiency} are logically independent: cTTC satisfies the former but violates the latter, and bTTC satisfies the latter but violates the former.\medskip

Second, we consider a stronger version of \textsl{pairwise efficiency} that involves larger coalitions: \textsl{coalitional efficiency} \citep{tierney2022}.\footnote{\citet{tierney2022} originally refers to it as \textsl{conditional optimality}.} 
This property says that the selected allocation cannot be improved by the reallocation of allotments, keeping bundled allotments intact.\smallskip

Let $R\in \mathcal{R}^N$. An allocation $x\in X$ is \textit{coalitionally efficient} at $R$ if there is no coalition $S\equiv\{i_1,i_2,\ldots,i_K\}\subseteq N$ such that for each $i_\ell\in S$, $x_{i_\ell}\mathbin{P_{i_\ell}}x_{i_{\ell+1}}$ (mod $K$).\footnote{For Shapley-Scarf housing markets with strict preferences, \textsl{coalitional efficiency} is equivalent to \textsl{Pareto efficiency}.}  \medskip

\noindent \textbf{Coalitional efficiency}: For each $R\in \mathcal{R}^N$, $f(R)$ is \textsl{coalitionally efficient} at $R$.\medskip

It is easy to verify that bTTC satisfies \textsl{coalitional efficiency} from the underlying TTC algorithm for the restricted market $R|^e$. Thus, our characterization of bTTC is still valid if we replace \textsl{pairwise efficiency} with \textsl{coalitional efficiency}. \medskip 

\begin{theorem}
    For multiple-type housing markets (i) with separable preferences and (ii) with strict preferences, only bTTC satisfies \textsl{individual rationality}, \textsl{group strategy-proofness} (or the combination of \textsl{strategy-proofness} and \textsl{non-bossiness}), and \textsl{coalitional efficiency}. \label{thm:coale}
\end{theorem}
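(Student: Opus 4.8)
The plan is to piggyback on the already-established characterization of bTTC via \textsl{pairwise efficiency} (Theorems~\ref{thm:bttc} and~\ref{thm:bttc2}), using the fact that \textsl{coalitional efficiency} sits between \textsl{Pareto efficiency} and \textsl{pairwise efficiency} in logical strength. First I would record the easy direction: bTTC satisfies all three properties. \textsl{Individual rationality} and \textsl{group strategy-proofness} are inherited from the underlying TTC algorithm on the restricted market $R|^e$ \citep{fkkgsp}, exactly as in the discussion preceding Theorem~\ref{thm:bttc}. For \textsl{coalitional efficiency}, observe that bTTC assigns whole endowments, so any coalition $S=\{i_1,\dots,i_K\}$ with $x_{i_\ell}\mathbin{P_{i_\ell}}x_{i_{\ell+1}}$ (mod $K$) would be a blocking cycle of whole bundles; but the TTC allocation on $R|^e$ is the (strict) core of that Shapley–Scarf submarket, which has no such blocking cycle. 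Hence bTTC is \textsl{coalitionally efficient}.

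For uniqueness, the key observation is that \textsl{coalitional efficiency} implies \textsl{pairwise efficiency}: a pairwise improving pair $\{i,j\}$ with $x_j\mathbin{P_i}x_i$ and $x_i\mathbin{P_j}x_j$ is precisely a coalition of size $K=2$ violating the coalitional-efficiency condition. Therefore any mechanism $f$ satisfying \textsl{individual rationality}, \textsl{group strategy-proofness} (resp.\ \textsl{strategy-proofness} plus \textsl{non-bossiness}), and \textsl{coalitional efficiency} also satisfies \textsl{individual rationality}, the same incentive requirement, and \textsl{pairwise efficiency}; by Theorem~\ref{thm:bttc} (for strict preferences) or Theorem~\ref{thm:bttc2} (for separable preferences), $f=bTTC$. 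This handles both domains and both formulations of the incentive axiom simultaneously, since the corresponding versions of Theorems~\ref{thm:bttc}/\ref{thm:bttc2} already cover every case.

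There is essentially no hard part here once the earlier theorems are in hand: the whole content is the two implications \textsl{coalitional efficiency} $\Rightarrow$ \textsl{pairwise efficiency} (for uniqueness) and bTTC $\Rightarrow$ \textsl{coalitional efficiency} (for existence). The only point that deserves a sentence of care is the latter: one must note that the trading cycles produced by TTC on $R|^e$ cannot be further improved even by a cyclic reallocation of entire bundles among an arbitrary coalition, which is just the standard fact that the TTC outcome lies in the strong core of a Shapley–Scarf economy with strict preferences. I would therefore state the theorem's proof in three or four lines in Appendix~\ref{section:Appendix1}, citing Theorems~\ref{thm:bttc} and~\ref{thm:bttc2} for uniqueness and the core property of TTC for the \textsl{coalitional efficiency} of bTTC, and omit all routine detail.
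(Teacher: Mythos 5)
Your proposal is correct and matches the paper's (very brief) argument exactly: the paper likewise notes that bTTC inherits \textsl{coalitional efficiency} from the TTC algorithm on the restricted market $R|^e$, and obtains uniqueness because \textsl{coalitional efficiency} strengthens \textsl{pairwise efficiency} (the $K=2$ case), so Theorems~\ref{thm:bttc} and~\ref{thm:bttc2} apply. Your added observation that a blocking cycle of whole bundles would contradict Pareto efficiency (equivalently, the strong-core property) of TTC on the Shapley--Scarf submarket $R|^e$ is precisely the ``easy to verify'' step the paper leaves implicit.
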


All the efficiency properties discussed above have certain constraints on efficiency improvements. For example, \textsl{coordinatewise efficiency} and \textsl{pairwise coordinatewise efficiency} only consider efficiency improvements within one type ($\{o_1^t,\ldots,o_n^t\}$), while \textsl{pairwise efficiency} and \textsl{coalitional efficiency} only consider efficiency improvements within the full endowments of a coalition ($S\subseteq N$ with $\{e_i\}_{i\in S}$). A natural question is whether it is possible to consider something in between, such as efficiency improvements for more than one type but less than all types.

Let $R\in \mathcal{R}^N$. An allocation $x\in X$ is \textit{$T'$-types pairwise efficient} at $R$ if there is no pair of agents $ \{i,j\}\subseteq N$ and a strict subset of types $T'\subsetneq T$ such that $y_i\mathbin{P_i}x_i$ and $y_j\mathbin{P_j}x_j$, where $y_i=( (x_j^t)_{t\in T'},   (x_i^t)_{t\in -T' }  )$ and $y_j=( (x_i^t)_{t\in T'},   (x_j^t)_{t\in -T' }   )$.
\medskip

\noindent \textbf{$T'$-types pairwise efficiency}: For each $R\in \mathcal{R}^N$, $f(R)$ is \textsl{$T'$-types pairwise efficient}.
\begin{remark}{\textbf{Restriction on $|T'|$}}\ \\
By the definition of \textsl{$T'$-types pairwise efficiency}, it is easy to see that \textsl{$T'$-types pairwise efficiency} is stronger than \textsl{pairwise coordinatewise efficiency}. If we do not assume that $T'\subsetneq T$, i.e., $|T'|=m$ is also possible, then this new property is also stronger than \textsl{pairwise efficiency}. By Observation~\ref{observation}, we know that no \textsl{individually rational} and \textsl{strategy-proof} mechanism satisfies it. 

Given $T'\subsetneq T$, if there are only two types, i.e., $|T|=m=2$, then \textsl{$T'$-types pairwise efficiency} coincides with \textsl{pairwise coordinatewise efficiency}.
\hfill~$\diamond$  
\end{remark}

The following result reveals the strength of \textsl{$T'$-types pairwise efficiency}.\medskip 

\begin{theorem}
    If $|T|=m>2$, then even for multiple-type housing markets with separable preferences, no mechanism satisfies \textsl{individual rationality}, \textsl{strategy-proofness}, and \textsl{$T'$-types pairwise efficiency}. \label{thm:mtpe}
\end{theorem}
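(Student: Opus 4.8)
The plan is to construct an explicit counterexample with $m=3$ types showing that any mechanism satisfying \textsl{individual rationality} and \textsl{strategy-proofness} must violate \textsl{$T'$-types pairwise efficiency}. Since $T'$ is a strict subset of $T$ but we now have at least three types, the relevant swaps are over subsets $T'$ with $1\le |T'|\le m-1$. In particular, taking $|T'|=2$, a pair of agents could swap exactly two of their three type-components. The idea is to design a preference profile where the mechanism, constrained by \textsl{individual rationality} and \textsl{strategy-proofness}, is forced into an allocation that two agents could jointly improve upon by trading two of their three coordinates, while the remaining coordinate stays fixed. I expect the construction to build on the two-type impossibility engine behind Result~\ref{result2}(b) / Theorem~\ref{thm:imposs}: reserve one ``inert'' type on which every agent is at their top (so it never moves and imposes no \textsl{individual rationality} bite), and on the remaining two types replicate the Konishi--Quint--Wako / \citet{feng2022}-style profile where \textsl{individual rationality} and \textsl{strategy-proofness} preclude the only allocation that would survive a two-type pairwise swap.

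The key steps, in order, would be: (1) Fix $N=\{1,2,3\}$ (or possibly just $\{1,2\}$ augmented with dummy agents if the underlying impossibility needs three agents) and $m=3$, with types $\{1,2,3\}$. (2) On type $3$, let all agents rank their own endowment first; then in any \textsl{individually rational} allocation nobody trades type-$3$ objects, so $f^3$ is pinned to $e^3$. (3) On types $1$ and $2$, import the profile used to prove the two-type coordinatewise impossibility: one can check that the set of \textsl{individually rational} allocations restricted to these two types, together with what \textsl{strategy-proofness} forces across a small family of profiles, leaves the mechanism with no allocation that is \textsl{$T'$-types pairwise efficient} for $T'=\{1,2\}$ — because the unique allocation resisting the $\{1,2\}$-swap would require some agent to receive, in type $1$ or type $2$, an object worse than his endowment, contradicting \textsl{individual rationality}; while any \textsl{individually rational} allocation admits a profitable $\{1,2\}$-swap between two agents, i.e. it is \textsl{$T'$-types pairwise improvable}. (4) Use the \textsl{strategy-proofness} argument across the associated family of manipulations (exactly as in the proof of the Konishi-type impossibility, e.g.\ the chain of deviations that forces the mechanism to ``follow'' a cyclic chain of improvements) to derive the contradiction. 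Since the two-type sub-profile lives inside the separable domain, the whole construction is separable, which gives the stated strengthening.

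The main obstacle, I expect, is verifying that the two-type impossibility actually manifests specifically through a \emph{two-type pairwise} improvement rather than merely through a coordinatewise (single-type) one — recall that Theorem~\ref{thm:pce} shows \textsl{pairwise coordinatewise efficiency} alone is \emph{compatible} with the two properties on separable domains (cTTC satisfies it), so a bare single-type swap cannot do the job. The leverage must come from the fact that with $m\ge 3$ a two-coordinate swap is available and is genuinely stronger than any single-coordinate swap. Concretely, the delicate point is to engineer the type-$1$ and type-$2$ marginal preferences and the coupling (via full-bundle preferences on the strict part, or via the way separability aggregates) so that cTTC's own allocation — which survives every \emph{single}-type swap — is nonetheless vulnerable to a \emph{joint} type-$\{1,2\}$ swap, and to show any alternative \textsl{individually rational}, \textsl{strategy-proof} mechanism is vulnerable too. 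I would handle this by exhibiting a cyclic improvement over two agents where each agent gives up two objects simultaneously and gains a bundle that is \textsl{individually rational} yet strictly better, then closing the argument with the standard two-agent \textsl{strategy-proofness} deviation chain; the write-up would present the explicit profile, the finite case check of \textsl{individually rational} allocations, and the short manipulation argument, deferring routine enumeration to the appendix.
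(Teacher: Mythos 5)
There is a genuine gap in your construction, even though the high-level shape (an explicit $m=3$ counterexample closed by a \textsl{strategy-proofness} deviation) is the right one. Your plan rests on (i) making type $3$ inert and (ii) importing ``the two-type coordinatewise impossibility'' onto types $1$ and $2$. Neither step survives scrutiny on the separable domain. First, the claim that if every agent tops his own type-$3$ endowment then \textsl{individual rationality} pins $f^3=e^3$ is false for separable preferences: \textsl{individual rationality} constrains bundles, not coordinates, and a loss in type $3$ can be compensated by a gain elsewhere (it would hold only if you additionally made type $3$ lexicographically most important, which you do not specify). Second, and more fundamentally, there is no two-type impossibility to import on the separable domain: the Konishi/\citet{feng2022}-style profiles behind Result~\ref{result2}(b) and Theorem~\ref{thm:imposs} are essentially non-separable, and on separable two-type markets cTTC is \textsl{individually rational}, \textsl{strategy-proof}, and \textsl{pairwise coordinatewise efficient} while bTTC is \textsl{individually rational}, \textsl{group strategy-proof}, and \textsl{pairwise efficient}. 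With type $3$ frozen, your $T'=\{1,2\}$ requirement degenerates into pairwise efficiency on a two-type submarket, which is satisfiable. Your proposed contradiction mechanism (``the unique allocation resisting the $\{1,2\}$-swap would require some agent to receive, in type $1$ or type $2$, an object worse than his endowment'') is also not how the clash can arise: at any single profile there do exist allocations that are simultaneously \textsl{individually rational} and \textsl{$T'$-types pairwise efficient}, so the contradiction must come from \textsl{strategy-proofness} across profiles, not from a one-profile incompatibility with \textsl{individual rationality}.

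The missing idea is that the third type must be \emph{active}, not inert. The paper's proof uses only two agents with lexicographic (hence separable) preferences: $T'$-types pairwise efficiency over $T'=\{1,2\}$ forces the trade of types $1$ and $2$ (both agents want both of those objects swapped), and the whole argument then turns on the mechanism's residual freedom in type $3$, where agent $1$ wants to keep his endowment and agent $2$ wants to trade. If the mechanism trades type $3$, agent $1$ profitably deviates by reporting type $3$ as most important with only his own object acceptable (so \textsl{individual rationality} protects $o_1^3$ while the $\{1,2\}$-swap is still forced); if it does not, agent $2$ deviates by reporting type $3$ as most important, which via \textsl{$T'$-types pairwise efficiency} over $T'=\{1,3\}$ and \textsl{individual rationality} forces types $1$ and $3$ to be traded and hands him a strictly better bundle under his true preferences. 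Your write-up never engineers this asymmetric tension on the third coordinate, so the ``standard deviation chain'' you invoke has nothing to bite on.
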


We prove it by a counterexample in Appendix~\ref{appendix:thm:mtpe}.\smallskip

We conclude this section with an important remark.

\begin{remark}{\textbf{Constrained efficiency improvements}}\ \\
Based on Result~\ref{result2}~(a), we know that 
for an efficiency property based on efficiency improvements, constraints on efficiency improvements must be made in order to ensure compatibility with \textsl{individual rationality} and \textsl{strategy-proofness}.\footnote{Recall that \textsl{Pareto efficiency} is an efficiency property based on efficiency improvements without any constraints over improvements.}
In this section, we examine two categories of constraints, those regarding the constrained improvements with a certain number of object types, and those pertaining to the constrained improvements with a certain number of agents, to guarantee compatibility with \textsl{individual rationality} and \textsl{strategy-proofness}.
On the one hand, when it comes to restrictions over object types, only two constraints prove to be useful: (i) improvements for one type only and (ii) improvements for all types together. On the other hand, restrictions over agents may be unnecessary. 
To be more precise, even if we only consider improvements for two agents, we can only achieve impossibility as long as there are no restrictions on object types. \label{remakr:improvments}
\hfill~$\diamond$  
\end{remark}

{\renewcommand{\arraystretch}{1.5}
\begin{table}[htb]
\centering
\caption{efficiency properties and their implicit restriction conditions}
{\begin{threeparttable}
\begin{tabular}{|l|l|l|l|l|l|}
  \hline
   & \textbf{CE} & \textbf{pE} & \textbf{pCE} & \textbf{cE} & \textbf{$T'$-pE} \\ \hline
constrained improvements for one type & +  & $-$  & +   & $-$  &$-$   \\ \hline
constrained improvements for all types & $-$  & +  & $-$   & +  & $-$   \\ \hline
constrained improvements for two agents & $-$  & +  & +   & $-$  & +   \\ \hline
compatibility with IR and SP & + (T1) & + (T3)  & +  (T5)  & + (T6)  & $-$ (T7)  \\ \hline
\end{tabular}
\medskip

\begin{tablenotes}
\small 
We present a summary of our results in the table above.
The first row describes five efficiency properties that we consider in this section. In the first column, the first three items represent three restriction conditions that we discussed in Remark~\ref{remakr:improvments}, and the last item means the compatibility with \textsl{individual rationality} and \textsl{strategy-proofness}.

The notation ``+'' (``$-$'') in a cell means that the property satisfies (violates) the condition. The notation for the first three rows is determined by the definition of our efficiency properties.
The compatibility with \textsl{individual rationality} and \textsl{strategy-proofness} for the last row is obtained from Theorems~\ref{thm:cTTC},~\ref{thm:bttc},~\ref{thm:pce},~\ref{thm:coale}, and~\ref{thm:mtpe}, respectively.

Abbreviations in the first row respectively refer to:

\textbf{CE} stands for \textsl{coordinatewise efficiency}, 

\textbf{pE} stands for \textsl{pairwise efficiency}, 

\textbf{pCE} stands for \textsl{pairwise coordinatewise efficiency}, 

\textbf{cE} stands for \textsl{coalitional efficiency}, and

\textbf{$T'$-pE} stands for \textsl{$T'$-types pairwise efficiency}.
\end{tablenotes}
\end{threeparttable}}
\label{table:HAproperties}
\end{table}}
\newpage

\section{Discussion}\label{discussion}

\subsection{Significance}\label{discussion1}
What can we learn from our results? We discuss their significance from two perspectives. On the positive side, a straightforward implication of our two characterizations is that if the social planner cares about \textsl{individual rationality} and \textsl{strategy-proofness}, then cTTC is an outstanding candidate (for separable preferences), and if he also wants to maintain \textsl{group strategy-proofness}, then he might consider using bTTC (regardless of preference domains), as both cTTC and bTTC preserve some efficiency. In other words, if the social planner is concerned with the properties discussed in this paper, our characterizations suggest that cTTC and bTTC are the only candidates to satisfy them.\smallskip

Furthermore, our characterizations together with the impossibility results, reveal that cTTC and bTTC can be used as benchmarks in two ways. First, as mentioned in the introduction, cTTC and bTTC can primarily be interpreted as a compatibility test. This means that apart from efficiency properties that we consider, if you are interested in other reasonable efficiency properties, you can determine if they are compatible with \textsl{individual rationality} and \textsl{strategy-proofness} by checking if they are satisfied by cTTC or bTTC. This testing approach helps us uncover other interesting efficiency properties. 
Second, as mentioned in Remarks~\ref{remark:secondbest} and \ref{remark:interpretation}, aside from the objective of characterizing our two particular mechanisms, if you want to discover another attractive mechanism satisfying \textsl{individual rationality} and \textsl{strategy-proofness}, then this mechanism should not perform worse than cTTC and bTTC; otherwise, you should use cTTC or bTTC instead of this inefficient mechanism. In this respect, cTTC and bTTC help us identify the Pareto frontier of \textsl{individually rational} and \textsl{strategy-proof} mechanisms.\smallskip

On the negative side, our impossibility results also reveal two unfavorable things. First, as mentioned in Remark~\ref{remakr:improvments}, we demonstrate that, aside from the two constrained improvements we consider, it is difficult to obtain a reasonable efficiency property compatible with \textsl{individual rationality} and \textsl{strategy-proofness}. Second, as Observation~\ref{observation} suggests, in the presence of \textsl{individual rationality} and \textsl{strategy-proofness}, it is impossible to achieve an efficiency property based on constrained improvements for one type (e.g., \textsl{coordinatewise efficiency} and \textsl{pairwise coordinatewise efficiency}) and an efficiency property based on constrained improvements for all types  (e.g., \textsl{pairwise efficiency} and \textsl{coalitional efficiency})  simultaneously; a choice must be made.\smallskip

We want to emphasize that the purpose of this paper is not to characterize specific mechanisms. Instead, as the title of this paper suggests, we are interested in determining the types of efficiency properties that are compatible with \textsl{individual rationality} and \textsl{strategy-proofness}, i.e., our primary focus lies in examining the compatibility between efficiency and the other properties. However, we have made some unexpected discoveries: we find that only two mechanisms, cTTC and bTTC, satisfy our properties.  

\subsection{Related characterizations of TTC extensions}\label{discussion2}
Here we make a discussion of other characterizations of TTC extensions. It is important to clearly and concisely summarize our key findings and to relate them to other studies.\footnote{In our companion paper \citep{fengesp}, by replacing efficiency properties with incentive properties,
we provide other characterizations of cTTC and bTTC, the relation between these two papers are discussed in \citet{fengesp}. } \medskip

\noindent \textbf{Another characterization of cTTC}.
Here, we discuss the relation between our characterization of cTTC (Theorem~\ref{thm:cTTC}) and \citet[Theorem~2]{feng2022}, which characterizes cTTC on the basis of \textsl{individual rationality}, \textsl{strategy-proofness}, \textsl{non-bossiness}, and \textsl{unanimity}\footnote{In \citet{feng2022}, they refer to cTTC as the typewise top-trading-cycles mechanism. }
To distinguish between \citet{feng2022}'s result and ours, note that their result is established by weakening \textsl{Pareto efficiency} to \textsl{unanimity} and strengthening \textsl{strategy-proofness} to the combination of \textsl{strategy-proofness} and \textsl{non-bossiness}, whereas ours does not require \textsl{non-bossiness}. On the other hand, we use \textsl{coordinatewise efficiency}, which is stronger than \textsl{unanimity}.
Hence, the incentive property in our characterization is weaker while the efficiency property is stronger than in \citet[Theorem~2]{feng2022} and our result is logically independent.

\noindent \textbf{Another characterization of bTTC}.
Here, we discuss the relation between our characterization of bTTC (Theorems~\ref{thm:bttc} and \ref{thm:bttc2}) and \citet{fkkgsp}'s, who essentially characterize bTTC by means of \textsl{individual rationality}, \textsl{group strategy-proofness} and \textsl{anonymity}.\footnote{\textit{Anonymity} says that the mechanism is defined independently of the names of the agents. And they show that for separable preferences and for strict preferences,
only the class of hybrid mechanisms between the no-trade mechanism and bTTC, satisfies all of their properties.} 
Since (i) there is no logical relation between \textsl{anonymity} and \textsl{pairwise efficiency}, and (ii) the incentive property in our characterization for separable preferences
is weaker, our results are logically independent.

\subsection*{Object allocation problems with multi-demands and with ownership}
Next, we compare our results to \citet{tamura2021} and \citet{biro2022}. Each considers an extension of Shapley-Scarf housing markets.

\citet{tamura2021} consider the multiple reallocation problems model, which is a general model for allocating objects to agents who can consume any set of objects. In this model, each object is owned by an agent, but now each agent has strict preferences over all objects and his preferences over sets of objects are monotonically responsive to these ``objects-preferences.''\footnote{\citet{papai2003} also studies this model, and in fact our cTTC is is a specific case of her segmented trading cycle mechanisms. See \citet[Section~4]{feng2022} for a detailed discussion.} 
In our model, we impose more structure by assuming that (i) the set of objects is partitioned into exogenously given types and (ii) each agent owns and wishes to consume one object of each type. 
For this more general model, \citet{tamura2021} consider another TTC extension: the ``generalized top trading cycles mechanism  (gTTC),'' which satisfies \textsl{individual rationality} and \textsl{Pareto efficiency} but violates \textsl{strategy-proofness}. By strengthening \textsl{individual rationality} and weakening \textsl{strategy-proofness}, they provide a characterization of gTTC for lexicographic preferences. Thus, their results complement ours: if we exclude \textsl{strategy-proofness}, then there exists another TTC extension, which performs better than our mechanisms in terms of efficiency.\smallskip

\citet{biro2022} consider another extension where each agent owns a set of homogeneous and agent-specific objects, and they consider a modification of bTTC to their model with the ``endowments quota constraint.'' This constraint means that for each agent, the number of objects he can consume is the same as the number of objects he is endowed with. They show that this modification is neither \textsl{Pareto efficient} nor \textsl{strategy-proof}. Thus, their results show the limitation of the modified bTTC in their model, while in our model, bTTC is \textsl{group strategy-proof}.

\subsection{External validity of compatibility results}\label{discussion3}
While our compatibility results of efficiency with \textsl{individual rationality} and \textsl{strategy-proofness} are initially based on the multiple-type housing market model, they can be extended to more general environments. 
For instance, consider the multiple reallocation problem, where objects are not categorized by different types, and agents' endowments may be unevenly distributed.

Firstly, \textsl{pairwise efficiency} and \textsl{coalitional efficiency} are still well-defined, and it is evident that they are compatible with other properties, as bTTC in that model is still well-defined and satisfies all the properties. Note that in this context, we only claim the compatibility of efficiency with \textsl{individual rationality} and \textsl{strategy-proofness}: we do not know whether a characterization of bTTC similar to ours remains valid or not.
Secondly, although we cannot directly apply \textsl{coordinatewise efficiency} and \textsl{pairwise coordinatewise efficiency} in this context, suitable modifications can be made. To illustrate this, let us consider the segmented trading cycle mechanisms introduced by \citet{papai2003}. 
In these mechanisms, objects are partitioned into different segments in advance, with each segment containing at most one object from each agent; the TTC algorithm is then executed separately for each segment.
We can then assess efficiency for each segment independently, which represents an adaptation of \textsl{coordinatewise efficiency}. It is evident that segmented trading cycle mechanisms satisfy this efficiency property. Moreover, since segmented trading cycle mechanisms are \textsl{individually rational} and \textsl{strategy-proof}, we conclude that this ``segment efficiecny'' notion is compatible with \textsl{individual rationality} and \textsl{strategy-proofness}.

However, it is worth noting that these efficiency properties may not always be suitable for general models. Nevertheless, the key insight here is that ensuring compatibility still requires constraints on improvements. Two constraints, which draw inspiration from \textsl{coordinatewise efficiency} (and \textsl{pairwise coordinatewise efficiency}) and \textsl{pairwise efficiency} (and \textsl{coalitional efficiency}), respectively, are possible as discussed above.
This observation aligns with the following finding in \citet{papai2007}: trade restrictions are necessary to yield positive results.
Similarly, it is apparent that the example we used to demonstrate the impossibility results (Theorems~\ref{thm:imposs} and~\ref{thm:mtpe}) can be extended to general environments. This further emphasizes the challenge of introducing another constraints on improvements to ensure compatibility.

\section{Conclusion}
\label{sec:conclusion}
In this paper, we provide several characterizations of two extensions of Gale's TTC for multiple-type housing markets. Specifically, we present the first characterizations for strict preferences.

Our analysis sheds light on the trade-offs involved in the object allocation problem with multi-unit demands \citep{sonmez1999}.\footnote{\citet{sonmez1999} shows that if at least one agent owns more than one object, then \textsl{individual rationality}, \textsl{strategy-proofness}, and \textsl{Pareto efficiency} are incompatible.}
Given the inherent incompatibility among individual rationality, strategic robustness, and efficiency, our two TTC extensions do not meet the most stringent efficiency criterion, i.e., \textsl{Pareto efficiency}. However, each of them satisfies one attainable efficiency property. Thus, they perform remarkably well according to all three objectives.
Our work (1) suggests a direction for how to reconcile the trade-off between strategic robustness and efficiency; and (2) highlights the potential of TTC in settings where agents have multi-unit demands.

\appendix
\section{Appendix: proofs}\label{section:Appendix1}

\subsection{Lexicographic preferences} \label{appendix:unclear}
Here we introduce a new preferences domain, \textsl{lexicographic preferences}. We use this domain as stepping stone for our results. That is, we first show some results for lexicographic preferences, then extend these results to separable and strict preferences.

Before defining our next preference domain, we introduce some notation. We use a bijective function $\pi_i:T\to T$  to order types according to agent $i$'s ``(subjective) importance,'' with $\pi_i(1)$ being the most important and $\pi_i(m)$ being the least important object type. We denote $\pi_i$ as an ordered list of types, e.g., by $\pi_i=(2,3,1)$, we mean that $\pi_i(1)=2$, $\pi_i(2)=3$, and $\pi_i(3)=1$. So for each agent $i\in N$ and each allotment $x_i=(x_i^1,\ldots,x_i^m)$, by  $x_i^{\pi_i}=(x_i^{\pi_i(1)},\ldots,x_i^{\pi_i(m)})$ we denote the allotment after rearranging it with respect to the \textit{object-type importance order} $\pi_i$.\medskip

\noindent\textbf{(Separably) Lexicographic preferences.} Agent~$i$'s preferences $R_i\in\mathcal{R}$ are \textit{(separably) lexicographic} if they are separable with type-$t$ marginal preferences $(R_i^t)_{t\in T}$ and there exists an object-type importance order $\pi_i:T\to T$ such that for any two allotments $x_i$ and $y_i$,
\begin{equation*}
\begin{aligned}
& \mbox{if }x_i^{\pi_i(1)} P_i^{\pi_i(1)} y_i^{\pi_i(1)}\mbox{ or }\\
&\mbox{if there exists a positive integer }k\leq m-1 \mbox{ such that } \\
& x_i^{\pi_i(1)}=y_i^{\pi_i(1)},\ \ldots,\ x_i^{\pi_i(k)} =y_i^{\pi_i(k)},\mbox{ and }
x_i^{\pi_i(k+1)} P_i^{\pi_i(k+1)} y_i^{\pi_i(k+1)},\\
& \mbox{then } x_i \mathbin{P_i} y_i.
\end{aligned}
\end{equation*}
$\mathcal{R}_l$ denotes the \textit{domain of lexicographic preferences}.\medskip

Note that $R_i\in \mathcal{R}_l$ can be restated as a $m+1$-tuple $R_i=(R_i^1,\ldots,R_i^m,\pi_i)=((R_i^t)_{t\in T},\pi_i)$, or a strict ordering of all objects,\footnote{See \citet[Remark 1]{feng2020} for details.} 
i.e., $R_i$ lists first all $\pi(1)$ objects (according to $R_i^{\pi(1)}$), then all $\pi(2)$ objects (according to $R_i^{\pi(2)}$), and so on. We provide a simple illustration in Example~\ref{example:bTTC}.  \medskip

Note that when $|T|=m>1$, \[\mathcal{R}_l\subsetneq \mathcal{R}_s \subsetneq \mathcal{R}.\] \smallskip

\subsection{Auxiliary properties and results}\label{appendix:auxiliary-results}

We introduce the well-known property of \textit{(Maskin) monotonicity}, i.e., the invariance under monotonic transformations of preferences at a selected allocation.
We formulate \textsl{monotonicity} as well as our first auxiliary result for markets with strict preferences; however, we could use markets with separable preferences and with lexicographic preferences instead.\medskip

Let $i\in N$. Given preferences $R_i\in \mathcal{R}$ and an allotment $x_i$, let $L(x_i,R_i)=\{y_i\in \Pi_{t\in T}O^t\mid x_i\mathbin{R_i}y_i \}$ be the \textit{lower contour set of $R_i$ at $x_i$}. Preference relation $R'_i$ is a \textit{monotonic transformation of $R_i$ at $x_i$} if $L(x_i,R_i)\subseteq L(x_i,R'_i)$.
Similarly, given a preference profile $R\in \mathcal{R}^N$ and an allocation $x$, a preference profile $R'\in \mathcal{R}^N$ is a \textit{monotonic transformation of $R$ at $x$} if for each $i\in N$, $R'_i$ is a monotonic transformation of $R_i$ at $x_i$.\medskip

\noindent \textbf{Monotonicity}: For each $R\in \mathcal{R}$ and for each monotonic transformation $R'\in \mathcal{R}^N$ of $R$ at $f(R)$, $f(R')=f(R)$.\medskip

\textsl{Strategy-proofness} and \textsl{non-bossiness} imply \textsl{monotonicity}.

\begin{lemma}[Lemma~3 in \citet{feng2022}] \ \\
If a mechanism is \textsl{strategy-proof} and \textsl{non-bossy}, then it is \textsl{monotonic}. \label{lemma:mon}
\end{lemma}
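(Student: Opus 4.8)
The plan is to show that \textsl{strategy-proofness} together with \textsl{non-bossiness} forces invariance under monotonic transformations. I would proceed by first handling a single agent's monotonic transformation and then chaining the argument over all agents one at a time, using \textsl{non-bossiness} at each step to propagate the invariance of one agent's allotment to invariance of the entire allocation.

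\emph{Step 1 (single-agent monotonic transformation).} Fix $R\in\mathcal{R}^N$, let $x=f(R)$, pick an agent $i\in N$, and let $R_i'$ be a monotonic transformation of $R_i$ at $x_i$, i.e.\ $L(x_i,R_i)\subseteq L(x_i,R_i')$. Write $y=f(R_i',R_{-i})$. I claim $y_i=x_i$. By \textsl{strategy-proofness} applied at profile $(R_i',R_{-i})$ with agent $i$ deviating to $R_i$, we get $y_i\mathbin{R_i'}x_i$; that is, $x_i\in L(y_i,R_i')$. By \textsl{strategy-proofness} at $R$ with $i$ deviating to $R_i'$, we get $x_i\mathbin{R_i}y_i$, so $y_i\in L(x_i,R_i)\subseteq L(x_i,R_i')$, hence $x_i\mathbin{R_i'}y_i$. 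Combining $y_i\mathbin{R_i'}x_i$ and $x_i\mathbin{R_i'}y_i$ with antisymmetry of $R_i'$ yields $y_i=x_i$. Now apply \textsl{non-bossiness}: since $f_i(R_i',R_{-i})=y_i=x_i=f_i(R_i,R_{-i})$, it follows that $f(R_i',R_{-i})=f(R_i,R_{-i})=x$.

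\emph{Step 2 (chaining over agents).} Let $R'$ be a monotonic transformation of $R$ at $x=f(R)$, meaning each $R_i'$ is a monotonic transformation of $R_i$ at $x_i$. Enumerate $N=\{1,\dots,n\}$ and define intermediate profiles $R^{(0)}=R$ and $R^{(k)}=(R_1',\dots,R_k',R_{k+1},\dots,R_n)$ for $k=1,\dots,n$, so $R^{(n)}=R'$. I show by induction that $f(R^{(k)})=x$ for all $k$. The base case $k=0$ is immediate. For the inductive step, suppose $f(R^{(k-1)})=x$. The profiles $R^{(k-1)}$ and $R^{(k)}$ differ only in agent $k$'s preferences, and $R_k'$ is a monotonic transformation of $R_k$ at $x_k=f_k(R^{(k-1)})$ (here one uses that the lower-contour-set inclusion $L(x_k,R_k)\subseteq L(x_k,R_k')$ is exactly the hypothesis on $R'$, and that agent $k$'s allotment at $R^{(k-1)}$ is still $x_k$). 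By Step 1, $f(R^{(k)})=f(R^{(k-1)})=x$. Taking $k=n$ gives $f(R')=x=f(R)$, which is precisely \textsl{monotonicity}.

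\emph{Main obstacle.} The argument is essentially routine once the two-inequality squeeze in Step 1 is set up correctly; the only subtlety worth care is that the monotonic-transformation hypothesis is stated relative to the \emph{original} allocation $x$ at every intermediate profile, so one must verify at each inductive step that agent $k$'s allotment has not moved away from $x_k$ before invoking the single-agent lemma — which is guaranteed precisely by the inductive hypothesis $f(R^{(k-1)})=x$. There is no genuine difficulty beyond this bookkeeping; the result holds verbatim on $\mathcal{R}_s^N$ and $\mathcal{R}_l^N$ by the same proof, since antisymmetry of preferences is all that Step 1 uses.
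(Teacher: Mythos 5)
Your proof is correct; it is the standard argument for this well-known implication (the two-sided \textsl{strategy-proofness} squeeze plus antisymmetry to pin down agent $i$'s allotment, \textsl{non-bossiness} to pin down the whole allocation, and a one-agent-at-a-time induction in which the inductive hypothesis guarantees the lower-contour-set condition is applied at the correct allotment). The paper itself gives no proof, citing Lemma~3 of \citet{feng2022}, and your argument matches that standard proof, including the observation that it goes through verbatim on the subdomains $\mathcal{R}_s^N$ and $\mathcal{R}_l^N$ since these are Cartesian products of individual preference domains.
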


Next, we list some useful results based on \textsl{strategy-proofness}, \textsl{non-bossiness}, and \textsl{monotonicity}.

\begin{fact}[Fact~1 in \citet{feng2022}] \ \\
Let $x_i$ be an allotment. Let $R_i,\hat{R}_i$ be lexicographic preferences such that (1) $\pi_i=\hat{\pi}_i$ and (2) for each $t\in T$, $\hat{R}^t_i$ is a monotonic transformation of $R^t_i$ at $x_i^t$. Then,
	$\hat{R_i}$ is a monotonic transformation of $R_i$ at $x_i$.
	\label{fact1}
\end{fact}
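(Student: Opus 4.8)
\textbf{Proof proposal for Fact~\ref{fact1}.}

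The plan is to verify the defining inclusion $L(x_i,R_i)\subseteq L(x_i,\hat R_i)$ directly. Fix an arbitrary allotment $y_i$ with $x_i\mathbin{R_i}y_i$; the goal is to show $x_i\mathbin{\hat R_i}y_i$. If $y_i=x_i$ this is immediate by reflexivity, so assume $y_i\neq x_i$, which by antisymmetry of $R_i$ gives $x_i\mathbin{P_i}y_i$. Since $x_i$ and $y_i$ are distinct, there is a smallest index $k\in\{1,\ldots,m\}$ with $x_i^{\pi_i(k)}\neq y_i^{\pi_i(k)}$; write $t:=\pi_i(k)$, so that $x_i^{\pi_i(\ell)}=y_i^{\pi_i(\ell)}$ for all $\ell<k$.

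The first key step is to pin down the comparison at type $t$. Because $R_i$ is lexicographic with importance order $\pi_i$ and $x_i\mathbin{P_i}y_i$, the coordinates strictly below $k$ agree, so the ranking must be decided at coordinate $k$; hence $x_i^{t}\mathbin{P_i^{t}}y_i^{t}$. In particular $y_i^{t}\in L(x_i^{t},R_i^{t})$. Now invoke hypothesis (2): $\hat R_i^{t}$ is a monotonic transformation of $R_i^{t}$ at $x_i^{t}$, i.e.\ $L(x_i^{t},R_i^{t})\subseteq L(x_i^{t},\hat R_i^{t})$, so $y_i^{t}\in L(x_i^{t},\hat R_i^{t})$, that is $x_i^{t}\mathbin{\hat R_i^{t}}y_i^{t}$. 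Since $x_i^{t}\neq y_i^{t}$, antisymmetry of $\hat R_i^{t}$ upgrades this to $x_i^{t}\mathbin{\hat P_i^{t}}y_i^{t}$.

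The final step assembles these facts using hypothesis (1). Since $\hat\pi_i=\pi_i$, the object-type importance order underlying $\hat R_i$ is the same, so $\hat\pi_i(\ell)=\pi_i(\ell)$ for all $\ell$; thus $x_i^{\hat\pi_i(\ell)}=y_i^{\hat\pi_i(\ell)}$ for $\ell<k$ and $x_i^{\hat\pi_i(k)}\mathbin{\hat P_i^{\hat\pi_i(k)}}y_i^{\hat\pi_i(k)}$. Applying the definition of lexicographic preferences to $\hat R_i$ (the case $k=1$ is the first clause, $k\geq 2$ the second) yields $x_i\mathbin{\hat P_i}y_i$, hence $x_i\mathbin{\hat R_i}y_i$ and $y_i\in L(x_i,\hat R_i)$. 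As $y_i$ was arbitrary, $L(x_i,R_i)\subseteq L(x_i,\hat R_i)$, which is exactly the statement that $\hat R_i$ is a monotonic transformation of $R_i$ at $x_i$.

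I do not anticipate a substantive obstacle here: the argument is essentially bookkeeping with the lexicographic structure. The only point requiring mild care is the passage from the weak comparison $x_i^{t}\mathbin{\hat R_i^{t}}y_i^{t}$ delivered by monotonicity of the marginal preferences to the strict comparison needed to trigger the lexicographic tie-break, which is handled by observing that $x_i$ and $y_i$ genuinely differ at type $t$ and invoking antisymmetry of $\hat R_i^{t}$.
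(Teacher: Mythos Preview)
Your proof is correct. The paper does not supply its own proof of this fact---it is simply quoted from \citet{feng2022}---so there is nothing to compare against; your direct verification of the inclusion $L(x_i,R_i)\subseteq L(x_i,\hat R_i)$ via the first disagreeing coordinate in the shared importance order $\pi_i=\hat\pi_i$ is exactly the natural argument, and the care you take in upgrading the weak marginal comparison to a strict one via antisymmetry is appropriate.
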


\begin{fact}
Let $f$ be a \textsl{strategy-proof} and \textsl{non-bossy} mechanism. Let $R\in \mathcal{R}_l^N$, $x\equiv f(R)$, $i\in N$, and $R^*_i\in \mathcal{R}_l$ be preferences that only differ with $R_i$ in the marginal preference of the most important type (type-$t$), i.e., (1) $\pi_i=\pi_i^*$ where $\pi_i^*(t)=1$, and (2) for each $\tau\neq t$, $R_i^{\tau}=R^{*\tau}_i$. 
		
If $f_i^t(R^*_i,R_{-i})=x_i^t$, then $f(R^*_i,R_{-i})=x$. \label{fact2}
	\end{fact}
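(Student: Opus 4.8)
\textbf{Proof plan for Fact~\ref{fact2}.}
The plan is to reduce the statement to an application of \textsl{monotonicity} (Lemma~\ref{lemma:mon}), using the hypothesis $f_i^t(R_i^*,R_{-i})=x_i^t$ to upgrade the marginal-level agreement into a genuine monotonic transformation at the \emph{full} allotment $x_i$. First I would set $y\equiv f(R_i^*,R_{-i})$ and record what is already known: by \textsl{strategy-proofness} applied at both $(R_i,R_{-i})$ (agent $i$ deviating to $R_i^*$) and $(R_i^*,R_{-i})$ (agent $i$ deviating to $R_i$), we have $x_i\mathbin{R_i}y_i$ and $y_i\mathbin{R_i^*}x_i$. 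Since $t$ is the most important type under both $\pi_i$ and $\pi_i^*$, and since the non-$t$ marginals coincide ($R_i^\tau=R_i^{*\tau}$ for $\tau\neq t$), the comparison $y_i\mathbin{R_i^*}x_i$ together with $y_i^t=x_i^t$ (the hypothesis) forces, coordinate by coordinate in the importance order, that in fact $y_i=x_i$; here one uses that lexicographic preferences break ties by the next marginal and that all relevant marginals match. Once $y_i=x_i$, we get $f_i(R_i^*,R_{-i})=x_i=f_i(R)$, so \textsl{non-bossiness} yields $f(R_i^*,R_{-i})=f(R)=x$, which is the claim.

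The alternative, cleaner route I would actually write up avoids deducing $y_i=x_i$ directly and instead invokes Fact~\ref{fact1} plus Lemma~\ref{lemma:mon}. Here the idea is: from the hypothesis and \textsl{strategy-proofness} one shows $y_i=x_i$ as above (this is the only real content), and then observes that $R_i^*$ can be taken to be a monotonic transformation of $R_i$ at $x_i$ — indeed, choose $R_i^{*t}$ to be any monotonic transformation of $R_i^t$ at $x_i^t$ making $x_i^t$ top-ranked, keep the other marginals, and reorder the importance list so $t$ is first; Fact~\ref{fact1} then certifies that $R_i^*$ is a monotonic transformation of $R_i$ at $x_i$, and \textsl{monotonicity} (via Lemma~\ref{lemma:mon}) gives $f(R_i^*,R_{-i})=f(R)=x$ immediately. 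In the write-up I would present whichever of these two packagings matches the surrounding proofs most closely; both rest on the same small lemma.

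The main obstacle — and the only nontrivial step — is the deduction that $f_i(R_i^*,R_{-i})=x_i$ as a \emph{full} bundle, not merely in coordinate $t$. The subtlety is that $R_i$ and $R_i^*$ are two different (lexicographic) preferences, so the two \textsl{strategy-proofness} inequalities live in different orderings; one must carefully exploit that (i) type $t$ is the lex-top type for both, (ii) the agent's $t$-allotment is the same under both profiles (the hypothesis), so the lexicographic comparison ``drops through'' to the remaining types, and (iii) on the remaining types the marginals are literally identical, so the comparisons agree and antisymmetry of the induced full preference pins down $y_i=x_i$. Modulo that argument, \textsl{non-bossiness} (equivalently, \textsl{monotonicity} via Lemma~\ref{lemma:mon} and Fact~\ref{fact1}) does the rest mechanically.
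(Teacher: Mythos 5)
Your first packaging is exactly the paper's argument: two applications of \textsl{strategy-proofness} give $x_i\mathbin{R_i}y_i$ and $y_i\mathbin{R_i^*}x_i$, the hypothesis $y_i^t=x_i^t$ lets the lexicographic comparison cascade through the remaining (identical) marginals to force $y_i=x_i$, and \textsl{non-bossiness} finishes. The second ``monotonicity'' packaging is unnecessary (and slightly off, since $R_i^*$ is given rather than chosen), but since you correctly identify the cascade-plus-non-bossiness route as the real content, the proposal is correct and matches the paper.
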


\begin{proof}[\textbf{Proof}]
	It is without loss of generality to assume that $t=1$ and $\pi_i:1,\ldots,m$.
	Let $y\equiv f(R^*_i,R_{-i})$ and assume $y^1_i=x_i^1 $. By \textsl{strategy-proofness} of $f$, $x_i \mathbin{R_i} y_i$ and $y_i \mathbin{R^*_i}x_i$.
	Since $R_i$ are lexicographic preferences, $x_i \mathbin{R_i} y_i$ implies $x_i^2 \mathbin{R_i^2} y_i^2$.
	Similarly, since $R^*_i$ are lexicographic preferences, $y_i \mathbin{R^*_i}x_i$ implies $y_i^2 \mathbin{R^{*2}_i} x_i^2$. Since $R_i^2=R^{*2}_i$, we find that $x_i^2=y_i^2$. Applying the same argument sequentially for type-$\tau$ marginal preferences with $\tau=3,\ldots,m$ yields $x_i=y_i$. By \textsl{non-bossiness} of $f$, $x=y$.
\end{proof}

\subsection{Proof of Theorem~\ref{thm:cTTC}}\label{appendix:cTTC}
Here we only show the uniqueness.

Let $f:\mathcal{R}^N_l\to X$ be a mechanism satisfying \textsl{individual rationality}, \textsl{strategy-proofness}, and \textsl{coordinatewise efficiency}.

\subsubsection*{A result for restricted preferences}\  
We first consider a restricted domain $\mathcal{R}^N_{\pi}\subsetneq \mathcal{R}^N_{l}$ such that all agents share the same importance order $\pi$. It is without loss of generality to assume that $\pi:1,\ldots,m$.

\begin{proposition}
For each $R\in \mathcal{R}^N_{\pi}$, $f(R)=cTTC(R)$. \label{proposition:cttc1}
\end{proposition}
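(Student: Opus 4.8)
The plan is to induct on the steps of the cTTC algorithm, run simultaneously across all types, but to carry out the inductive step using the structure of lexicographic preferences so that we can ``freeze'' the most important type first. Fix $R\in\mathcal{R}^N_\pi$ with $\pi:1,\ldots,m$, and write $x\equiv cTTC(R)$. For each type $t$, the type-$t$ TTC algorithm on the submarket $(N,e^t,R^t)$ proceeds in rounds; let $N^t_1\supseteq N^t_2\supseteq\cdots$ be the agents removed in round $1,2,\ldots$ of that algorithm. The key observation is that, since all agents share the importance order $\pi$, type $1$ is everybody's most important type, and a lexicographic preference $R_i$ restricted to changing only the type-$1$ coordinate behaves exactly like the marginal preference $R^1_i$. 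I would first establish the claim for type $1$ and then peel it off.

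\textbf{Step 1 (the top type).} I would first argue that $f^1(R)=cTTC^1(R)=TTC^1(e^1,R^1)$. Consider the first trading cycle of the type-$1$ TTC algorithm: a set of agents $C\subseteq N$ each of whom is endowed with his top type-$1$ object among all of $O^1$, cyclically. Because type $1$ is the most important type for every agent, \emph{individual rationality} forces each $i\in C$ to receive a type-$1$ object weakly better (under $R^1_i$) than $o^1_i$; but $o^1_i$ is already $i$'s favorite within the cycle, and these are mutually top — a standard TTC-style argument (as in the Shapley--Scarf proof of Result~\ref{result1}) shows that by \emph{coordinatewise efficiency} applied to type $1$ (which, for separable/lexicographic preferences, just says $f^1(R)$ is Pareto efficient at $R^1$ by Remark~\ref{remark:cef}) together with \emph{individual rationality}, the agents in $C$ must be assigned their cycle objects. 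Then I would use \emph{strategy-proofness} to remove $C$: any agent outside $C$ cannot affect, and is not affected by, the internal assignment of $C$, so one can iterate the argument on $N\setminus C$ with the remaining type-$1$ objects. (More carefully: I would phrase this as the usual induction on TTC rounds for a single Shapley--Scarf market, invoking that $f^1$, viewed as a function of $R^1$ alone is not quite available since $f$ need not be \emph{independent}; so instead I would run the induction directly, using \emph{individual rationality} + Pareto efficiency of $f^1(R)$ at $R^1$, which is exactly what \emph{coordinatewise efficiency} delivers, exactly mirroring Ma's/Svensson's argument.) This yields $f^1(R)=TTC^1(e^1,R^1)$.

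\textbf{Step 2 (peel off type 1 and induct on types).} Having pinned down $f^1(R)=x^1$, I would use Fact~\ref{fact2}: for each agent $i$, replace $R_i$ by $R^*_i\in\mathcal{R}_l$ that keeps the same importance order $\pi$ and the same marginals for types $2,\ldots,m$, but monotonically transforms $R^1_i$ at $x^1_i$ (e.g. push $x^1_i$ to the top of $R^{*1}_i$). By Fact~\ref{fact1}, $R^*_i$ is a monotonic transformation of $R_i$ at $x_i$; doing this for all $i$ and using \emph{monotonicity} (Lemma~\ref{lemma:mon}, since \emph{strategy-proofness} holds — but note we only have \emph{strategy-proofness}, not \emph{non-bossiness}!) — here I must be careful. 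Without \emph{non-bossiness} I cannot invoke Lemma~\ref{lemma:mon} directly, so instead I would use Fact~\ref{fact2} agent-by-agent: since $f^1_i(R^*_i,R_{-i})=x^1_i$ after the transformation (this itself needs a short \emph{strategy-proofness} argument at the type-$1$ level), Fact~\ref{fact2} gives $f(R^*_i,R_{-i})=f(R)$, and iterating over all agents moves us to a profile $R^*$ where every agent has $x^1_i$ on top of type $1$ while keeping all other marginals, with $f(R^*)=f(R)$ and $cTTC(R^*)=cTTC(R)$. In $R^*$, since everyone's type-$1$ allotment is now endowment-like-optimal, the effective market reduces to a multiple-type housing market over types $2,\ldots,m$ (type $1$ is frozen at $x^1$), again with a common importance order $\pi':2,\ldots,m$. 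Applying the induction hypothesis on $m-1$ types (base case $m=1$ being Result~\ref{result1}) gives $f^\tau(R^*)=cTTC^\tau(R^*)$ for $\tau\ge2$, hence $f(R)=f(R^*)=cTTC(R^*)=cTTC(R)$.

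\textbf{Main obstacle.} The delicate point is Step 2: I have only \emph{strategy-proofness}, not \emph{non-bossiness}, so I cannot freely invoke \emph{monotonicity} to do the preference-replacement in one shot. The fix is to lean on Fact~\ref{fact2}, which is tailored precisely for this — it lets a single agent push his top-type allotment to the top of his top-type marginal without changing the whole allocation, \emph{assuming only strategy-proofness and non-bossiness}; wait — Fact~\ref{fact2} itself assumes non-bossiness. So the genuinely careful route is: prove a type-$1$ analogue of the uniqueness argument that does not need non-bossiness (pure \emph{strategy-proofness} + \emph{individual rationality} + Pareto efficiency at $R^1$ suffices for the top type, exactly as in the one-dimensional TTC characterization), and then, to peel off type $1$, I would reconstruct the whole profile's type-$2,\ldots,m$ behavior by another round of the same single-type TTC argument at type $2$ (which is now the ``most important unfrozen type''), iterating type by type rather than agent by agent. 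Each single type is a Shapley--Scarf market where \emph{individual rationality} and Pareto efficiency of $f^t(R)$ at $R^t$ pin down the type-$t$ TTC outcome; the only subtlety is ensuring that \emph{individual rationality} at later types is not compromised, which holds because separability means $o^t_i$ is still $i$'s benchmark for type $t$. I expect writing Step 2 cleanly — making precise that "freezing type 1" genuinely reduces to a lower-dimensional problem without smuggling in non-bossiness — to be where most of the care is needed.
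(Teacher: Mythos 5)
Your Step 1 is essentially the paper's first claim: an induction on the length of the first-step type-$1$ trading cycle, using \emph{individual rationality} (which bites at type $1$ because type $1$ is everyone's most important type), \emph{strategy-proofness}, and \emph{coordinatewise efficiency} to force execution of the cycle. You are also right to reject the Fact~\ref{fact2}/monotonicity shortcut, since \emph{non-bossiness} is not among the hypotheses here.

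The genuine gap is in Step 2, at the sentence ``the only subtlety is ensuring that \emph{individual rationality} at later types is not compromised, which holds because separability means $o^t_i$ is still $i$'s benchmark for type $t$.'' This does not hold automatically, and it is in fact the hardest part of the proof. \emph{Individual rationality} is a statement about bundles: $f_i(R)\mathbin{R_i}e_i$. Under lexicographic preferences with common order $\pi:1,\ldots,m$, any agent who strictly improves at type $1$ has $f_i(R)\mathbin{P_i}e_i$ \emph{regardless} of what he receives at types $2,\ldots,m$; so nothing in \emph{individual rationality} prevents the mechanism from handing him a type-$2$ object strictly worse than $o^2_i$. Your type-$2$ round of the Shapley--Scarf argument therefore has no individual-rationality anchor, and the induction on types cannot start. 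The paper closes this hole with a separate claim (``coordinatewise individual rationality at type $2$'': $f^2_i(R)\mathbin{R^2_i}o^2_i$), whose proof is not a separability observation but a genuine argument: one deviates agent $i$ to a lexicographic preference whose importance order puts type $2$ first (thereby \emph{leaving} the restricted domain $\mathcal{R}^N_\pi$, so the statement must be usable for profiles where only some agents have the common order --- this is why the paper records Fact~\ref{fact:restricted}), invokes \emph{individual rationality} at the new profile to pin down the type-$2$ allotment, uses \emph{strategy-proofness} to compare with the original profile, and then runs another cycle induction to contradict \emph{coordinatewise efficiency}. Only after that can the type-$2$ analogue of Step 1 be carried out (with the new claim replacing \emph{individual rationality}), and then iterated over $t=3,\ldots,m$. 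As written, your proposal assumes exactly the statement that needs proving.
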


Note that Proposition~\ref{proposition:cttc1} is not a characterization result on the restricted domain. Proposition~\ref{proposition:cttc1} states that if a mechanism is defined on the lexicographic domain and satisfies the three properties, then for each profile in 
$\mathcal{R}^N_{\pi}$, this mechanism always selects the cTTC allocation. 

The proof of Proposition~\ref{proposition:cttc1} consists of three claims.\smallskip

First, we show that for each market with restricted preferences, $f$ assigns the cTTC allocation of type-$1$.
\begin{claim}
For each $R\in \mathcal{R}^N_{\pi}$, $f^1(R)=cTTC^1(R)$. \label{claim:cttc1}
\end{claim}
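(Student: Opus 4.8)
\textbf{Proof plan for Claim~\ref{claim:cttc1}.}

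The plan is to run the type-$1$ TTC algorithm step by step and show, inductively on the TTC steps, that $f^1(R)$ agrees with $cTTC^1(R)$ on each cycle that forms. Since all agents in $\mathcal{R}^N_\pi$ rank type $1$ as their most important type, type-$1$ objects dominate preferences lexicographically, so an agent who can secure his most-preferred available type-$1$ object will never trade it away regardless of the other types. This is the structural fact that will let us mimic the classical Ma/Svensson uniqueness argument for TTC in the one-dimensional Shapley--Scarf market, now applied ``coordinatewise'' to type $1$.

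First I would set up the induction. Let $C_1$ be a trading cycle that forms at Step~1 of the type-$1$ TTC algorithm: there are agents $i_1\to i_2\to\cdots\to i_\ell\to i_1$ where each $i_k$ points to $o_{i_{k+1}}^1$, the endowed type-$1$ object of the next agent, and this is $i_k$'s top type-$1$ object overall (since it is Step~1, the pointer is to the globally best type-$1$ object). The key step is to show $f^1_{i_k}(R)=o^1_{i_{k+1}}=cTTC^1_{i_k}(R)$ for each $k$. I would argue this by a contradiction/deviation argument: using \textsl{individual rationality}, each agent $i_k$ must receive a type-$1$ object at least as good as $o^1_{i_k}$ under his marginal preference $R^1_{i_k}$; one then uses \textsl{strategy-proofness} together with \textsl{coordinatewise efficiency} (which, on separable/lexicographic preferences, forces $f^1(R)$ to be Pareto efficient for $R^1$ --- see Remark~\ref{remark:cef}) to pin down that the agents in $C_1$ must be assigned exactly the objects they point to. Concretely: if some $i_k$ did not get $o^1_{i_{k+1}}$, coordinatewise (Pareto) efficiency of $f^1(R)$ at $R^1$ would be violated along the cycle $C_1$, because each agent in $C_1$ weakly prefers his pointed-to object and the cyclic reallocation is a Pareto improvement in type $1$ unless everyone already has it. Then, having fixed the Step-1 cycles, I would pass to the submarket of remaining agents: by a monotonicity/strategy-proofness argument (invoking Lemma~\ref{lemma:mon} and Facts~\ref{fact1}--\ref{fact2}, or directly replicating the standard TTC induction), an agent who received his Step-1 cycle object can be ``removed,'' and the same reasoning applies to Step~2, etc. This yields $f^1(R)=cTTC^1(R)$.

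The main obstacle I anticipate is handling the interaction with the other types. In the one-dimensional setting the deviation arguments are clean because preferences are strict over the whole consumption space; here, an agent's type-$1$ allotment being fixed does not by itself fix the rest, and \textsl{coordinatewise efficiency} only constrains each coordinate separately (for separable preferences). So I must be careful that the deviations I use to invoke \textsl{strategy-proofness} are genuine --- e.g. when I let agent $i_k$ report a monotonic transformation concentrating mass on $o^1_{i_{k+1}}$ in type $1$ while keeping other marginals fixed (legitimate by Fact~\ref{fact1}), I need Fact~\ref{fact2} to conclude that fixing the type-$1$ allotment propagates to fixing the whole allotment, so that \textsl{non-bossiness}-type conclusions are available for the inductive removal of agents. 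The bookkeeping of ``which agents and objects remain after Step~$k$'' and verifying that the restricted domain $\mathcal{R}^N_\pi$ is closed under the preference replacements used in the argument is the part that will require the most care, but it is conceptually routine once the Step-1 argument is in place.
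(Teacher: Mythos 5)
Your outer structure (handle Step-1 cycles of the type-$1$ TTC, then remove them and repeat for Step~2, etc.) matches the paper's, but the core step --- showing that a first-step cycle $C_1$ is actually executed at $f(R)$ --- is argued incorrectly. You claim that if some $i_k\in S_{C_1}$ does not receive $o^1_{i_{k+1}}$, then ``the cyclic reallocation is a Pareto improvement in type $1$,'' contradicting \textsl{coordinatewise efficiency} directly at $R$. This fails for two reasons. First, the cyclic reallocation along $C_1$ need not even be feasible as a modification of $f(R)$: if $o^1_{i_{k+1}}$ is currently held by some agent $j\notin S_{C_1}$, handing it to $i_k$ forces a reassignment outside the cycle, and such agents may be made strictly worse off, so no \textsl{coordinatewise improvement} is obtained. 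Second, and more fundamentally, \textsl{individual rationality} plus Pareto efficiency of $f^1(R)$ at $R^1$ do \emph{not} force the first-step TTC cycles to be executed: already for a three-agent Shapley--Scarf market (e.g.\ $R^1_1:o^1_3,o^1_2,o^1_1$; $R^1_2:o^1_3,o^1_1,o^1_2$; $R^1_3:o^1_1,o^1_2,o^1_3$) the allocation $(o^1_2,o^1_3,o^1_1)$ is individually rational and Pareto efficient yet does not execute the first-step cycle $(1\to o^1_3\to 3\to o^1_1\to 1)$. So no amount of efficiency reasoning at the single profile $R$ can close this step; \textsl{strategy-proofness} must be used in an essential, constructive way, which your sketch gestures at but does not supply.

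What is missing is the paper's actual engine: an induction on the \emph{size} $|S_C|$ of the cycle (not only on TTC steps), combined with a two-stage preference manipulation. One first truncates agent $2$'s type-$1$ marginal preferences so that only his pointed-to object is acceptable besides his endowment; \textsl{individual rationality} and \textsl{strategy-proofness} then pin him to his own $o^1_2$, which blocks agent $1$ from receiving $o^1_2$. One then truncates agent $1$'s marginal preferences to $\{o^1_2,o^1_3,o^1_1\}$ and uses a further deviation that creates a cycle of length $K-1$ --- executed by the induction hypothesis --- to force $f^1_1=o^1_3$ via \textsl{strategy-proofness}. Only after these manipulations are two specific agents cornered into holding each other's strictly preferred type-$1$ objects, at which point a \emph{pairwise} swap of type-$1$ allotments is a genuine \textsl{coordinatewise improvement} and yields the contradiction. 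Your proposal contains neither the induction on cycle length nor this construction, so the central step of the claim is not established.
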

\begin{proof}[\textbf{Proof}]
 
Let $C$ be a first step top trading cycle under $TTC^1$ at $R$ involving a set of agents $S_C\subseteq N$. We first show that $C$ is executed at $f(R)$, i.e., for each $i\in S_C$, $f_i^1(R)=cTTC^1_i(R)$, by induction on $|S_C|$.

\noindent\textbf{\textit{Induction basis.}} $|S_C|=1$. In this case, agent $i\in S_C$ points to his type-$1$ endowed object, i.e., $C=(i\to o_i^1\to i)$. Since preferences are lexicographic, agent $i$ will be strictly worse off if he received any other type-$1$ object.
Thus, by \textsl{individual rationality} of $f$, $C$ must be executed.\smallskip

\noindent\textbf{\textit{Induction hypothesis.}} Let $K\in \{2,\ldots,n\}$. Suppose that $C$ is executed when $|S_C|<K$.\smallskip

\noindent\textbf{\textit{Induction step.}} Let $|S_C|=K$. Without loss of generality, assume that $S_C=\{1,\ldots,K\}$ and $C=(1\to o_2^1\to 2\to o_3^1 \to \ldots \to K\to o_1^1\to 1)$. 

By contradiction, assume that $C$ is not executed. Thus, there is an agent $i\in S_C$ who does not receive $o_{i+1}^1$, i.e., $f_i^{1}(R)\neq o_{i+1}^1$. 
It is without loss of generality to assume that $i=2$. We proceed by contradiction in two steps.

\noindent\textbf{Step~1.}
Let $\hat{R}_2\in \mathcal{R}_{\pi}$ be 
such that for agent $2$ and type-$1$ objects, only $o_{3}^1(=cTTC_2^1(R))$ is acceptable (apart from his type-$1$ endowment), i.e.,

$$\hat{R}^{1}_2:o_{3}^1,o_2^1,\ldots,$$
$$\text{ for each }t\in T\setminus \{1\}:\hat{R}_2^t=R_2^t, \text { and }$$
$$\hat{\pi}_2=\pi:1,\ldots,m.$$

Let $$\hat{R}\equiv (\hat{R}_2,R_{-2}).$$ 
Since $\hat{R}_2\in \mathcal{R}_l$ and $f$ is \textsl{individually rational}, $f_2^{1}(\hat{R})\in \{o_{3}^1,o_2^1\}$. By \textsl{strategy-proofness} of $f$, $f_2^1(R)\neq o_{3}^1$ implies that $f_2^{1}(\hat{R})\neq o_{3}^1$, otherwise instead of $R_2$, agent $2$ has an incentive to misreport $\hat{R}_2$ at $R$. Thus, $f_2^{1}(\hat{R})=o_2^{1}$. 
Thus, agent $1$ cannot receive $o_{2}^1(=cTTC^1_1(R))$ from agent $2$ because it is assigned to agent $2$. 
Overall, we find that 
    \begin{equation}
        f_2^{1}(\hat{R})=o_2^{1}\neq o_3^1 \text{ and } f_{1}^{1}(\hat{R})\neq o_{2}^1. \label{oneesp}
    \end{equation}

\noindent\textbf{Step~2.}    
Let $\hat{\hat{R}}_1\in \mathcal{R}_{\pi}$ be 
such that for agent $1$ and type-$1$ objects, only $o_{2}^1$ and $o_{3}^1$ are acceptable (apart from his type-$1$ endowment), i.e.,

$$\hat{\hat{R}}^{1}_1:o_{2}^1,o_{3}^1,o_1^1,\ldots, \text{ (if $K=2$ then here we have }\hat{\hat{R}}^{1}_1:o_{2}^1,o_1^1, \ldots)$$
$$\text{ for each }t\in T\setminus \{1\}:\hat{\hat{R}}_1^t=R_1^t, \text { and }$$
$$\hat{\hat{\pi}}_1=\pi:1,\ldots,m.$$

Let $$\hat{\hat{R}}\equiv (\hat{\hat{R}}_1,\hat{R}_{-1})=(\hat{\hat{R}}_1,\hat{R}_2,R_3,\ldots,R_n).$$
By \textsl{individual rationality} of $f$, $f_1^{1}(\hat{\hat{R}})\in \{o_{3}^1,o_2^1,o_1^1\}$. By \textsl{strategy-proofness} of $f$, $f_1^1(\hat{R})\neq o_{2}^1$ (see (\ref{oneesp})) implies that $f_1^{1}(\hat{\hat{R}})\neq o_{2}^1$, otherwise instead of $\hat{R}_1(=R_1)$, agent $1$ has an incentive to misreport $\hat{\hat{R}}_1$ at $\hat{R}$. 

We then show that $f_1^{1}(\hat{\hat{R}})=o_3^1$. Let $\tilde{R}^1_1:o_{3}^1,o_1^1,\ldots$ and consider $\tilde{R}_1$ that is obtained from $R_1$ by replacing type-1 marginal preferences $R_1^1$ with type-1 marginal preferences $\tilde{R}_1^1$. That is, $\tilde{R}_1=(\tilde{R}_1^1,R_1^2,\ldots,R_1^m,\pi)$. At $(\tilde{R}_1,\hat{\hat{R}}_{-1}) $, there is a top trading cycle $C'=(1\to o_3^1\to 3\to \ldots \to K\to o_1^1\to 1)$ that only involves $K-1$ agents. Thus, by the induction hypothesis, $C'$ is executed and $f_1^1(\tilde{R}_1,\hat{\hat{R}}_{-1})=o_3^1$. See the figure below for the graphical explanation.

\begin{center}
    \begin{tikzpicture}[scale=0.2]
    \tikzstyle{every node}+=[inner sep=0pt]
    \draw [black] (11.4,-13.6) circle (3);
    \draw (11.4,-13.6) node {$1$};
    \draw [black] (28.8,-13.6) circle (3);
    \draw (28.8,-13.6) node {$o_2^1$};
    \draw [black] (11.4,-37.5) circle (3);
    \draw (11.4,-37.5) node {$K$};
    \draw [black] (29.7,-37.5) circle (3);
    \draw (29.7,-37.5) node {$o_K^1$};
    \draw [black] (46.7,-37.5) circle (3);
    \draw (46.7,-37.5) node {$...$};
    \draw [black] (46.7,-25.4) circle (3);
    \draw (46.7,-25.4) node {$o_3^1$};
    \draw [black] (11.4,-25.4) circle (3);
    \draw (11.4,-25.4) node {$o_1^1$};
    \draw [black] (46.7,-13.6) circle (3);
    \draw (46.7,-13.6) node {$2$};
    \draw [black] (31.8,-13.6) -- (43.7,-13.6);
    \fill [black] (43.7,-13.6) -- (42.9,-13.1) -- (42.9,-14.1);
    \draw [black] (46.7,-16.6) -- (46.7,-22.4);
    \fill [black] (46.7,-22.4) -- (47.2,-21.6) -- (46.2,-21.6);
    \draw (46.7,-19.5) node [right] {$R_2~/~\hat{R}_2$};
    \draw [black] (46.7,-28.4) -- (46.7,-34.5);
    \fill [black] (46.7,-34.5) -- (47.2,-33.7) -- (46.2,-33.7);
    \draw [black] (43.7,-37.5) -- (32.7,-37.5);
    \fill [black] (32.7,-37.5) -- (33.5,-38) -- (33.5,-37);
    \draw [black] (26.7,-37.5) -- (14.4,-37.5);
    \fill [black] (14.4,-37.5) -- (15.2,-38) -- (15.2,-37);
    \draw [black] (11.4,-34.5) -- (11.4,-28.4);
    \fill [black] (11.4,-28.4) -- (10.9,-29.2) -- (11.9,-29.2);
    \draw (11.9,-31.45) node [right] {$R_K$};
    \draw [black] (11.4,-22.4) -- (11.4,-16.6);
    \fill [black] (11.4,-16.6) -- (10.9,-17.4) -- (11.9,-17.4);
    \draw [black] (14.25,-14.55) -- (43.85,-24.45);
    \fill [black] (43.85,-24.45) -- (43.25,-23.72) -- (42.94,-24.67);
    \draw (24.21,-20.2) node [below] {$\tilde{R}_1$};
    \draw [dashed] (14.4,-13.6) -- (25.8,-13.6);
    \fill [black] (25.8,-13.6) -- (25,-13.1) -- (25,-14.1);
    \draw (20.1,-13.7) node [above] {$R_1$};
    \end{tikzpicture}
    \end{center}

Therefore, by \textsl{strategy-proofness} of $f$, $f_1^{1}(\hat{\hat{R}})=o_3^1$, otherwise instead of $\hat{\hat{R}}_1$, agent $1$ has an incentive to misreport $\tilde{R}_1$ at $\hat{\hat{R}}$. 

Moreover, $f_1^{1}(\hat{\hat{R}})=o_3^1$ implies that $f_2^{1}(\hat{\hat{R}})\neq o_3^1$. By \textsl{individual rationality} of $f$, $f_2^{1}(\hat{\hat{R}})=o_2^1$. Overall, we find that 
\begin{equation}
    f_1^{1}(\hat{\hat{R}})=o_3^1 \text{ and } f_2^{1}(\hat{\hat{R}})=o_2^1. \label{queccc}
\end{equation}

However, this equation implies that $f$ is not \textsl{coordinatewise efficient} since agents $1$ and $2$ can be better off by swapping their type-$1$ allotments. That is, for $y\equiv f(\hat{\hat{R}})$, there is a coordinatewise improvement $z$ such that (i) $z_1^1=y_2^1(=o_2^1),z_2^1=y_1^1(=o_3^1)$, and (ii) all others are the same as $y$. Thus, we conclude that $C$ is executed when $|S_C|=K$.\medskip

It suffices to show that $C$ is executed at $f(R)$ because once we have shown that agents who trade at the first step of TTC (of type-$1$) always receive their TTC allotments of type-$1$ under $f$, we can consider agents who trade at the second step of TTC (of type-$1$) by following the same proof arguments as in the first step trading cycles, and so on.
Thus, the proof of Claim~\ref{claim:cttc1} is completed.
\end{proof}
 
Note that at step~1 and step~2 of the proof of Claim~\ref{claim:cttc1}, we only require that agents in $S_C$ have restricted preferences in $\mathcal{R}_{\pi}$, i.e., if $R_{S_C}\in \mathcal{R}_{\pi}^{S_C}$ then for any $R_{-S_C}\in \mathcal{R}_l^{-S_C}$, $f^1_{S_C}(R_{S_C},R_{-S_C})=cTTC^1_{S_C}(R_{S_C},R_{-S_C})$. Therefore, Claim~\ref{claim:cttc1} implies the following fact.

\begin{fact}[Restricted preferences]\ \\
    For each $R\in \mathcal{R}^N_l$, let $\mathbb{C}\equiv\{C_1,C_2,\ldots,C_I\}$ be the set of top trading cycles that are obtained via the TTC algorithm of type-$1$ at $R^1$. Moreover, for each top trading cycle $C_i\in \mathbb{C}$, assume that $C_i$ is executed at step~$s_i$, and without loss of generality, assume that if $i<i'$ then $s_i\leq s_{i'}$. 
    
For each $C_i\in \mathbb{C}$, if all agents in $S_{C_1},S_{C_2},\ldots,S_{C_{i-1}},S_{C_i}$ have restricted preferences, then $C_1,\ldots, C_{i-1},C_i$ are executed, regardless of the preferences of other agents in $C_{i+1},\ldots,C_I$. That is, 
for each $C_i\in \mathbb{C}$, let $S'\equiv \cup_{k=1}^{i} S_{C_k}$. If $R$ is such that for each $j\in S'$, $R_j\in \mathcal{R}_{\pi}$, then 
 $f^1_{S'}(R)=cTTC^1_{S'}(R)$. \label{fact:restricted}
\end{fact}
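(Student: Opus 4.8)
The plan is to obtain Fact~\ref{fact:restricted} as a direct ``localization'' of the proof of Claim~\ref{claim:cttc1}, carried out by induction on the cycle index $i$, mirroring the ``first-step trading cycles, then second-step trading cycles, and so on'' structure already used there. The key observation, which does all the work, is that the only axioms invoked in the proof of Claim~\ref{claim:cttc1} are \textsl{individual rationality}, \textsl{strategy-proofness}, and \textsl{coordinatewise efficiency}, and that every preference profile it constructs ($\hat{R}$, $\hat{\hat{R}}$, and $(\tilde{R}_1,\hat{\hat{R}}_{-1})$) differs from $R$ only in the type-$1$ marginal preferences of agents belonging to the cycle currently being processed, keeping those preferences lexicographic with importance order $\pi$.

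\textbf{Base case.} Take $i$ with $s_i=1$, so that $C_i$ is a first-step top trading cycle of $TTC^1$ at $R^1$. The core of the proof of Claim~\ref{claim:cttc1} is an induction on $|S_C|$ showing that such a cycle is executed by $f$; by the observation above, that induction only perturbs the type-$1$ marginals of agents in $S_C$, and the auxiliary cycle $C'$ it uses lies entirely inside $S_C$ and is determined by the unchanged type-$1$ marginals of the agents of $S_C$. Hence the argument goes through verbatim under the weaker hypothesis that only the agents in $S_{C_i}$ report preferences in $\mathcal{R}_\pi$, the remaining agents' preferences being arbitrary in $\mathcal{R}_l$, giving $f^1_{S_{C_i}}(R)=cTTC^1_{S_{C_i}}(R)$. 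Together with the induction hypothesis on $i$ (covering the earlier first-step cycles $C_1,\dots,C_{i-1}$, if any), this yields that $C_1,\dots,C_i$ are all executed by $f$.

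\textbf{Inductive step.} Suppose the Fact holds for $i-1$. By the induction hypothesis, once the agents in $S_{C_1},\dots,S_{C_{i-1}}$ report preferences in $\mathcal{R}_\pi$ the cycles $C_1,\dots,C_{i-1}$ are executed by $f$, so $f^1_j(R)=cTTC^1_j(R)$ for every $j\in\bigcup_{k<i}S_{C_k}$. Since within each executed cycle $C_k$ the type-$1$ allotments merely permute the endowments $\{o^1_j:j\in S_{C_k}\}$ among the agents of $S_{C_k}$, the objects $\bigcup_{k<i}\{o^1_j:j\in S_{C_k}\}$ are, under $f(R)$, distributed exactly among $\bigcup_{k<i}S_{C_k}$; hence every agent of $S_{C_i}$ receives, under $f$, a type-$1$ object from the residual pool $\{o^1_j:j\in N\setminus\bigcup_{k<i}S_{C_k}\}$. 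One checks that, after deleting the agents of $\bigcup_{k<i}S_{C_k}$, $C_i$ becomes a first-step top trading cycle of the type-$1$ TTC run on the residual agents (deleting only objects that the agents of $S_{C_i}$ rank below the objects they point to inside $C_i$ leaves their top choices unchanged). Thus $C_i$ plays, for $f$, precisely the role that a first-step cycle plays in the proof of Claim~\ref{claim:cttc1}: ``everything is available'' is replaced by ``only the residual pool is available to $S_{C_i}$,'' which is exactly what the induction hypothesis delivers. Rerunning the induction on $|S_{C_i}|$ of that proof, with all constructed profiles differing from $R$ only in the type-$1$ marginals of agents in $S_{C_i}$ and only \textsl{individual rationality}, \textsl{strategy-proofness}, and \textsl{coordinatewise efficiency} invoked, gives $f^1_{S_{C_i}}(R)=cTTC^1_{S_{C_i}}(R)$, completing the induction.

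\textbf{Main obstacle.} The only point requiring care is the reduction in the inductive step: we must ensure that the perturbations of the type-$1$ marginals of $S_{C_i}$-agents do not disturb the execution of $C_1,\dots,C_{i-1}$, so that the induction hypothesis may be re-applied to each perturbed profile (which is what keeps the ``locked objects'' bookkeeping valid). This holds because each perturbation used in the proof of Claim~\ref{claim:cttc1} keeps the perturbed agent's residual-top type-$1$ object --- an object lying inside $C_i$ --- at the top of his marginal, so in the type-$1$ TTC run at the perturbed profile the paths emanating from $S_{C_i}$-agents still lead into $C_i$ rather than into any earlier cycle; since $C_1,\dots,C_{i-1}$ never involve agents of $S_{C_i}$, they survive unchanged (with unchanged execution steps), and by the induction hypothesis $f$ still executes them at the perturbed profile. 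Beyond this bookkeeping the proof introduces no new idea: all the substantive reasoning is already contained in Claim~\ref{claim:cttc1}, and the Fact merely records that this reasoning is confined to the agents of the cycles processed so far.
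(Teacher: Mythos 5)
Your proposal is correct and follows essentially the same route as the paper: the paper's entire justification for this Fact is the one-sentence remark that the proof of Claim~\ref{claim:cttc1} only ever perturbs the type-$1$ marginal preferences of agents inside the cycle currently being processed and only requires those agents to have preferences in $\mathcal{R}_{\pi}$, and your double induction (outer on the cycle index, inner on $|S_{C_i}|$) is a careful unfolding of exactly that remark, including the same ``first-step cycles, then later steps by the same argument'' structure. One small quibble: in your inductive step the parenthetical claim that only objects ranked \emph{below} the $C_i$-targets are deleted is not quite right --- objects removed in earlier TTC steps may be ranked above a target --- but the conclusion stands because the residual pool is contained in the set of objects still present at step $s_i$ and contains every $C_i$-target, so each agent's top choice in the residual pool is still his $C_i$-target.
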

\smallskip

Next, we show that $f$ is ``coordinatewise individually rational'' at type-$2$.
\begin{claim}
For each $R\in \mathcal{R}^N_{\pi}$ and each $i\in N$, $f_i^2(R) \mathbin{R_i^2} o_i^2$. \label{claim:cttc2}
\end{claim}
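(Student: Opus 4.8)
The plan is to show that under any \textsl{individually rational}, \textsl{strategy-proof}, and \textsl{coordinatewise efficient} mechanism $f$ on $\mathcal{R}^N_l$, no agent $i$ can be forced to receive a type-$2$ object strictly worse (according to $R_i^2$) than his endowment $o_i^2$, given that all agents share the importance order $\pi:1,\ldots,m$. The key idea is that for lexicographic preferences, type-$1$ trading is ``settled'' by Claim~\ref{claim:cttc1} (and its strengthening Fact~\ref{fact:restricted}) essentially independently of the lower types, so the residual problem at type-$2$, restricted to agents who have already obtained their type-$1$ allotments, behaves like a housing-market problem at type-$2$ where type-$2$ endowments should still be respected.

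First I would argue by contradiction: suppose there is $R\in\mathcal{R}^N_\pi$ and an agent $i$ with $o_i^2 \mathbin{P_i^2} f_i^2(R)$. By \textsl{individual rationality} of $f$ and the lexicographic structure, agent $i$ must be compensated in the more important type-$1$ coordinate, i.e., $f_i^1(R) \mathbin{P_i^1} o_i^1$, so $i$ genuinely trades at type-$1$. Next I would perform a preference replacement for agent $i$: move to preferences $\hat R_i \in \mathcal{R}_\pi$ that keep $\pi$ and keep the type-$1$ marginal $R_i^1$ (so that, by \textsl{strategy-proofness} together with Fact~\ref{fact:restricted}, agent $i$'s type-$1$ allotment is pinned down — or at least controlled — since type-$1$ trades among restricted-preference agents are robust), while making at type-$2$ only $o_i^2$ acceptable beyond nothing, e.g. $\hat R_i^2: o_i^2,\ f_i^2(R),\ \ldots$, and $\hat R_i^\tau = R_i^\tau$ for $\tau\neq 2$. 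Using \textsl{strategy-proofness} (and the now-familiar two-step argument from Claim~\ref{claim:cttc1}, or Fact~\ref{fact2}), I would deduce that at $\hat R\equiv(\hat R_i, R_{-i})$ agent $i$ still fails to get $o_i^2$ at type-$2$: $f_i^2(\hat R)\neq o_i^2$, hence $f_i^2(\hat R)=f_i^2(R)$ (the only other acceptable type-$2$ object), while $f_i^1(\hat R)=f_i^1(R)$ remains a strict improvement over $o_i^1$. The point of this reduction is that now $f_i^2(\hat R)$ is strictly worse than $o_i^2$ and there is essentially a single ``blocking'' possibility for type-$2$.

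Then the contradiction should come from \textsl{coordinatewise efficiency} at type-$2$: I would track the type-$2$ object $o_i^2$ to whichever agent $j$ holds it at $f(\hat R)$, i.e. $f_j^2(\hat R)=o_i^2$, and show that $j$ and $i$ can be made strictly better off by swapping their type-$2$ allotments only (keeping all other types fixed), contradicting \textsl{coordinatewise efficiency}. For $i$ this swap is a strict improvement by construction of $\hat R_i^2$. The subtlety — and I expect this to be the main obstacle — is ensuring that the swap is also strictly beneficial for $j$ (or, more carefully, iterating: if $j$ does not strictly gain, follow the chain of type-$2$ objects backward). The clean way to handle this is to first apply the result to the agent whose type-$2$ allotment is worst-off relative to his endowment, or to use a minimality/induction argument on the set of agents violating coordinatewise individual rationality at type-$2$: among all such violators pick one so that the cycle of type-$2$ ``desired upgrades'' closes up into a coordinatewise improvement. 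Concretely, one builds a sequence $i=i_0, i_1, i_2,\ldots$ where $i_{k+1}$ is the current holder of $i_k$'s endowed type-$2$ object; since $N$ is finite this sequence cycles, and along the cycle (after suitable lexicographic preference replacements that make each agent's own type-$2$ endowment strictly preferred to his current type-$2$ allotment, while freezing type-$1$ trades via Fact~\ref{fact:restricted}) one obtains a genuine \textsl{coordinatewise improvement} at type-$2$, the desired contradiction. I would carry out the preference-replacement bookkeeping carefully so that each replacement preserves membership in $\mathcal{R}_\pi$ and does not disturb the already-established type-$1$ allocation (this is where Fact~\ref{fact:restricted} and \textsl{strategy-proofness}/\textsl{monotonicity} do the heavy lifting), and the efficiency contradiction then closes the argument.
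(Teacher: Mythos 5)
Your setup is sound: the observation that \textsl{individual rationality} plus lexicographic preferences forces $f_i^1(R)\mathbin{P_i^1}o_i^1$ is correct, and the replacement $\hat R_i$ pinning $f_i^2(\hat R)=f_i^2(R)$ via two-sided \textsl{strategy-proofness} also goes through. But the way you propose to close the argument --- producing a \textsl{coordinatewise improvement} at type~$2$ by following the cycle of type-$2$ holdings back to $i$ --- has a genuine gap. First, reversing that cycle is not a Pareto improvement at type~$2$: the other agents $i_1,i_2,\ldots$ on the cycle may well have traded \emph{up} at type~$2$, so they strictly prefer their current type-$2$ allotments to their own endowments, and handing everyone back his endowment makes them worse off. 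Second, the repair you suggest --- replacing their preferences so that each one's own type-$2$ endowment sits above his current type-$2$ allotment, ``while freezing'' everything else --- is not available here: Theorem~\ref{thm:cTTC} assumes only \textsl{individual rationality}, \textsl{strategy-proofness}, and \textsl{coordinatewise efficiency}, so neither \textsl{non-bossiness} nor the \textsl{monotonicity} of Lemma~\ref{lemma:mon} nor Fact~\ref{fact2} (all of which require \textsl{non-bossiness}) can be invoked. Fact~\ref{fact:restricted} controls only the type-$1$ allocation; after you change another agent's preferences you have no control over how the type-$2$ allocation rearranges, so the cycle you were chasing can simply dissolve.

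The paper avoids this entirely by locating the contradiction at type~$1$, where Claim~\ref{claim:cttc1} and Fact~\ref{fact:restricted} give full control. Concretely: after normalizing to $\hat R_1$ (with $\hat R_1^2:o_1^2,y_1^2,\ldots$ and $\hat R_1^t:y_1^t,o_1^t,\ldots$ for $t\neq 2$), it switches agent $1$'s importance order to $\bar\pi_1:2,1,3,\ldots,m$. \textsl{Individual rationality} then forces $f_1^2(\bar R_1,R_{-1})=o_1^2$, and \textsl{strategy-proofness} forces $f_1^1(\bar R_1,R_{-1})=o_1^1$ (otherwise agent $1$ would profitably report $\bar R_1$ at $(\hat R_1,R_{-1})$, since keeping $y_1^1$ while upgrading type~$2$ to $o_1^2$ is a lexicographic improvement). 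So agent $1$ drops out of the type-$1$ trading cycle $C$ he belonged to, while the remaining members of $C$ --- all of whom have restricted preferences --- still point as before; an induction on $|S_C|$ exactly parallel to Claim~\ref{claim:cttc1} then manufactures two agents who would both strictly gain from swapping type-$1$ objects, contradicting \textsl{coordinatewise efficiency} at type~$1$. If you want to complete your write-up, you should abandon the type-$2$ cycle-reversal and adopt this importance-order-switch argument instead.
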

\begin{proof}[\textbf{Proof}]
By contradiction, assume that there exists $R\in \mathcal{R}^N_{\pi}$ and an agent $i\in N$ such that $o_i^2 \mathbin{P_i^2} f_i^2(R)$.

Let $y\equiv f(R)$. Recall that by Claim~\ref{claim:cttc1}, $y^1=cTTC^1(R)=TTC^1(R^1)$. 
 It is without loss of generality to assume that $i=1$. Since $R_1\in \mathcal{R}_l$ and $f$ is \textsl{individually rational},

\begin{equation}
    y_1^1\neq o_1^1. \label{equ:cttcneq}
\end{equation}

Let $\hat{R}_1\in \mathcal{R}_{\pi}$ be such that

$$\hat{R}_1^2:o_1^2,y_1^2,\ldots,$$
$$\text{ for each }t\in T\setminus\{2\},\hat{R}_1^t:y_1^t,o_1^t,\ldots,\text { and }$$
$$\hat{\pi}_1=\pi:1,\ldots,m.$$

By \textsl{strategy-proofness} of $f$, $f_1(\hat{R}_1,R_{-1})=y_1$. Note that $\hat{\pi}_1=\pi:1,\ldots,m$ and $(\hat{R}_1,R_{-1})\in \mathcal{R}^N_{\pi}$.
By Claim~\ref{claim:cttc1}, $f^1(\hat{R}_1,R_{-1})=cTTC^1(\hat{R}_1,R_{-1})=y^1$.

Let $\bar{R}_1\in \mathcal{R}_l$ be such that 
$\bar{R}_1$ and $\hat{R}_1$ only differ in the importance order, where the orders of type-$1$ and type-$2$ are switched, i.e., 
$$\text{for each }t\in T,\bar{R}_1^t= \hat{R}_1^t,\text{ and}$$
$$\bar{\pi}_1:2,1,3,\ldots,m.$$


By \textsl{individual rationality} of $f$, $f_1^2(\bar{R}_1,R_{-1})=o_1^2$ and hence $f_1^1(\bar{R}_1,R_{-1})\in \{y_1^1,o_1^1\}$. 
Since $R_1$ is lexicographic, any allotment $z_1$ with $z_1^1=y_1^1$ and $z_1^2=o_1^2$ is strictly better than $y_1$ at $R_1$.

By \textsl{strategy-proofness} of $f$, $f_1^1(\bar{R}_1,R_{-1})\neq f_1^1(\hat{R}_1,R_{-1})=y_1^1$; otherwise agent $1$ has an incentive to misreport $\bar{R}_1$ at $(\hat{R}_1,R_{-1})$. Thus,
\begin{equation}
    f_1^1(\bar{R}_1,R_{-1})=o_1^1. \label{equ:cttc2}
\end{equation}

Next, we show that (\ref{equ:cttc2}) contradicts with \textsl{coordinatewise efficiency} of $f$.

let $\ell$ be the step of the TTC algorithm at which agent $1$ receives his type-$1$ object $y_1^1$. Let $C$ be the corresponding top trading cycle that involves agent $1$, i.e., $1\in S_C$.

Note that by Claim~\ref{claim:cttc1} and Fact~\ref{fact:restricted}, all top trading cycles that are obtained before step~$\ell$ are executed at $f(\bar{R}_1,R_{-1})$. Thus, by the definition of $TTC$, we know that for each agent in $S_C$, the object that he pointed at in $C$ is his most preferred type-$1$ object among the unassigned type-$1$ objects, i.e., for each $i\in S_C$, all better type-$1$ objects for him, are assigned to someone else via some top trading cycles that are obtained before step~$\ell$.

Since $y_1^1\neq o_1^1$ (see (\ref{equ:cttcneq})), $|S_C|>1$. We show a contradiction by induction on $|S_C|$. \smallskip

\noindent\textbf{\textit{Induction basis.}} $|S_C|=2$. Without loss of generality, let $C=(1\to o_2^1\to 2\to o_1^1\to 1)$. Since $f_1^1(\bar{R}_1,R_{-1})=o_1^1$ (see (\ref{equ:cttc2})), agent $2$ does not receive his most (feasible) preferred object $o_1^1$. 

Let $R'_2$ be such that only $o_1^1$ is acceptable (apart from his type-$2$ endowment), 
By \textsl{strategy-proofness} of $f$, agent $2$ still does not receive $o_1^1$ at $f(\bar{R}_1,R'_2,R_3,\ldots,R_n)$. However, this allocation is not \textsl{coordinatewise efficient} since there is a \textsl{coordinatewise improvement} such that agent $1$ receives $o_2^1$, agent $2$ receives $o_1^1$, and all others are the same.

The following induction arguments for $K>2$ are similar to the proof of Claim~\ref{claim:cttc1}.

\noindent\textbf{\textit{Induction hypothesis.}} Let $K\in \{2,\ldots,n\}$. Suppose that $C$ is executed when $|S_C|<K$.\smallskip

\noindent\textbf{\textit{Induction step.}} Let $|S_C|=K$. Without loss of generality, assume that $S_C=\{1,\ldots,K\}$ and hence $C=(1\to o_2^1\to 2\to\ldots \to K-1 \to o_K^1\to K\to o_1^1\to 1)$. \smallskip

Recall that by (\ref{equ:cttc2}), agent $1$ receives his type-$1$ endowment and hence agent $K$ does not receive his most (feasible) preferred object $o_1^1$.
Let $\hat{\hat{R}}_K\in \mathcal{R}_{\pi}$ be 
such that for agent $K$ and type-$1$ objects, only $o_{1}^1$ and $o_{2}^1$ are acceptable (apart from his type-$1$ endowment), i.e.,

$$\hat{\hat{R}}_{K}^1:o_{1}^1,o_{2}^1,o_K^1,\ldots,$$
$$\text{ for each }t\in T\setminus \{1\}:\hat{\hat{R}}_K^t=R_K^t, \text { and }$$
$$\hat{\hat{\pi}}_K=\pi:1,\ldots,m.$$

Let $\hat{\hat{R}}\equiv (\hat{\hat{R}}_K,\bar{R}_1,{R}_{N\setminus\{1,K\}})$. By \textsl{individual rationality} of $f$, $f_K^{1}(\hat{\hat{R}})\in \{o_{1}^1,o_2^1,o_K^1\}$. By \textsl{strategy-proofness} of $f$, $f_K^1(\hat{R})\neq o_{1}^1$ (see (\ref{equ:cttc2})) implies that $f_K^{1}(\hat{\hat{R}})\neq o_{1}^1$, otherwise instead of $R_K$, agent $K$ has an incentive to misreport $\hat{\hat{R}}_K$ at $(\bar{R}_1,R_{-1})$. 

We then show that $f_K^{1}(\hat{\hat{R}})=o_2^1$. Let $\tilde{R}^1_K:o_{2}^1,o_K^1,\ldots$ and
consider $\tilde{R}_K$ that is obtained from $R_K$ by replacing type-1 marginal preferences $R_K^1$ with type-$1$ marginal preferences $\tilde{R}_K^1$. That is, $\tilde{R}_K=(\tilde{R}_K^1,R_K^2,\ldots,R_K^m,\pi)$. At $(\tilde{R}_K,\hat{\hat{R}}_{-K})$, there is a top trading cycle $C'=(2\to \ldots \to K\to o_2^1\to 2)$ that only involves $K-1$ agents. Thus, by Fact~\ref{fact:restricted} and the induction hypothesis, $C'$ is executed and $f_K^1(\tilde{R}_K,\hat{\hat{R}}_{-K})=o_2^1$. See the figure below for the graphical explanation.

\begin{center}
    \begin{tikzpicture}[scale=0.2]
    \tikzstyle{every node}+=[inner sep=0pt]
    \draw [black] (11.4,-13.6) circle (3);
    \draw (11.4,-13.6) node {$1$};
    \draw [black] (28.8,-13.6) circle (3);
    \draw (28.8,-13.6) node {$o_2^1$};
    \draw [black] (11.4,-37.5) circle (3);
    \draw (11.4,-37.5) node {$K$};
    \draw [black] (29.7,-37.5) circle (3);
    \draw (29.7,-37.5) node {$o_K^1$};
    \draw [black] (46.7,-37.5) circle (3);
    \draw (46.7,-37.5) node {$...$};
    \draw [black] (11.4,-25.4) circle (3);
    \draw (11.4,-25.4) node {$o_1^1$};
    \draw [black] (46.7,-13.6) circle (3);
    \draw (46.7,-13.6) node {$2$};
    \draw [black] (31.8,-13.6) -- (43.7,-13.6);
    \fill [black] (43.7,-13.6) -- (42.9,-13.1) -- (42.9,-14.1);
    \draw [black] (43.7,-37.5) -- (32.7,-37.5);
    \fill [black] (32.7,-37.5) -- (33.5,-38) -- (33.5,-37);
    \draw [black] (26.7,-37.5) -- (14.4,-37.5);
    \fill [black] (14.4,-37.5) -- (15.2,-38) -- (15.2,-37);
    \draw [black] (11.4,-22.4) -- (11.4,-16.6);
    \fill [black] (11.4,-16.6) -- (10.9,-17.4) -- (11.9,-17.4);
    \draw [black] (14.4,-13.6) -- (25.8,-13.6);
    \fill [black] (25.8,-13.6) -- (25,-13.1) -- (25,-14.1);
    \draw (20.1,-14.1) node [below] {$R_1~/~\bar{R}_1$};
    \draw [black] (13.17,-35.07) -- (27.03,-16.03);
    \fill [black] (27.03,-16.03) -- (26.16,-16.38) -- (26.97,-16.97);
    \draw (20.69,-26.93) node [right] {$\tilde{R}_K$};
    \draw [dashed] (11.4,-34.5) -- (11.4,-28.4);
    \fill [black] (11.4,-28.4) -- (10.9,-29.2) -- (11.9,-29.2);
    \draw (11.4,-31.45) node [left] {$R_K~/~\hat{\hat{R}}_K$};
    \draw [black] (46.7,-16.6) -- (46.7,-34.5);
    \fill [black] (46.7,-34.5) -- (47.2,-33.7) -- (46.2,-33.7);
    \draw (46.2,-25.55) node [left] {$R_2$};
    \end{tikzpicture}
    \end{center}

Therefore, by \textsl{strategy-proofness} of $f$, $f_K^{1}(\hat{\hat{R}})=o_2^1$, otherwise instead of $\hat{\hat{R}}_K$, agent $K$ has an incentive to misreport $\tilde{R}_K$ at $\hat{\hat{R}}$. Moreover, $f_K^{1}(\hat{\hat{R}})=o_2^1$ implies that $f_1^{1}(\hat{\hat{R}})\neq o_2^1$. By \textsl{individual rationality} of $f$, $f_1^{1}(\hat{\hat{R}})=o_1^1$. Overall, we find that 
\begin{equation}
    f_K^{1}(\hat{\hat{R}})=o_2^1 \text{ and } f_1^{1}(\hat{\hat{R}})=o_1^1. 
\end{equation}

However, this equation above implies that $f$ is not \textsl{coordinatewise efficient} since agents $1$ and $K$ can be better off by swapping their type-$1$ allotments. 
\end{proof}

\smallskip

Next, we show that $f$ also selects the cTTC allocation of type-$2$.
\begin{claim}
For each $R\in \mathcal{R}^N_{\pi}$, $f^2(R) =cTTC^2(R)$. \label{claim:cttc3}
\end{claim}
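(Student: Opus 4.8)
The plan is to mimic the structure of the proof of Claim~\ref{claim:cttc1}, but now for type-$2$, exploiting the fact that by Claim~\ref{claim:cttc2} the mechanism $f$ is already known to be ``coordinatewise individually rational'' at type-$2$, i.e., for each $i\in N$, $f_i^2(R)\mathbin{R_i^2}o_i^2$. This weaker form of individual rationality at type-$2$ will play exactly the role that genuine \textsl{individual rationality} played in Claim~\ref{claim:cttc1}: it guarantees that an agent pointing to his own type-$2$ endowment in a singleton trading cycle must keep it, and more generally it pins down the ``outside option'' for each agent in the type-$2$ TTC graph. I would fix $R\in\mathcal{R}^N_\pi$, let $y\equiv f(R)$, and run the TTC algorithm on the type-$2$ submarket $(N,e^2,R^2)$. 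Let $C$ be a first-step top trading cycle of $TTC^2$ involving agents $S_C$; I would show by induction on $|S_C|$ that $C$ is executed at $y$, i.e., for each $i\in S_C$, $f_i^2(R)=cTTC_i^2(R)$.

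For the induction basis $|S_C|=1$, the agent $i$ points to $o_i^2$; since preferences are lexicographic and $f_i^1(R)$ is already fixed to $cTTC_i^1(R)$ (by Claim~\ref{claim:cttc1}), any deviation at type-$2$ strictly worsens agent $i$, so by Claim~\ref{claim:cttc2} agent $i$ keeps $o_i^2$. For the induction step with $|S_C|=K$, I would argue by contradiction that some agent, say agent $2$ in a cycle $C=(1\to o_2^2\to 2\to\cdots\to K\to o_1^2\to 1)$, does not receive $o_3^2$. The key maneuver, as in Claim~\ref{claim:cttc1}, is a two-step preference replacement: first have agent $2$ report $\hat{R}_2$ making only $o_3^2$ acceptable at type-$2$ (above his endowment $o_2^2$), keeping the same importance order $\pi$ and the other marginals; by \textsl{strategy-proofness} and Claim~\ref{claim:cttc2}, $f_2^2(\hat R_2,R_{-2})=o_2^2$. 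Then have agent $1$ report $\hat{\hat R}_1$ making only $o_2^2,o_3^2$ acceptable at type-$2$; by \textsl{strategy-proofness} one rules out $f_1^2=o_2^2$, and by invoking the induction hypothesis along a smaller cycle $(1\to o_3^2\to 3\to\cdots\to K\to o_1^2\to 1)$ one forces $f_1^2(\hat{\hat R})=o_3^2$, hence $f_2^2(\hat{\hat R})=o_2^2$ by Claim~\ref{claim:cttc2}. This yields a \textsl{coordinatewise improvement} at type-$2$ (agents $1$ and $2$ swap), contradicting \textsl{coordinatewise efficiency}. One then peels off executed first-step cycles and iterates over later steps of $TTC^2$, exactly as in Claim~\ref{claim:cttc1}; the analogue of Fact~\ref{fact:restricted} carries over since only agents in the relevant cycles need restricted preferences.

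The main obstacle I anticipate is ensuring that the preference-replacement operations for type-$2$ are consistent with what was already established for type-$1$: when I change an agent's type-$2$ marginal preferences, I must verify that \textsl{strategy-proofness} still lets me conclude the type-$2$ allotment is unchanged, \emph{and} that the type-$1$ allotment is unaffected (so that Claim~\ref{claim:cttc1} and Fact~\ref{fact:restricted} still apply along the way). Because the profiles stay in $\mathcal{R}^N_\pi$ with the common order $\pi:1,\ldots,m$ and type-$1$ is most important, lexicographic structure guarantees that an agent's type-$1$ allotment depends only on type-$1$ marginals (this is essentially Fact~\ref{fact2}), so changing type-$2$ marginals cannot disturb type-$1$ outcomes; I would spell this out carefully. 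A secondary subtlety is that in the embedded smaller cycle used in the induction step, the type-$1$ allotments of the participating agents must already be consistent with $cTTC^1$ so that the cycle genuinely forms in $TTC^2$ after type-$1$ trades are executed — but again this follows from Claim~\ref{claim:cttc1} applied to the modified profile, which still lies in $\mathcal{R}^N_\pi$. Modulo these bookkeeping checks, the argument is a faithful transcription of the type-$1$ proof with \textsl{individual rationality} replaced by Claim~\ref{claim:cttc2}.
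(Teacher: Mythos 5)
Your proposal is correct and follows essentially the same route as the paper: the paper's proof of Claim~\ref{claim:cttc3} is precisely a transcription of the proof of Claim~\ref{claim:cttc1} with \textsl{individual rationality} replaced by the coordinatewise individual rationality of Claim~\ref{claim:cttc2}, using the same two-step preference replacement and the same coordinatewise-improvement contradiction. The consistency checks you flag for type-$1$ are handled implicitly in the paper in exactly the way you describe.
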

\begin{proof}[\textbf{Proof}]
The proof is similar to Claim~\ref{claim:cttc1}, the main difference is that instead of \textsl{individual rationality}, we use Claim~\ref{claim:cttc2}.

Let $C$ be a first step top trading cycle at $cTTC^2(R)$ which consists of a set of agents $S_C\subseteq N$. 

Similar to Claim~\ref{claim:cttc1}, we only show that $C$ is executed at $f(R)$ by induction on $|S_C|$. It suffices to show that $C$ is executed at $f(R)$ because once we have shown that agents who trade at the first step of the TTC algorithm (of type-$2$) always receive their TTC allotments of type-$2$ under $f$, we can consider agents who trade at the second step of the TTC (of type-$2$) by following the same proof arguments as for first step trading cycles, and so on.

\noindent\textbf{\textit{Induction basis.}} $|S_C|=1$. In this case, agent $i\in S_C$ points to his type-$2$ endowed object, i.e., $C=(i\to o_i^2\to i)$. Since preferences are lexicographic, agent $i$ will be strictly worse off if he receives any other type-$2$ object.
Thus, by Claim~\ref{claim:cttc2}, $C$ must be executed.\smallskip

\noindent\textbf{\textit{Induction hypothesis.}} Let $K\in \{2,\ldots,n\}$. Suppose that $C$ is executed when $|S_C|<K$.\smallskip

\noindent\textbf{\textit{Induction step.}} Let $|S_C|=K$. Without loss of generality, assume that $S_C=\{1,\ldots,K\}$ and $C=(1\to o_2^2\to 2\to o_3^2 \to \ldots \to K\to o_1^2\to 1)$. 

By contradiction, assume that $C$ is not executed. Thus, there is an agent $i\in S_C$ who does not receive $o_{i+1}^2$, i.e., $f_i^{2}(R)\neq o_{i+1}^2$. 
It is without loss of generality to assume that $i=2$. We proceed by contradiction in two steps.

\noindent\textbf{Step~1.}
Let $\hat{R}_2\in \mathcal{R}_{\pi}$ be 
such that for agent $2$ and type-$2$ objects, only $o_{3}^2$ is acceptable (apart from his type-$2$ endowment), i.e.,

$$\hat{R}^{2}_2:o_{3}^2,o_2^2,\ldots,$$
$$\text{ for each }t\in T\setminus \{2\}:\hat{R}_2^t=R_2^t, \text { and }$$
$$\hat{\pi}_2=\pi:1,\ldots,m.$$

By Claim~\ref{claim:cttc2}, $f_2^{2}(\hat{R})\in \{o_{3}^2,o_2^2\}$. By \textsl{strategy-proofness} of $f$, $f_2^2(R)\neq o_{3}^1$ implies that $f_2^{2}(\hat{R})\neq o_{3}^2$, otherwise instead of $R_2$, agent $2$ has an incentive to misreport $\hat{R}_2$ at $R$. Thus, $f_2^{2}(\hat{R})=o_2^{2}$. 
Thus, agent $1$ cannot receive $o_{2}^2$ from agent $2$ because it is assigned to agent $2$. 
Overall, we find that 
    \begin{equation}
        f_2^{2}(\hat{R})=o_2^{2}\neq o_3^2 \text{ and } f_{1}^{2}(\hat{R})\neq o_{2}^2. \label{oneespt2}
    \end{equation}

\noindent\textbf{Step~2.}    
Let $\hat{\hat{R}}_1\in \mathcal{R}_{\pi}$ be 
such that for agent $1$ and type-$2$ objects, only $o_{2}^2$ and $o_{3}^2$ are acceptable (apart from his type-$2$ endowment), i.e.,

$$\hat{\hat{R}}^{2}_1:o_{2}^2,o_{3}^2,o_1^2,\ldots, \text{ (if $K=2$ then here we have }\hat{\hat{R}}^{2}_1:o_{2}^2,o_1^2, \ldots)$$
$$\text{ for each }t\in T\setminus \{1\}:\hat{\hat{R}}_1^t=R_1^t, \text { and }$$
$$\hat{\hat{\pi}}_1=\pi:1,\ldots,m.$$

Let $\hat{\hat{R}}\equiv (\hat{\hat{R}}_1,\hat{R}_{-1})=(\hat{\hat{R}}_1,\hat{R}_2,R_3,\ldots,R_n)$. By Claim~\ref{claim:cttc2}, $f_1^{2}(\hat{\hat{R}})\in \{o_{3}^2,o_2^2,o_1^2\}$. By \textsl{strategy-proofness} of $f$, $f_1^2(\hat{R})\neq o_{2}^2$ (see (\ref{oneespt2})) implies that $f_1^{2}(\hat{\hat{R}})\neq o_{2}^2$, otherwise instead of $\hat{R}_1(=R_1)$, agent $1$ has an incentive to misreport $\hat{\hat{R}}_1$ at $\hat{R}$. 

We then show that $f_1^{2}(\hat{\hat{R}})=o_3^2$. To see it, consider $\tilde{R}_1^2:o_3^2,o_1^2,\ldots$ and $\tilde{R}_1=(R_1^1,\tilde{R}_1^2,R_1^3,\ldots,R_1^m,\pi)$. At $(\tilde{R}_1,\hat{\hat{R}}_{-1})$, there is a top trading cycle $C'=(1\to o_3^2\to 3\to \ldots \to K\to o_1^2\to 1)$ that only involves $K-1$ agents. Thus, by the induction hypothesis, $C'$ is executed and $f_1^2(\tilde{R}_1,\hat{\hat{R}}_{-1})=o_3^2$. Therefore, by \textsl{strategy-proofness} of $f$, $f_1^{2}(\hat{\hat{R}})=o_3^2$, otherwise instead of $\hat{\hat{R}}_1$, agent $1$ has an incentive to misreport $\tilde{R}_1$ at $\hat{\hat{R}}$. Moreover, $f_1^{2}(\hat{\hat{R}})=o_3^2$ implies that $f_2^{2}(\hat{\hat{R}})\neq o_3^2$. By 
Claim~\ref{claim:cttc2}, $f_2^{2}(\hat{\hat{R}})=o_2^2$. Overall, we find that 
\begin{equation}
    f_1^{2}(\hat{\hat{R}})=o_3^2 \text{ and } f_2^{2}(\hat{\hat{R}})=o_2^2. 
\end{equation}

However, this equation above implies that $f$ is not \textsl{coordinatewise efficient} since agents $1$ and $2$ can be better off by swapping their type-$2$ allotments.

Thus, the proof of Claim~\ref{claim:cttc3} is completed.
\end{proof}
 
By Claim~\ref{claim:cttc1} and Claim~\ref{claim:cttc3}, we know that for each $R\in \mathcal{R}^N_{\pi}$, the allocations of type-$1$ and type-$2$ under $f$ are the same as $cTTC$ allocation, i.e., $f^1(R)=cTTC^1(R)$ and $f^2(R)=cTTC^2(R)$. By applying similar arguments, we can also show that $f^3(R)=cTTC^3(R)$ and so on. Thus, we conclude that for each $R\in \mathcal{R}^N_{\pi}$, and each $t\in T$, $f^t(R)=cTTC^t(R)$, which completes the proof of Proposition~\ref{proposition:cttc1}.

\subsubsection*{Proof of Theorem~\ref{thm:cTTC}}\ \\
The proof of Theorem~\ref{thm:cTTC} will be shown by extending Proposition~\ref{proposition:cttc1} to the domain of separable preference profiles.

Let $\bar{S}\subseteq N$ and $R $ be such that only agents in $\bar{S}$ do no have restricted (but separable) preferences, i.e., $R_{-\bar{S}}\in \mathcal{R}_{\pi}^{-\bar{S}}$ and for each $i\in\bar{S}$, $R_i\in \mathcal{R}_s\setminus \mathcal{R}_{\pi}$.
We show that $f(R)=cTTC(R)$ by induction on $|\bar{S}|$. \smallskip

We first consider the case that only one agent $i$ does not have restricted (but separable) preferences, i.e., $\bar{S}=\{i\}$. We will show that $f$ still selects the cTTC allocation.

\begin{lemma}
    For each $R\in \mathcal{R}^N_s$, each $i\in N$ with $R_i\in\mathcal{R}_s\setminus \mathcal{R}_{\pi}$ and $R_{-i} \in \mathcal{R}_{\pi}^{-i}$, $f(R)=cTTC(R)$. \label{lemma:basis}
\end{lemma}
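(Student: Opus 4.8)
\textbf{Proof proposal for Lemma~\ref{lemma:basis}.}

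The plan is to exploit \textsl{monotonicity} (available via Lemma~\ref{lemma:mon}, since $f$ is \textsl{strategy-proof} and, as we will need, \textsl{non-bossy}---wait, actually here $f$ is only assumed \textsl{strategy-proof}, not \textsl{non-bossy}, so I should rely only on \textsl{strategy-proofness} plus \textsl{coordinatewise efficiency} plus \textsl{individual rationality}, together with Proposition~\ref{proposition:cttc1} which handles the all-restricted case). The key idea is to start from the restricted profile $(\hat R_i, R_{-i})$ where agent $i$'s true separable preferences $R_i$ are replaced by a carefully chosen lexicographic preference $\hat R_i\in\mathcal{R}_\pi$ that agrees with $R_i$ in a way that pins down agent $i$'s allotment, and then move from $\hat R_i$ back to $R_i$ using \textsl{strategy-proofness} in both directions. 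Concretely, let $x\equiv cTTC(R)$ be the target allocation; note $x$ is well defined because cTTC only requires separability. I would first argue that $x$ is \textsl{individually rational} at $R$ and that $x^t=cTTC^t(R^t)$ for each type $t$, so that $x$ is reachable.

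The main construction: pick $\hat R_i\in\mathcal{R}_\pi$ (same importance order $\pi:1,\dots,m$ as the other agents) whose type-$t$ marginal preferences $\hat R_i^t$ are monotonic transformations of $R_i^t$ at $x_i^t$ for every $t$---for instance, $\hat R_i^t: x_i^t, o_i^t, \ldots$ placing $x_i^t$ on top and $o_i^t$ second. Then $(\hat R_i, R_{-i})\in\mathcal{R}_\pi^N$, so Proposition~\ref{proposition:cttc1} gives $f(\hat R_i, R_{-i})=cTTC(\hat R_i, R_{-i})$. I claim this equals $x$: the TTC algorithm of each type is unaffected by replacing agent $i$'s type-$t$ marginal preference with one that still ranks $x_i^t = cTTC_i^t(R^t)$ at the top among the objects available when $i$ trades (this is exactly the kind of invariance used in Fact~\ref{fact:restricted}), so $cTTC^t(\hat R_i, R_{-i})=cTTC^t(R)=x^t$ for every $t$, hence $f(\hat R_i, R_{-i})=x$. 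Now I transfer this to $R$. By \textsl{strategy-proofness} of $f$ applied to agent $i$ deviating from $R_i$ to $\hat R_i$: $f_i(R) \mathbin{R_i} f_i(\hat R_i, R_{-i}) = x_i$. For the reverse inequality I use that $\hat R_i$ is a monotonic transformation of $R_i$ at $x_i$ (by Fact~\ref{fact1}, since $\pi$ is preserved and each $\hat R_i^t$ is a monotonic transformation of $R_i^t$ at $x_i^t$); combined with \textsl{strategy-proofness} applied to agent $i$ deviating from $\hat R_i$ to $R_i$, $f_i(\hat R_i,R_{-i}) \mathbin{\hat R_i} f_i(R)$, i.e. $x_i \mathbin{\hat R_i} f_i(R)$. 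Since $L(x_i,R_i)\subseteq L(x_i,\hat R_i)$, the only allotment $w$ with $w \mathbin{R_i} x_i$ \emph{and} $x_i \mathbin{\hat R_i} w$ is $w=x_i$ itself (anything strictly worse than $x_i$ under $R_i$ stays weakly worse under $\hat R_i$, and anything strictly better than $x_i$ under $R_i$ is ruled out by the first inequality). Hence $f_i(R)=x_i$.

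It remains to pin down the other agents' allotments. This is where \textsl{coordinatewise efficiency} and \textsl{individual rationality} do the work, type by type. Having fixed $f_i^t(R)=x_i^t = cTTC_i^t(R^t)$ for every $t$, I argue $f^t(R)=cTTC^t(R^t)$ for each $t$ by re-running the Claim~\ref{claim:cttc1} / Claim~\ref{claim:cttc3} induction on trading cycles of type $t$: all agents in $N\setminus\{i\}$ have restricted preferences, agent $i$'s type-$t$ allotment is already known to be his cTTC object, and \textsl{coordinatewise efficiency} forbids any two remaining agents from both improving by a swap within type $t$; an induction on cycle length (identical in structure to the induction steps already carried out in the proof of Proposition~\ref{proposition:cttc1}, with the minor modification that agent $i$'s cycle membership is handled by the already-established value $f_i^t(R)=x_i^t$) forces every type-$t$ trading cycle of $TTC^t(R^t)$ to be executed. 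Doing this for all $t$ yields $f(R)=cTTC(R)$. The main obstacle is the last step: verifying that the cycle-by-cycle induction of Claims~\ref{claim:cttc1}--\ref{claim:cttc3} still goes through when exactly one agent has a genuinely separable (non-lexicographic) preference---one must check that agent $i$'s separable preference never gives him an incentive to disrupt a type-$t$ cycle in a way that \textsl{strategy-proofness} cannot already preclude, which is precisely why the construction above first nails down $f_i(R)=x_i$ \emph{before} invoking \textsl{coordinatewise efficiency} for the other agents.
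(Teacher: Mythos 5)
Your overall strategy --- use Proposition~\ref{proposition:cttc1} on an auxiliary restricted profile to locate agent $i$'s cTTC allotment, pin down $f_i(R)$ first, and then re-run the cycle-by-cycle induction for the remaining restricted-preference agents --- is the same skeleton as the paper's proof. However, the step where you conclude $f_i(R)=x_i$ contains a genuine error. With $\hat R_i^t: x_i^t, o_i^t,\ldots$ for every $t$ and a lexicographic importance order, the bundle $x_i$ is the unique $\hat R_i$-maximal allotment, so the inequality $x_i \mathbin{\hat R_i} f_i(R)$ obtained from \textsl{strategy-proofness} at $(\hat R_i,R_{-i})$ is vacuously true and carries no information. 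The only substantive conclusion from your two deviations is $f_i(R)\mathbin{R_i} x_i$. Your parenthetical claim that ``anything strictly better than $x_i$ under $R_i$ is ruled out by the first inequality'' is backwards: the first inequality says $f_i(R)$ is weakly \emph{better} than $x_i$ under $R_i$, so it permits, rather than excludes, $f_i(R)\mathbin{P_i} x_i$. The sandwich argument you are invoking would need $L(x_i,\hat R_i)\subseteq L(x_i,R_i)$, i.e., $R_i$ to be a monotonic transformation of $\hat R_i$ at $x_i$ --- the reverse of the inclusion you actually have, and it fails here precisely because $\hat R_i$ tops $x_i$.

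Ruling out $f_i(R)\mathbin{P_i} x_i$ is the real content of this lemma and cannot be skipped: since $R_i$ is separable but not lexicographic, $f$ could in principle hand agent $i$ a bundle that is worse than $x_i$ in one coordinate but better in another and strictly preferred overall. The paper's Claim~\ref{claim:cttc4} closes this by contradiction: if $f_i(R)\neq x_i$, then separability plus \textsl{strategy-proofness} (deviating to a lexicographic preference with the same marginals, where Proposition~\ref{proposition:cttc1} applies) force the existence of a type $\tau$ with $f_i^\tau(R)\mathbin{P_i^\tau} x_i^\tau \mathbin{R_i^\tau} o_i^\tau$; one then builds a restricted profile that tops $f_i^t(R)$ and seconds $o_i^t$ in every type, uses \textsl{strategy-proofness} of $f$ together with Proposition~\ref{proposition:cttc1} to show that $f$ must still give agent $i$ the object $f_i^\tau(R)$ there, and finally uses \textsl{strategy-proofness} of the single-type TTC to show that TTC would instead assign $o_i^\tau$ --- a contradiction. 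You need an argument of this kind, exploiting the type-wise structure and the incentive properties of the underlying TTC, not the two-sided deviation you describe. The remainder of your plan (the induction for the other agents using \textsl{individual rationality} and \textsl{coordinatewise efficiency}) matches Claims~\ref{claim:cttc5}--\ref{claim:cttc7} and is sound once $f_i(R)=x_i$ is correctly established.
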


The proof of Lemma~\ref{lemma:basis} consists of four claims. \smallskip

It is without loss of generality to assume that $i=1$. Thus, $R_1\in \mathcal{R}_s\setminus \mathcal{R}_{\pi}$.
 Let $y\equiv f(R)$ and $x\equiv cTTC(R)$. 

We first show that agent $1$ still receives his cTTC allocation at $R$, i.e., $y_1=x_1$. 

\begin{claim}
$y_1=x_1$. \label{claim:cttc4}
\end{claim}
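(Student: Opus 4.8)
The plan is to prove Claim~\ref{claim:cttc4} by transporting the conclusion of Proposition~\ref{proposition:cttc1} across a single auxiliary report for agent~$1$, and then closing the remaining gap coordinate by coordinate with \textsl{coordinatewise efficiency}. Throughout, WLOG $i=1$, and we write $y\equiv f(R)$ and $x\equiv cTTC(R)$; since $cTTC$ depends only on marginal preferences, $x^t=TTC^t(R^t)$ and $x_1^t=TTC_1^t(R^t)$ for every $t\in T$.

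First I would introduce the lexicographic preference $\hat R_1\in\mathcal{R}_\pi$ with importance order $\pi$ and with the same marginals as $R_1$, i.e.\ $\hat R_1^t=R_1^t$ for all $t$. Since $R_{-1}\in\mathcal{R}_\pi^{-1}$, the profile $(\hat R_1,R_{-1})$ lies in $\mathcal{R}^N_\pi$, so Proposition~\ref{proposition:cttc1} gives $f(\hat R_1,R_{-1})=cTTC(\hat R_1,R_{-1})$; and because $\hat R_1$ and $R_1$ have the same marginals, $cTTC(\hat R_1,R_{-1})=cTTC(R)=x$. Hence $f_1(\hat R_1,R_{-1})=x_1$. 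Applying \textsl{strategy-proofness} of $f$ to the pair $\{R_1,\hat R_1\}$ in both directions yields
\[
y_1\mathbin{R_1}x_1 \qquad\text{and}\qquad x_1\mathbin{\hat R_1}y_1 .
\]
If $y_1=x_1$ we are done; otherwise $x_1\mathbin{\hat P_1}y_1$, and (taking $\pi:1,\dots,m$ WLOG) lexicographicity of $\hat R_1$ produces a smallest index $k$ with $x_1^1=y_1^1,\dots,x_1^{k-1}=y_1^{k-1}$ and $x_1^{k}\mathbin{P_1^{k}}y_1^{k}$.

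The core of the argument is then to contradict \textsl{coordinatewise efficiency} of $f$ at $R$ by exhibiting a \textsl{coordinatewise improvement} of $y$ that reshuffles only type-$k$ objects. I would work in the type-$k$ submarket $(N,e^k,R^k)$; its $TTC^k$ cycle decomposition is fixed (it depends only on $R^k$, not on which of $R_1,\hat R_1$ agent~$1$ reports). Let $C$ be the $TTC^k$ cycle containing agent~$1$; every cycle removed before $C$ involves only agents of $N\setminus\{1\}$, all of whom have restricted preferences, so the type-$k$ analogue of Fact~\ref{fact:restricted} (obtained by rerunning the arguments behind Claims~\ref{claim:cttc1} and \ref{claim:cttc3}, which use only \textsl{individual rationality}, \textsl{strategy-proofness}, and \textsl{coordinatewise efficiency} of $f$ on $\mathcal{R}^N_s$) forces those cycles to be executed at $y$. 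Inside $C$, agent~$1$ points to $x_1^k$, his most preferred type-$k$ object among those still available at that step; since $x_1^{k}\mathbin{P_1^{k}}y_1^{k}$ we have $y_1^k\neq x_1^k$, i.e.\ $C$ is not executed at $y$. Mimicking the induction on $|S_C|$ from the proofs of Claims~\ref{claim:cttc1}--\ref{claim:cttc3}, with agent~$1$ playing the role of the agent that fails to receive the object it points to, I would reduce to a configuration in which two agents of $C$ strictly want to swap their type-$k$ allotments; for agent~$1$ the gain is strict at the bundle level because only coordinate~$k$ changes and $x_1^{k}\mathbin{P_1^{k}}y_1^{k}$, so separability of $R_1$ (strict version) upgrades the marginal strict gain to a bundle strict gain, while for the other (restricted) agent it is strict exactly as in those claims. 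This \textsl{coordinatewise improvement} contradicts \textsl{coordinatewise efficiency}, so $y_1^k=x_1^k$ after all; hence $y_1=x_1$.

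The main obstacle is the last step: faithfully re-running the Claims~\ref{claim:cttc1}--\ref{claim:cttc3} cycle-size induction when one agent ($1$) carries a general separable preference rather than a restricted one --- in particular checking that all the auxiliary reports constructed there stay inside $\mathcal{R}_s$ (for agent~$1$ one may keep the type-$k$ marginal $R_1^k$ and restricted marginals elsewhere, so the reports are in $\mathcal{R}_\pi$ and Proposition~\ref{proposition:cttc1} still applies), and that the ``restricted cycles are executed'' facts used as the induction base carry over unchanged. A secondary point to verify carefully is that separability, as defined in the paper, indeed turns a single-coordinate strict improvement into a bundle strict improvement; this holds because $R_1$ is an antisymmetric linear order, so a weak bundle improvement between distinct bundles is strict.
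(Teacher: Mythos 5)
Your opening moves coincide with the paper's: introduce a lexicographic report $\hat R_1\in\mathcal{R}_\pi$ with the same marginals as $R_1$, invoke Proposition~\ref{proposition:cttc1} to get $f_1(\hat R_1,R_{-1})=x_1$, and use \textsl{strategy-proofness} in both directions. From there, however, you pick the ``wrong'' coordinate and the argument develops a genuine gap. You focus on the smallest $k$ with $x_1^{k}\mathbin{P_1^{k}}y_1^{k}$ and try to contradict \textsl{coordinatewise efficiency} by re-running the cycle-size induction of Claims~\ref{claim:cttc1}--\ref{claim:cttc3} in the type-$k$ submarket at a profile where agent~$1$ has a general separable preference. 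But the facts you need there --- that type-$k$ cycles preceding agent~$1$'s are executed, and the type-$k$ analogue of ``coordinatewise individual rationality'' (Claim~\ref{claim:cttc2}) --- are exactly the content of Claims~\ref{claim:cttc5}--\ref{claim:cttc7}, which the paper proves \emph{after} and \emph{using} Claim~\ref{claim:cttc4} (``since agent~$1$ still receives his cTTC allocation, we only need to show that agents who still have restricted preferences\dots''). As written, your plan is circular, or at best requires a delicate joint induction over types and cycle sizes that you do not carry out; the sentence ``I would reduce to a configuration in which two agents of $C$ strictly want to swap'' is precisely where the proof has to live, and it is left as a sketch. Note also that the blocking pair cannot be exhibited at $R$ itself; in Claims~\ref{claim:cttc1}--\ref{claim:cttc3} it only appears after a chain of auxiliary misreports, and you would have to verify that each of those auxiliary profiles still falls under an already-established induction hypothesis.

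The paper's proof avoids all of this. It uses the opposite coordinate: by \textsl{strategy-proofness} there is a type $\tau$ with $y_1^\tau\mathbin{P_1^\tau}x_1^\tau\mathbin{R_1^\tau}o_1^\tau$. It then lets agent~$1$ report the lexicographic preference that ranks $y_1^t$ first and $o_1^t$ second in \emph{every} type; \textsl{strategy-proofness} forces $f_1$ at this profile to equal $y_1$, while Proposition~\ref{proposition:cttc1} forces it to equal the cTTC allotment there. Finally, \textsl{individual rationality} and \textsl{strategy-proofness} of the one-dimensional TTC in the type-$\tau$ submarket pin that cTTC allotment's $\tau$-component to $o_1^\tau\neq y_1^\tau$, a contradiction --- no new cycle induction and no appeal to \textsl{coordinatewise efficiency} beyond what is already packaged in Proposition~\ref{proposition:cttc1}. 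You should either adopt this route or substantially restructure your induction to break the circularity before the argument can be considered complete.
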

\begin{proof}[\textbf{Proof}]
By contradiction, suppose that $y_1\neq x_1$.

Let $\bar{R}_1\in \mathcal{R}_{\pi}$ be such that $\bar{R}_1$ and $R_1$ share the same marginal preferences, i.e., for each $t\in T$, $\bar{R}_1^t=R_1^t$. Note that $(\bar{R}_1,R_{-1})\in \mathcal{R}_{\pi}^N $ and hence by Proposition~\ref{proposition:cttc1}, $f(\bar{R}_1,R_{-1})=cTTC(R)=x$.

Note that if for each $t\in T$, $x_1^t \mathbin{R_1^t}y_1^t$, then
$x_1 \mathbin{P_1}y_1$ as $x_1\neq y_1$. However, this implies that
agent $1$ has an incentive to misreport $\bar{R}_1$ at $R$.
Thus, by \textsl{strategy-proofness} of $f$, there exists one type $\tau\in T$ such that $y_1^\tau\mathbin{P_1^\tau}x_1^\tau$. 

By the definition of $cTTC$, $x_1^\tau \mathbin{R_1^\tau}o_1^\tau$ and hence $y_1^\tau \neq o_1^t$. Overall, we have
\begin{equation}
    y_1^\tau\mathbin{P_1^\tau}x_1^\tau \mathbin{R_1^\tau}o_1^\tau.  \label{samewhy}
\end{equation}

Let $\hat{R}_1\in \mathcal{R}_{\pi}$ be such that for each type $t\in T$, agent $1$ positions $y_1^t$ first and $o_1^t$ second, i.e.,

$$\text{ for each }t\in T:\hat{R}_1^t:y_1^t,o_1^t,\ldots,\text{ and}$$
$$\hat{\pi}_1=\pi:1,\ldots,m.$$

Let $$\hat{R}\equiv (\hat{R}_1,R_{-1}).$$

By \textsl{strategy-proofness} of $f$, $f_1(\hat{R})=f_1(R)=y_1$; otherwise agent $1$ has an incentive to misreport $R_1$ at $\hat{R}$. Since $\hat{R}\in \mathcal{R}_{\pi}^N$, by Proposition~\ref{proposition:cttc1}, $f(\hat{R})=cTTC(\hat{R})$. In particular, $y_1^\tau=cTTC_1^\tau(\hat{R} )$.

Next, to obtain the contradiction, we show that $cTTC_1^\tau(\hat{R} )=o_1^\tau$. By the definition of $cTTC$, 
\begin{equation}
    cTTC_1^\tau(\hat{R} )=TTC_1^\tau(\hat{R}^\tau)\in \{y_1^\tau,o_1^\tau\}. \label{equation:claim4}
\end{equation}

Recall that $cTTC^\tau(R)=TTC^\tau(R^\tau)=x^\tau$ and $y_1^\tau \mathbin{P_1^\tau} x_1^\tau  \mathbin{R_1^\tau} o_1^\tau$ (see (\ref{samewhy})). 
Thus, by \textsl{strategy-proofness} of $TTC$, $x_1^\tau =TTC_1^\tau(R^\tau )\mathbin{R_1^\tau} TTC_1^\tau(\hat{R}^\tau )$. Together with (\ref{equation:claim4}), we conclude that $ TTC_1^\tau(\hat{R}^\tau )  =o_1^\tau$. It
implies that $cTTC_1^\tau(\hat{R}_1,R_{-1} )=o_1^\tau\neq y_1^\tau$.
\end{proof} 

Note that Claim~\ref{claim:cttc4} implies that for each $R=(R_1,R_{-1})\in (\mathcal{R}_s\setminus{\mathcal{R}_{\pi}}) \times \mathcal{R}_{\pi}^{N\setminus\{1\}}$, $f_1(R)=cTTC_1(R)$. 

Next, we show that $y=x$ by applying similar arguments in Claims~\ref{claim:cttc1}, \ref{claim:cttc2}, and \ref{claim:cttc3}.

\begin{claim}
For each $R=(R_1,R_{-1})\in (\mathcal{R}_s\setminus{\mathcal{R}_{\pi}}) \times \mathcal{R}_{\pi}^{N\setminus\{1\}}$, $f^1(R)=cTTC^1(R)$.  \label{claim:cttc5}
\end{claim}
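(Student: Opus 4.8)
The plan is to re-run the cycle–unravelling induction of Claim~\ref{claim:cttc1}, now on the profile $R=(R_1,R_{-1})$ in which only agent~$1$ fails to have restricted preferences, using Claim~\ref{claim:cttc4} as the device that pins agent~$1$'s behaviour throughout. Write $x\equiv cTTC(R)$; Claim~\ref{claim:cttc4} already gives $f_1(R)=x_1$, hence in particular $f_1^1(R)=cTTC_1^1(R)$. As in Claim~\ref{claim:cttc1}, it suffices to prove that every first–step top trading cycle $C$ of $TTC^1$ at $R^1$ is executed at $f(R)$: once that is done, Fact~\ref{fact:restricted} lets us remove the executed first–step cycles and recurse on the later steps of $TTC^1$ (the remaining agents there are, apart from possibly agent~$1$, restricted, and agent~$1$ is still pinned by Claim~\ref{claim:cttc4}), exactly as at the end of Claim~\ref{claim:cttc1}. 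So the whole argument is an induction on $|S_C|$ together with the outer recursion on TTC steps, with the single new ingredient being how agent~$1$ is treated.

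For a first–step cycle $C$ with $1\notin S_C$, every agent of $S_C$ has restricted preferences, and the proof of Claim~\ref{claim:cttc1} transcribes verbatim: the only preferences it ever modifies are those of agents inside $C$ (or inside the smaller auxiliary cycles it creates along the way), all restricted here; the induction hypotheses it invokes are supplied by the outer induction together with Fact~\ref{fact:restricted}; and the final contradiction is a \textsl{coordinatewise improvement} obtained by having two agents swap their type-$1$ allotments, which \textsl{coordinatewise efficiency} rules out.

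The delicate case is the (unique) first–step cycle $C$ with $1\in S_C$, say $C=(1\to o_2^1\to 2\to\cdots\to K\to o_1^1\to 1)$. Here $f_1^1(R)=x_1^1=o_2^1$ by Claim~\ref{claim:cttc4}, and this plays exactly the role that \textsl{individual rationality} played for singleton cycles in Claim~\ref{claim:cttc1}: agent~$1$'s edge of $C$ is already realised, so $C$ can fail to be executed only through one of the restricted agents $2,\dots,K$, and I would replay Steps~1 and~2 of Claim~\ref{claim:cttc1} on those agents to force $f_1^1(\cdot)=o_3^1$ and $f_2^1(\cdot)=o_2^1$ on a suitable profile, again contradicting \textsl{coordinatewise efficiency}. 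The one point that does not transcribe literally is when Step~2 of Claim~\ref{claim:cttc1} would modify agent~$1$ (the predecessor of the violator): that step uses \textsl{strategy-proofness} together with the fact that type-$1$ is the \emph{most important} type for the manipulating agent, which holds for lexicographic but not for merely separable preferences. I expect this reconciliation to be the main obstacle. It can be handled by first replacing $R_1$ by the restricted preference $\hat R_1\in\mathcal{R}_\pi$ that puts each $x_1^t$ on top of marginal~$t$: $\hat R_1$ is a monotonic transformation of $R_1$ at $x_1$, $TTC$ is \textsl{monotonic} (it is \textsl{strategy-proof} and \textsl{non-bossy}, so Lemma~\ref{lemma:mon} applies), hence $cTTC(\hat R_1,R_{-1})=cTTC(R)=x$, and since $(\hat R_1,R_{-1})\in\mathcal{R}_\pi^N$ Proposition~\ref{proposition:cttc1} gives $f(\hat R_1,R_{-1})=x$; in particular the cycle $C$ is executed there, and \textsl{strategy-proofness} (with Claim~\ref{claim:cttc4}) keeps agent~$1$'s allotment equal to $x_1$ across $R_1$ and $\hat R_1$. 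With agent~$1$'s type-$1$ allotment thus held fixed at $o_2^1$, all further manipulations in the unravelling of $C$ touch only the restricted agents $2,\dots,K$, so Claim~\ref{claim:cttc1}'s reasoning goes through. Having shown that every first–step type-$1$ cycle is executed at $f(R)$, Fact~\ref{fact:restricted} completes the induction and yields $f^1(R)=cTTC^1(R)$.
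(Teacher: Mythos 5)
Your proposal is correct and follows essentially the same route as the paper: reduce the problem to the single type-$1$ trading cycle containing agent~$1$ (all other cycles being disposed of by Fact~\ref{fact:restricted}, since their members have restricted preferences), pin agent~$1$'s type-$1$ allotment via Claim~\ref{claim:cttc4}, and unravel that cycle by induction on its size using the Step~1/Step~2 manipulations of Claim~\ref{claim:cttc1} applied to the restricted members $2,\dots,K$. The only divergence is cosmetic: the paper's Step~2 still formally replaces the predecessor's preferences even when that predecessor is agent~$1$, whereas you route around this case by observing that agent~$1$'s type-$1$ allotment is already fixed by Claim~\ref{claim:cttc4} (your auxiliary $\hat R_1$-replacement is superfluous for transferring the conclusion back to $R$, but the pinning observation that follows it is the correct and sufficient mechanism).
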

\begin{proof}[\textbf{Proof}]
Let $\ell$ be the step of the TTC algorithm at which agent $1$ receives type-$1$ object $y_1^1(=x_1^1=cTTC_1^1(R))$. Let $C$ be the corresponding top trading cycle that involves agent $1$, i.e., $1\in S_C$. 

Note that by Claim~\ref{claim:cttc1} and Fact~\ref{fact:restricted}, all top trading cycles that are obtained before step~$\ell$ are executed at $f(R)$. 
Moreover, if $C$ is executed, then again by Claim~\ref{claim:cttc1} and Fact~\ref{fact:restricted}, all remaining top trading cycles are also executed. 
Thus, it suffices to show that $C$ is executed, i.e., for each $i\in S_C$, $f^1_i(R)=cTTC^1_i(R)$.

Since all top trading cycles that are obtained before step~$\ell$ are executed, by the definition of $TTC$, we know that for each agent in $S_C$, the object that he pointed at in $C$ is his most preferred type-$1$ object among the unassigned type-$1$ objects, i.e., for each $i\in S_C$, all better type-$1$ objects for him, are assigned to someone else via top trading cycles that are obtained before step~$\ell$.

Similar to Claim~\ref{claim:cttc1}, we show that $C$ is executed at $f(R)$ by induction on $|S_C|$.

\noindent\textbf{\textit{Induction basis.}} $|S_C|=1$. In this case, $S_C=\{1\}$. By Claim~\ref{claim:cttc4}, $f^1_1(R)=cTTC^1_1(R)$. \smallskip

\noindent\textbf{\textit{Induction hypothesis.}} Let $K\in \{2,\ldots,n\}$. Suppose that $C$ is executed when $|S_C|<K$.\smallskip

\noindent\textbf{\textit{Induction step.}} Let $|S_C|=K$. Without loss of generality, assume that $S_C=\{1,\ldots,K\}$ and $C=(1\to o_2^1\to 2\to\ldots \to K\to o_1^1\to 1)$. \smallskip

By contradiction, assume that $C$ is not executed. Thus, there is an agent $i\in S_C\setminus\{1\}$ who does not receive $o_{i+1}^1$, i.e., $f_i^{1}(R)\neq o_{i+1}^1$. We proceed by contradiction in two steps.

\noindent\textbf{Step~1.}
Let $\hat{R}_i\in \mathcal{R}_{\pi}$ be 
such that for agent $i$ and type-$1$ objects, only $o_{i+1}^1$ is acceptable (apart from his type-$1$ endowment), i.e.,

$$\hat{R}^{1}_i:o_{i+1}^1,o_i^1,\ldots,$$
$$\text{ for each }t\in T\setminus \{1\}:\hat{R}_i^t=R_i^t, \text { and }$$
$$\hat{\pi}_1=\pi:1,\ldots,m.$$

Note that at $\hat{R}_i$, if $i$ does not receive $o_{i+1}^1$, then by \textsl{individual rationality} of $f$, he must receive his type-$1$ endowment $o_i^1$.

Let $\hat{R}\equiv (\hat{R}_i,R_{-i})$.
Since $\hat{R}_i\in \mathcal{R}_l$ and $f$ is \textsl{individually rational}, $f_i^{1}(\hat{R})\in \{o_{i+1}^1,o_i^1\}$. By \textsl{strategy-proofness} of $f$, $f_i^1(R)\neq o_{i+1}^1$ implies that $f_2^{i}(\hat{R})\neq o_{i+1}^1$, otherwise instead of $R_i$, agent $i$ has an incentive to misreport $\hat{R}_i$ at $R$. Thus, $f_i^{1}(\hat{R})=o_i^{1}$. 
Thus, agent $i-1$ cannot receive $o_{i}^1$ from agent $i$ because it is assigned to agent $i$. 
Overall, we find that 
    \begin{equation}
        f_i^{1}(\hat{R})=o_i^{1}\neq o_{i+1}^1 \text{ and } f_{i-1}^{1}(\hat{R})\neq o_{i}^1. \label{oneclaim}
    \end{equation}

\noindent\textbf{Step~2.}    
Let $\hat{\hat{R}}_{i-1}\in \mathcal{R}_{\pi}$ be 
such that for agent $i-1$ and type-$1$ objects, only $o_{i}^1$ and $o_{i+1}^1$ are acceptable (apart from his type-$1$ endowment), i.e.,

$$\hat{\hat{R}}^{1}_{i-1}:o_{i}^1,o_{i+1}^1,o_{i-1}^1,\ldots, \text{ (if $K=2$ then here we have }\hat{\hat{R}}^{1}_{i-1}:o_{i}^1,o_{i-1}^1, \ldots)$$
$$\text{ for each }t\in T\setminus \{1\}:\hat{\hat{R}}_{i-1}^t=R_{i-1}^t, \text { and }$$
$$\hat{\hat{\pi}}_{i-1}=\pi:1,\ldots,m.$$

Let $\hat{\hat{R}}\equiv (\hat{\hat{R}}_{i-1},\hat{R}_{-1})$. By \textsl{individual rationality} of $f$, $f_{i-1}^{1}(\hat{\hat{R}})\in \{o_{i+1}^1,o_i^1,o_{i-1}^1\}$. By \textsl{strategy-proofness} of $f$, $f_{i-1}^1(\hat{R})\neq o_{i}^1$ (see (\ref{oneclaim})) implies that $f_{i-1}^{1}(\hat{\hat{R}})\neq o_{i}^1$, otherwise instead of $\hat{R}_{i-1}$, agent $i-1$ has an incentive to misreport $\hat{\hat{R}}_{i-1}$ at $\hat{R}$. 

We then show that $f_{i-1}^{1}(\hat{\hat{R}})=o_{i+1}^1$. To see it, consider $\tilde{R}^1_{i-1}:o_{i+1}^1,o_{i-1}^1,\ldots$ and $\tilde{R}_{i-1}=(\tilde{R}_{i-1}^1,R_{i-1}^2,\ldots,R_{i-1}^m,\pi)$. At $(\tilde{R}_{i-1},\hat{\hat{R}}_{-(i-1)})$, there is a top trading cycle $C'=(i-1\to o_{i+1}^1\to i+1\to o_{i+2}^1\to\ldots \to i-2\to o_{i-1}^1\to i-1)$ that only involves $K-1$ agents. Thus, by the induction hypothesis, $C'$ is executed and $f_{i-1}^1(\tilde{R}_1,\hat{\hat{R}}_{-1})=o_{i+1}^1$. Therefore, by \textsl{strategy-proofness} of $f$, $f_{i-1}^{1}(\hat{\hat{R}})=o_{i+1}^1$, otherwise instead of $\hat{\hat{R}}_{i-1}$, agent $i-1$ has an incentive to misreport $\tilde{R}_{i-1}$ at $\hat{\hat{R}}$. Moreover, $f_{i-1}^{1}(\hat{\hat{R}})=o_{i+1}^1$ implies that $f_i^{1}(\hat{\hat{R}})\neq o_{i+1}^1$. By \textsl{individual rationality} of $f$, $f_i^{1}(\hat{\hat{R}})=o_i^1$. Overall, we find that 
\begin{equation}
    f_{i-1}^{1}(\hat{\hat{R}})=o_{i+1}^1 \text{ and } f_i^{1}(\hat{\hat{R}})=o_i^1. 
\end{equation}

However, this equation implies that $f$ is not \textsl{coordinatewise efficient} since agents $i-1$ and $i$ can be better off by swapping their type-$1$ allotments. 
\end{proof}

The next two claims, Claims~\ref{claim:cttc6} and \ref{claim:cttc7}, can be proven by a similar way to Claims~\ref{claim:cttc2} and \ref{claim:cttc3}, respectively. Thus, we omit the proofs. Note that the key point is that since agent $1$ still receives his cTTC allocation, we only need to show that agents who still have restricted preferences, will also receive their cTTC allocation. Thus the proofs of Claim~\ref{claim:cttc2} and \ref{claim:cttc3} are still valid for the case where only agent $1$ does not have restricted preferences.

\begin{claim}
For each $R=(R_1,R_{-1})\in (\mathcal{R}_s\setminus{\mathcal{R}_{\pi}}) \times \mathcal{R}_{\pi}^{N\setminus\{1\}}$, and each $i\in N$, $f_i^2(R) \mathbin{R_i^2} o_i^2$. \label{claim:cttc6}
\end{claim}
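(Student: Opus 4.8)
The plan is to mimic the proof of Claim~\ref{claim:cttc2} almost verbatim, handling agent $1$ --- the unique agent whose preferences need not lie in $\mathcal{R}_\pi$ --- separately. For agent $1$ there is nothing to prove: by Claim~\ref{claim:cttc4}, $f_1(R)=cTTC_1(R)$, and since $cTTC$ obtains agent $1$'s type-$2$ allotment by running the individually rational type-$2$ TTC algorithm on the type-$2$ submarket, $f_1^2(R)=cTTC_1^2(R)=TTC_1^2(R^2)\mathbin{R_1^2}o_1^2$. Moreover, by Claim~\ref{claim:cttc5} the whole type-$1$ allocation is already pinned down: $f^1(R)=cTTC^1(R)=TTC^1(R^1)$. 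So the real content of Claim~\ref{claim:cttc6} is that every agent $i\neq 1$, who has (separably) lexicographic preferences $R_i\in\mathcal{R}_\pi$ with importance order $\pi:1,\ldots,m$, also satisfies $f_i^2(R)\mathbin{R_i^2}o_i^2$.

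So fix such an $i$ and suppose, for contradiction, that $o_i^2\mathbin{P_i^2}f_i^2(R)$; write $y\equiv f(R)$. Since $R_i$ is lexicographic with type $1$ most important and $f_i^2(R)\neq o_i^2$, individual rationality forces $y_i^1\neq o_i^1$. I would then replay the two-step preference replacement of Claim~\ref{claim:cttc2} with agent $i$ in the role of ``agent $1$'' there. First, let $\hat R_i\in\mathcal{R}_\pi$ agree with $R_i$ on every marginal except type $2$, where it lists $o_i^2$ first and $f_i^2(R)$ second, keeping the importance order $\pi$; then $(\hat R_i,R_{-i})$ still lies in $(\mathcal{R}_s\setminus\mathcal{R}_\pi)\times\mathcal{R}_\pi^{N\setminus\{1\}}$, so strategy-proofness gives $f_i(\hat R_i,R_{-i})=y_i$ and Claim~\ref{claim:cttc5} (with monotonicity of $TTC$) gives $f^1(\hat R_i,R_{-i})=y^1$. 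Second, let $\bar R_i\in\mathcal{R}_l$ differ from $\hat R_i$ only by its importance order, now $2,1,3,\ldots,m$: individual rationality makes $f_i^2(\bar R_i,R_{-i})=o_i^2$, hence $f_i^1(\bar R_i,R_{-i})\in\{y_i^1,o_i^1\}$, and strategy-proofness rules out $f_i^1(\bar R_i,R_{-i})=y_i^1$ (a deviation to $\bar R_i$ would then yield an allotment strictly preferred to $y_i=f_i(R)$, since $o_i^2$ is better for agent $i$ than $f_i^2(R)$), so $f_i^1(\bar R_i,R_{-i})=o_i^1$.

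It remains to turn $f_i^1(\bar R_i,R_{-i})=o_i^1$ into a violation of coordinatewise efficiency, which is the analogue of the inner induction of Claim~\ref{claim:cttc2} on $|S_C|$, where $C$ is the trading cycle of $TTC^1(R^1)$ --- equivalently of $TTC^1(\bar R_i^1,R_{-i}^1)$, since the two coincide by monotonicity of $TTC$ --- that contains agent $i$. Using Claim~\ref{claim:cttc5}, Fact~\ref{fact:restricted}, and Claim~\ref{claim:cttc4} (the base case $|S_C|=1$ occurs only when the sole agent of the cycle is $1$), the type-$1$ cycles preceding $C$ are executed at $f(\bar R_i,R_{-i})$; then, after modifying the type-$1$ marginal of one further agent of $C$ exactly as in Claim~\ref{claim:cttc2}, two agents of $C$ would strictly gain by swapping their type-$1$ allotments, contradicting coordinatewise efficiency. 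Claim~\ref{claim:cttc7} follows from Claim~\ref{claim:cttc3} in the same manner. I expect the main obstacle to be precisely this domain bookkeeping: the profile $(\bar R_i,R_{-i})$ has two agents outside $\mathcal{R}_\pi$ (agents $1$ and $i$), so Claim~\ref{claim:cttc5} cannot be quoted for it directly. What should rescue the argument --- and is the reason the text can assert that the proof of Claim~\ref{claim:cttc2} ``is still valid'' here --- is that no manipulation changes any agent's type-$1$ marginal in a way that affects $TTC^1$, that agent $1$'s allotment stays frozen at $cTTC_1(R)$ throughout, and that Fact~\ref{fact:restricted} only requires the agents lying on the relevant cycles (not all agents) to have restricted preferences; the one point that genuinely needs checking is the residual case in which agent $1$ himself lies on a type-$1$ cycle preceding $C$, which should be dispatched by noting that such a cycle is already executed at $f(R)$ by Claim~\ref{claim:cttc5} and tracking that its execution survives the strategy-proof chain of manipulations.
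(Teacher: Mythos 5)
Your overall plan is exactly the paper's: the paper in fact omits the proof of this claim, saying only that it ``can be proven in a similar way to Claim~\ref{claim:cttc2}'' because agent $1$ is already pinned to his cTTC allotment by Claim~\ref{claim:cttc4}, so only the agents who still have restricted preferences need the Claim~\ref{claim:cttc2} argument. Your identification of the domain-bookkeeping issue --- after switching agent $i$'s importance order to $2,1,3,\ldots,m$ the profile has two agents outside $\mathcal{R}_{\pi}$, so Claim~\ref{claim:cttc5} and Fact~\ref{fact:restricted} cannot be quoted verbatim --- is precisely the content hidden behind the paper's ``we omit the proof,'' and your proposed resolution (these arguments only constrain the agents actually lying on the relevant type-$1$ cycles; agent $1$'s own cycle is handled via Claim~\ref{claim:cttc5} at the profile $(\hat R_i,R_{-i})$, which does remain in the admissible domain) is the right one.

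There is, however, one step that fails as written. You define $\hat R_i$ to agree with $R_i$ on every marginal except type $2$; Claim~\ref{claim:cttc2} instead also truncates the other marginals, taking $\hat R_1^t: y_1^t, o_1^t,\ldots$ for each $t\neq 2$. That truncation is what validates your later inference ``individual rationality makes $f_i^2(\bar R_i,R_{-i})=o_i^2$, hence $f_i^1(\bar R_i,R_{-i})\in\{y_i^1,o_i^1\}$.'' With $\bar R_i^1=R_i^1$ left untruncated, individual rationality only yields $f_i^1(\bar R_i,R_{-i})\mathbin{R_i^1}o_i^1$; after strategy-proofness excludes $y_i^1$ (and anything $R_i^1$-better than $y_i^1$), you are left with any type-$1$ object weakly above $o_i^1$ and strictly below $y_i^1$, and you cannot conclude $f_i^1(\bar R_i,R_{-i})=o_i^1$ --- yet the entire coordinatewise-efficiency contradiction that follows is built on that equality. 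The repair is simply to copy Claim~\ref{claim:cttc2}'s $\hat R_i$ literally, truncating every marginal $t\neq 2$ to $y_i^t,o_i^t,\ldots$, at the cost of the TTC-monotonicity step you already anticipated for keeping $TTC^1$ unchanged.
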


\begin{claim}
For each $R=(R_1,R_{-1})\in (\mathcal{R}_s\setminus{\mathcal{R}_{\pi}}) \times \mathcal{R}_{\pi}^{N\setminus\{1\}}$,  $f^2(R) =cTTC^2(R)$. \label{claim:cttc7}
\end{claim}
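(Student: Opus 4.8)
The plan is to transplant the proof of Claim~\ref{claim:cttc3} (the restricted-preference version) to the present larger domain, with two substitutions: every appeal to Claim~\ref{claim:cttc2} becomes an appeal to Claim~\ref{claim:cttc6} (coordinatewise individual rationality at type~$2$, now available on $(\mathcal{R}_s\setminus\mathcal{R}_\pi)\times\mathcal{R}_\pi^{N\setminus\{1\}}$), and agent~$1$ is disposed of at the outset using Claim~\ref{claim:cttc4}. Write $y\equiv f(R)$ and $x\equiv cTTC(R)$. Since Claim~\ref{claim:cttc4} already gives $y_1=x_1$, in particular $f_1^2(R)=cTTC_1^2(R)$, so agent~$1$ already receives the type-$2$ object he points to in the run of $TTC^2$ on $R^2$. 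As in Claim~\ref{claim:cttc3}, it then suffices to show that every top trading cycle of $TTC^2(R^2)$ is executed at $y$, and by the usual peeling argument it is enough to handle the first-step cycles: once those agents hold their $cTTC^2$-allotments the later steps are treated identically (this uses the type-$2$ analogue of Fact~\ref{fact:restricted}, which follows from Claims~\ref{claim:cttc3} and \ref{claim:cttc6} exactly as Fact~\ref{fact:restricted} follows from Claim~\ref{claim:cttc1}).

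I would fix a first-step cycle $C$ of $TTC^2$ at $R$ with agent set $S_C$ and induct on $|S_C|$. If $|S_C|=1$, the lone agent points to his own type-$2$ endowment and Claim~\ref{claim:cttc6} forces him to keep it. For $|S_C|=K\ge 2$, relabel so that $C=(1\to o_2^2\to 2\to\ldots\to K\to o_1^2\to 1)$; because $y_1=x_1$, agent~$1$ is never a ``defector'', so if $C$ is not executed there is some $i\in S_C\setminus\{1\}$ with $f_i^2(R)\ne o_{i+1}^2$. I would then run verbatim the two-step perturbation of Claim~\ref{claim:cttc3}: first contract agent~$i$'s type-$2$ marginals so that (besides $o_i^2$) only $o_{i+1}^2$ is acceptable, keeping $\pi$ and the other marginals fixed; by \textsl{strategy-proofness} and Claim~\ref{claim:cttc6}, $f_i^2(\hat R)=o_i^2$, hence $f_{i-1}^2(\hat R)\ne o_i^2$. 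Second, contract agent~$(i-1)$'s type-$2$ marginals so that (besides $o_{i-1}^2$) only $o_i^2$ and $o_{i+1}^2$ are acceptable; \textsl{strategy-proofness} then excludes $f_{i-1}^2=o_i^2$, and a further deviation of agent~$(i-1)$ to the marginals $o_{i+1}^2,o_{i-1}^2,\ldots$ produces a cycle on $K-1$ agents whose execution is granted by the induction hypothesis, forcing $f_{i-1}^2(\hat{\hat R})=o_{i+1}^2$ and then $f_i^2(\hat{\hat R})=o_i^2$ via Claim~\ref{claim:cttc6}. Since preferences are separable, agents $i-1$ and $i$ could now both be made strictly better off by swapping only their type-$2$ allotments, contradicting \textsl{coordinatewise efficiency}; so $C$ is executed, closing the induction. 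Iterating over the remaining steps of $TTC^2$ yields $f^2(R)=cTTC^2(R)$.

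The only genuinely new bookkeeping, compared with Claim~\ref{claim:cttc3}, concerns agent~$1$, the unique agent with non-restricted preferences, and I expect this to be the delicate point of the argument: I must check that agent~$1$ is never forced to be perturbed in a way that takes us outside a domain on which the conclusion is already available. Invoking Claim~\ref{claim:cttc4} at the start handles this, since it pins $f_1^2(R)=x_1^2$, so agent~$1$ is never the defecting agent $i$; and if it happens that $i-1=1$ (i.e.\ $i=2$ and agent~$1$ precedes the defector in $C$), I may still let agent~$1$ report the restricted preference $\hat{\hat R}_1\in\mathcal{R}_\pi$, because after that perturbation every agent in $N$ has restricted preferences, so Proposition~\ref{proposition:cttc1} and Claims~\ref{claim:cttc1}--\ref{claim:cttc3} apply to the perturbed market. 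More generally, since each of the three perturbed profiles is obtained from $R$ only by replacing some agent's marginals with more sharply peaked ones while keeping the importance order $\pi$, every intermediate profile lies either in $(\mathcal{R}_s\setminus\mathcal{R}_\pi)\times\mathcal{R}_\pi^{N\setminus\{1\}}$ or in $\mathcal{R}_\pi^N$, and on both domains the needed facts (Claim~\ref{claim:cttc6}, resp.\ Claim~\ref{claim:cttc2} and Fact~\ref{fact:restricted}) are in hand; verifying this routine domain check is essentially all that separates the present claim from Claim~\ref{claim:cttc3}, and the proofs of Claims~\ref{claim:cttc6} and \ref{claim:cttc7} are then omitted in the main text for this reason.
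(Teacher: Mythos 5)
Your proposal is correct and follows exactly the route the paper intends: the paper omits the proof of Claim~\ref{claim:cttc7}, stating that it repeats the argument of Claim~\ref{claim:cttc3} with Claim~\ref{claim:cttc6} replacing Claim~\ref{claim:cttc2} and with Claim~\ref{claim:cttc4} guaranteeing that agent~$1$ (the only agent with non-restricted preferences) already receives his cTTC allotment and hence is never the defecting agent. Your additional domain bookkeeping — checking that each perturbed profile lands either back in $(\mathcal{R}_s\setminus\mathcal{R}_{\pi})\times\mathcal{R}_{\pi}^{N\setminus\{1\}}$ or in $\mathcal{R}_{\pi}^N$, where the requisite facts hold — is precisely the ``key point'' the paper alludes to, so no gap remains.
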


Thus, similar to the proof of Proposition~\ref{proposition:cttc1},
by Claims~\ref{claim:cttc4}, \ref{claim:cttc5}, \ref{claim:cttc6}, and \ref{claim:cttc7},
we conclude that for each $R\in \mathcal{R}^N_s$, and each $\bar{S}\subseteq N$, such that $|\bar{S}|=1$ and $R_{-\bar{S}}\in \mathcal{R}_{\pi}^{-\bar{S}}$, $f(R)=cTTC(R)$. Thus, the proof of Lemma~\ref{lemma:basis} is completed. 
\smallskip

Now, we are ready to prove Theorem~\ref{thm:cTTC}. Let $R\in \mathcal{R}^N_l$ and $\bar{S}\subseteq N$ be such that exactly only agents in $\bar{S}$ have non restricted (but separable) preferences, we show that $f(R)=cTTC(R)$. 

\begin{lemma}
    For each $R\in \mathcal{R}^N_s$ and each $\bar{S}\subseteq N$
    such that $R_{-\bar{S}}\in \mathcal{R}_{\pi}^{-\bar{S}}$, $f(R)=cTTC(R)$. 
    \label{lemma:step}
\end{lemma}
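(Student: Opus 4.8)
The plan is to prove Lemma~\ref{lemma:step} by induction on $|\bar{S}|$. The base cases are already in hand: when $|\bar{S}|=0$ the claim is Proposition~\ref{proposition:cttc1}, and when $|\bar{S}|=1$ it is Lemma~\ref{lemma:basis}. So I would fix $k\geq 2$, assume the conclusion holds whenever the set of agents with non-restricted (but separable) preferences has at most $k-1$ members, and take $R\in\mathcal{R}^N_s$ with $R_{-\bar{S}}\in\mathcal{R}_{\pi}^{-\bar{S}}$ and $|\bar{S}|=k$. Writing $x\equiv cTTC(R)$, the goal is $f(R)=x$.

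The first step is to pin down the allotments of the agents in $\bar{S}$, by running the argument of Claim~\ref{claim:cttc4} with an arbitrary $i\in\bar{S}$ playing the role of agent~$1$ and the induction hypothesis playing the role of Proposition~\ref{proposition:cttc1}. Let $\bar{R}_i\in\mathcal{R}_{\pi}$ share the marginals of $R_i$; the profile $(\bar{R}_i,R_{-i})$ has only $k-1$ non-restricted agents, so the induction hypothesis gives $f(\bar{R}_i,R_{-i})=cTTC(\bar{R}_i,R_{-i})=cTTC(R)=x$, hence $f_i(\bar{R}_i,R_{-i})=x_i$. If $f_i(R)=y_i\neq x_i$, then \textsl{strategy-proofness} forces $y_i\mathbin{P_i}x_i$, so by separability of $R_i$ there is a type $\tau$ with $y_i^{\tau}\mathbin{P_i^{\tau}}x_i^{\tau}\mathbin{R_i^{\tau}}o_i^{\tau}$. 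Taking $\hat{R}_i\in\mathcal{R}_{\pi}$ that ranks $y_i$ at the top (i.e.\ $\hat{R}_i^t\!:\,y_i^t,o_i^t,\ldots$ for every $t$, importance order $\pi$), \textsl{strategy-proofness} gives $f_i(\hat{R}_i,R_{-i})=y_i$, while the induction hypothesis gives $f(\hat{R}_i,R_{-i})=cTTC(\hat{R}_i,R_{-i})$, so $y_i^{\tau}=TTC_i^{\tau}(\hat{R}_i^{\tau},R_{-i}^{\tau})$, which by \textsl{individual rationality} of $TTC$ lies in $\{y_i^{\tau},o_i^{\tau}\}$; but \textsl{strategy-proofness} of $TTC$ on the type-$\tau$ submarket together with $y_i^{\tau}\mathbin{P_i^{\tau}}x_i^{\tau}=TTC_i^{\tau}(R^{\tau})$ forces this value to be $o_i^{\tau}\neq y_i^{\tau}$, a contradiction. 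Applying this to every $i\in\bar{S}$ yields $f_i(R)=x_i$ for all $i\in\bar{S}$, and in particular $f^t_i(R)=x^t_i$ for each type $t$.

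The second step is to pin down the restricted agents. Since the $\bar{S}$-agents already receive their $cTTC$ allotments, it remains to show $f_j(R)=x_j$ for every $j\in N\setminus\bar{S}$, which I would do type by type, reproducing Claims~\ref{claim:cttc5}, \ref{claim:cttc6} and \ref{claim:cttc7}. For type~$1$: process the type-$1$ $TTC$ cycles in the order in which they are obtained, and suppose some cycle is not executed by $f^1(R)$; let $C$ be the first such one. By minimality and Step~1 all earlier cycles are executed, and since every $\bar{S}$-agent already points correctly, the agent of $C$ whose type-$1$ allotment is wrong must be a restricted one. One then perturbs this agent (and, where needed, its predecessor in $C$) to preferences in $\mathcal{R}_{\pi}$ exactly as in the Step~1/Step~2 construction of Claim~\ref{claim:cttc5}, uses \textsl{strategy-proofness} and \textsl{individual rationality} to force it onto its endowed type-$1$ object, and invokes \textsl{coordinatewise efficiency} to reach a contradiction; the inner induction on $|S_C|$ carries over verbatim because every perturbation touches only restricted agents. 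The analogue of Claim~\ref{claim:cttc6} (``coordinatewise \textsl{individual rationality} at type~$2$'') and then of Claims~\ref{claim:cttc3}/\ref{claim:cttc7} handle type~$2$, and the remaining types follow identically; this gives $f(R)=x$ and closes the induction, and, taking $\bar{S}=N$, proves Theorem~\ref{thm:cTTC}.

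The main obstacle — and the reason the second step cannot be shortcut — is that \textsl{non-bossiness} is not among the hypotheses of Theorem~\ref{thm:cTTC}: knowing $f(\bar{R}_i,R_{-i})=x$ together with $f_i(R)=x_i=f_i(\bar{R}_i,R_{-i})$ does not let us transport the whole allocation back to $R$, so the cycle-by-cycle, agent-by-agent grind through \textsl{coordinatewise efficiency} is genuinely needed. The one new point relative to Lemma~\ref{lemma:basis} is that a $TTC$ cycle may now mix $\bar{S}$-agents with restricted agents; this is absorbed by the observation from Step~1 that the $\bar{S}$-agents are already ``locked in'', so that the offending agent, and hence every agent touched by a perturbation, is restricted and lies in $\mathcal{R}_{\pi}$, keeping each intermediate profile within reach of either the induction hypothesis or the restricted-preferences machinery (Claim~\ref{claim:cttc1}, Fact~\ref{fact:restricted}).
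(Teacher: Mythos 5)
Your proposal is correct and follows essentially the same route as the paper: induction on $|\bar{S}|$, first locking in the allotments of the non-restricted agents by rerunning the Claim~\ref{claim:cttc4} argument with the induction hypothesis standing in for Proposition~\ref{proposition:cttc1}, then recovering the restricted agents type by type via the Claim~\ref{claim:cttc5}--\ref{claim:cttc7} machinery. Your closing observation about why \textsl{non-bossiness} cannot be used to shortcut the second step is exactly the point the paper's structure reflects.
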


The proof of Lemma~\ref{lemma:step} is showing by induction on $|\bar{S}|$.\smallskip

\noindent\textbf{\textit{Induction basis.}} $|\bar{S}|=1$. This is done by Lemma~\ref{lemma:basis}.\smallskip

\noindent\textbf{\textit{Induction hypothesis.}} Let $K\in \{2,\ldots,n\}$. Suppose that $f(R)=cTTC(R)$ when $|\bar{S}|<K$.\smallskip

\noindent\textbf{\textit{Induction step.}} Let $|\bar{S}|=K$. Similar to Lemma~\ref{lemma:basis}, the proof of this part consists of four claims.\smallskip

We first show that agents in $\bar{S}$ still receive their cTTC allotments.
\begin{claim}
    For each $R=(R_{\bar{S}},R_{-\bar{S}})\in (\mathcal{R}_s\setminus{\mathcal{R}_{\pi}})^{\bar{S}} \times \mathcal{R}_{\pi}^{-\bar{S}}$,
    and each $i\in \bar{S}$, $f_i(R)=cTTC_i(R)$. 
\end{claim}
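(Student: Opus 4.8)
The plan is to transcribe the argument of Claim~\ref{claim:cttc4}, with the appeal to Proposition~\ref{proposition:cttc1} replaced by the induction hypothesis of Lemma~\ref{lemma:step}. Fix $i\in\bar{S}$, write $y\equiv f(R)$ and $x\equiv cTTC(R)$, and suppose for contradiction that $y_i\neq x_i$.

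First I would replace agent $i$'s preferences by a restricted preference $\bar{R}_i\in\mathcal{R}_{\pi}$ with $\bar{\pi}_i=\pi$ and the same marginal preferences as $R_i$. Since $cTTC$ on separable profiles depends only on the marginal preferences, $cTTC(\bar{R}_i,R_{-i})=cTTC(R)=x$. The profile $(\bar{R}_i,R_{-i})$ has only the $K-1$ agents of $\bar{S}\setminus\{i\}$ with non-restricted (separable) preferences, so the induction hypothesis of Lemma~\ref{lemma:step} gives $f(\bar{R}_i,R_{-i})=x$, in particular $f_i(\bar{R}_i,R_{-i})=x_i$. By \textsl{strategy-proofness}, $y_i\mathbin{R_i}x_i$, hence $y_i\mathbin{P_i}x_i$ since $y_i\neq x_i$; and by \textsl{separability} of $R_i$ there must be a type $\tau\in T$ with $y_i^{\tau}\mathbin{P_i^{\tau}}x_i^{\tau}$. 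Because $x_i^{\tau}=TTC_i^{\tau}(R^{\tau})$ is \textsl{individually rational} for TTC, $x_i^{\tau}\mathbin{R_i^{\tau}}o_i^{\tau}$, so $y_i^{\tau}\mathbin{P_i^{\tau}}x_i^{\tau}\mathbin{R_i^{\tau}}o_i^{\tau}$ and in particular $y_i^{\tau}\neq o_i^{\tau}$.

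Next I would introduce $\hat{R}_i\in\mathcal{R}_{\pi}$ with $\hat{\pi}_i=\pi$ and, for each $t\in T$, $\hat{R}_i^t:y_i^t,o_i^t,\ldots$, so that $y_i$ is the unique $\hat{R}_i$-best allotment. Since agent $i$ could always deviate from $\hat{R}_i$ to $R_i$ and secure $f_i(R)=y_i$, \textsl{strategy-proofness} forces $f_i(\hat{R}_i,R_{-i})=y_i$. The profile $(\hat{R}_i,R_{-i})$ again has only the $K-1$ non-restricted agents of $\bar{S}\setminus\{i\}$, so the induction hypothesis yields $f(\hat{R}_i,R_{-i})=cTTC(\hat{R}_i,R_{-i})$, whence $y_i^{\tau}=cTTC_i^{\tau}(\hat{R}_i,R_{-i})=TTC_i^{\tau}(\hat{R}^{\tau})$. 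I then pin down $TTC_i^{\tau}(\hat{R}^{\tau})$ directly: by \textsl{individual rationality} of TTC it lies in $\{y_i^{\tau},o_i^{\tau}\}$, and by \textsl{strategy-proofness} of TTC (true type-$\tau$ marginal $R_i^{\tau}$ against the report $\hat{R}_i^{\tau}$) we get $x_i^{\tau}=TTC_i^{\tau}(R^{\tau})\mathbin{R_i^{\tau}}TTC_i^{\tau}(\hat{R}^{\tau})$; combined with $y_i^{\tau}\mathbin{P_i^{\tau}}x_i^{\tau}$ this rules out $TTC_i^{\tau}(\hat{R}^{\tau})=y_i^{\tau}$, so $TTC_i^{\tau}(\hat{R}^{\tau})=o_i^{\tau}$. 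Hence $y_i^{\tau}=o_i^{\tau}$, contradicting $y_i^{\tau}\neq o_i^{\tau}$. Therefore $y_i=x_i$, and since $i\in\bar{S}$ was arbitrary the claim follows.

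The only genuine subtlety — and the step I would be most careful about — is the bookkeeping of which agents count as non-restricted after each substitution: replacing $R_i$ by either $\bar{R}_i$ or $\hat{R}_i$ (both in $\mathcal{R}_{\pi}$) drops the number of non-restricted agents from $K$ to $K-1$, which is exactly what licenses invoking the induction hypothesis. A related point worth stating explicitly is that $\hat{R}_i$ alters agent $i$'s marginal preferences, so $cTTC(\hat{R}_i,R_{-i})\neq cTTC(R)$ in general; this is why $TTC_i^{\tau}(\hat{R}^{\tau})$ cannot simply be read off from $x$ and must instead be controlled via \textsl{strategy-proofness} of TTC at the level of type-$\tau$ marginals. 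Everything else is a faithful transcription of Claim~\ref{claim:cttc4}.
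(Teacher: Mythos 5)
Your proposal is correct and takes essentially the same approach as the paper: the paper reduces this claim to the argument of Claim~\ref{claim:cttc4} after using the induction hypothesis to obtain $f(\bar{R}_i,R_{-i})=cTTC(R)$, and your transcription --- including the second invocation of the induction hypothesis at $(\hat{R}_i,R_{-i})$ in place of Proposition~\ref{proposition:cttc1}, and the control of $TTC_i^{\tau}(\hat{R}^{\tau})$ via \textsl{individual rationality} and \textsl{strategy-proofness} of TTC on the type-$\tau$ marginals --- is exactly the argument the paper omits.
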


\begin{proof}[\textbf{Proof}]
Let $y\equiv f(R)$ and $x=cTTC(R)$.
By contradiction, assume that there is an agent $i\in \bar{S}$ who does not receive his cTTC allotment $x_i$. Without loss of generality, assume that $i=1$.

Let $\bar{R}_1\in \mathcal{R}_{\pi}$ be such that $\bar{R}_1$ and $R_1$ share the same marginal preferences, i.e., for each $t\in T$, $\bar{R}_1^t=R_1^t$. Let $\bar{R}\equiv (\bar{R}_1,R_{-1})$. 

Note that at $\bar{R} $, there are only $K-1$ agents (in $\bar{S}\setminus\{1\}$) who have non restricted (but separable) preferences. Thus, by the induction hypothesis and the definition of $cTTC$, $f(\bar{R})=cTTC(\bar{R})=cTTC(R)=x$. Then, the remaining proof is exactly the same as the proof of Claim~\ref{claim:cttc4} and hence we omit it.
\end{proof}

The following three claims can be proven by a similar way to Claims~\ref{claim:cttc5}, \ref{claim:cttc6}, and \ref{claim:cttc7}, respectively. Thus, we omit the proofs. Note that the key point is that since agents in $\bar{S}$ still receive their cTTC allocation, we only need to show that agents who still have restricted preferences, will also receive their cTTC allocation. Thus the proofs are still valid for the case where only agents in $\bar{S}$ do not have restricted preferences.

\begin{claim}
    For each $R=(R_{\bar{S}},R_{-\bar{S}})\in (\mathcal{R}_s\setminus{\mathcal{R}_{\pi}})^{\bar{S}} \times \mathcal{R}_{\pi}^{-\bar{S}}$,
     $f^1(R)=cTTC^1(R)$.
\end{claim}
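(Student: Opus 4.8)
The plan is to reproduce the three-stage argument of the proof of Claim~\ref{claim:cttc5}, with the role of Claim~\ref{claim:cttc4} now played by the claim just established in this induction step, namely that $f_i(R)=cTTC_i(R)$ — hence $f_i^1(R)=cTTC_i^1(R)$ — for every $i\in\bar S$. So only the type-$1$ allotments of the agents outside $\bar S$, all of which carry $\mathcal R_\pi$-preferences, remain to be pinned down. Writing $y\equiv f(R)$ and $x\equiv cTTC(R)$, I would run the type-$1$ TTC algorithm on $R^1$ and show, going through its top trading cycles in their order of formation, that each such cycle $C$ (with agent set $S_C$) is executed under $f$; inside a fixed $C$ this is an induction on $|S_C|$, exactly as in Claim~\ref{claim:cttc1}.

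For the cycle step, note first that — by Fact~\ref{fact:restricted} for the $\mathcal R_\pi$-agents and by the just-established claim for the $\bar S$-agents — every cycle formed before $C$ is executed, so each $i\in S_C$ points in $C$ to his $R_i^1$-best type-$1$ object among those not yet assigned. If $S_C\subseteq\bar S$, the just-established claim already gives that $C$ is executed. Otherwise pick $i\in S_C$ that fails to receive $o_{i+1}^1$; necessarily $R_i\in\mathcal R_\pi$, since otherwise the just-established claim would give $f_i^1(R)=cTTC_i^1(R)=o_{i+1}^1$. Now I would run verbatim the two-step perturbation of Claim~\ref{claim:cttc1}: replace $R_i$ by $\hat R_i\in\mathcal R_\pi$ for which $o_{i+1}^1$ is the only acceptable type-$1$ object besides $o_i^1$, so that \textsl{individual rationality} and \textsl{strategy-proofness} force $f_i^1(\hat R)=o_i^1$ and hence $f_{i-1}^1(\hat R)\neq o_i^1$; then replace $R_{i-1}$ by $\hat{\hat R}_{i-1}\in\mathcal R_\pi$ for which only $o_i^1$ and $o_{i+1}^1$ are acceptable besides $o_{i-1}^1$, ruling out $f_{i-1}^1(\hat{\hat R})=o_i^1$ by \textsl{individual rationality} and \textsl{strategy-proofness}, while a further restricted perturbation $\tilde R_{i-1}$ of the type-$1$ marginal produces a length-$(|S_C|-1)$ cycle whose execution forces $f_{i-1}^1(\hat{\hat R})=o_{i+1}^1$, whence $f_i^1(\hat{\hat R})=o_i^1$. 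Then $i-1$ and $i$ strictly gain within type~$1$ by swapping $o_{i+1}^1$ and $o_i^1$, contradicting \textsl{coordinatewise efficiency}. Iterating over all cycles gives $f^1(R)=cTTC^1(R)$.

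The one genuinely new point — and the step I expect to be the main obstacle — is justifying that ``the length-$(|S_C|-1)$ cycle is executed'' in every configuration that arises. If the perturbed agents $i$ and $i-1$ both lie outside $\bar S$, the perturbed profile still belongs to $(\mathcal R_s\setminus\mathcal R_\pi)^{\bar S}\times\mathcal R_\pi^{-\bar S}$, and execution of the shorter cycle is the $|S_C|$-induction hypothesis combined with the cycle-order induction. But if the cycle interleaves restricted and non-restricted agents and $i-1\in\bar S$, then replacing $R_{i-1}$ by an $\mathcal R_\pi$-preference strictly lowers the number of non-restricted agents, and execution of the shorter cycle follows instead from the induction hypothesis of Lemma~\ref{lemma:step} (or from Proposition~\ref{proposition:cttc1} if the count drops to zero). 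One also has to observe that the Step~1 perturbation of $i\notin\bar S$ does not disturb the just-established claim, since that claim's proof used only that the agents outside $\bar S$ carry $\mathcal R_\pi$-preferences — which $\hat R_i$ still does. The subsequent type-$2$, type-$3$, $\dots$ claims then follow exactly as Claims~\ref{claim:cttc6} and~\ref{claim:cttc7} follow from Claims~\ref{claim:cttc2} and~\ref{claim:cttc3}.
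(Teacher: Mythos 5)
Your overall plan is exactly the route the paper intends (it omits this proof, pointing back to Claim~\ref{claim:cttc5} with the preceding claim playing the role of Claim~\ref{claim:cttc4}), and you correctly isolate the genuinely new bookkeeping: which induction hypothesis --- the within-cycle induction on $|S_C|$, the outer induction on $|\bar S|$ from Lemma~\ref{lemma:step}, or Proposition~\ref{proposition:cttc1} --- certifies that the shortened cycle is executed at the perturbed profile. That part of your case analysis is sound.

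There is, however, one step that fails as written. In Step~2 you rule out $f_{i-1}^1(\hat{\hat R})=o_i^1$ ``by \textsl{individual rationality} and \textsl{strategy-proofness},'' the point being that otherwise agent $i-1$ would profitably misreport $\hat{\hat R}_{i-1}$ at $\hat R$. That deviation argument needs agent $i-1$'s \emph{true} preferences at $\hat R$, namely $R_{i-1}$, to be lexicographic with type~$1$ most important: only then does obtaining the better type-$1$ object $o_i^1$ guarantee a strictly preferred bundle irrespective of the other types. When $i-1\in\bar S$ --- precisely the configuration you single out in your last paragraph --- $R_{i-1}$ is merely separable, a type-$1$ gain can be offset by losses in other types, and \textsl{strategy-proofness} does not deliver $f_{i-1}^1(\hat{\hat R})\neq o_i^1$. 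The repair is easy and makes Step~2 unnecessary in that case: $\hat R=(\hat R_i,R_{-i})$ still lies in $(\mathcal R_s\setminus\mathcal R_\pi)^{\bar S}\times\mathcal R_\pi^{-\bar S}$, and since $\hat R_i^1$ is a monotonic transformation of $R_i^1$ at $o_{i+1}^1$ we have $TTC^1(\hat R^1)=TTC^1(R^1)$; hence the claim just established in this induction step, applied at $\hat R$, forces $f_{i-1}^1(\hat R)=cTTC_{i-1}^1(\hat R)=o_i^1$, which already contradicts the Step-1 conclusion that agent $i$ retains $o_i^1$ at $\hat R$. With that case disposed of, Step~2 is only ever run for $i-1\notin\bar S$, where $R_{i-1}\in\mathcal R_\pi$ and your verbatim transplant of Claim~\ref{claim:cttc1} goes through.
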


\begin{claim}
    For each $R=(R_{\bar{S}},R_{-\bar{S}})\in (\mathcal{R}_s\setminus{\mathcal{R}_{\pi}})^{\bar{S}} \times \mathcal{R}_{\pi}^{-\bar{S}}$, and each $i\in N$,
     $f^2_i(R)\mathbin{R^2_i} o_i^2$.
\end{claim}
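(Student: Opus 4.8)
The plan is to mimic the proof of Claim~\ref{claim:cttc2} (equivalently Claim~\ref{claim:cttc6}), splitting into the two cases $i\in\bar{S}$ and $i\in N\setminus\bar{S}$. If $i\in\bar{S}$, then the first claim of this induction step already gives $f_i(R)=cTTC_i(R)$, and since $cTTC_i^2(R)=TTC_i^2(R^2)$ while $TTC$ is \textsl{individually rational}, we obtain $f_i^2(R)=cTTC_i^2(R)\mathbin{R_i^2}o_i^2$ with no further work. So the whole content is in the case $i\in N\setminus\bar{S}$, where $R_i\in\mathcal{R}_{\pi}$ is lexicographic with importance order $\pi:1,\ldots,m$; here I would argue by contradiction, assuming $o_i^2\mathbin{P_i^2}f_i^2(R)$ and writing $y\equiv f(R)$.

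The preparatory observations are: (i) the preceding two claims of this induction step give $f^1(\,\cdot\,)=cTTC^1(\,\cdot\,)$ and $f_j(\,\cdot\,)=cTTC_j(\,\cdot\,)$ for all $j\in\bar{S}$, and since $i\notin\bar{S}$, both conclusions persist after any replacement of $R_i$ by another preference in $\mathcal{R}_{\pi}$ (the modified profile still lies in $(\mathcal{R}_s\setminus\mathcal{R}_{\pi})^{\bar{S}}\times\mathcal{R}_{\pi}^{-\bar{S}}$, so $\bar{S}$ remains the set of non-restricted agents); and (ii) since $R_i$ is lexicographic with type $1$ most important and $f$ is \textsl{individually rational}, the assumption $o_i^2\mathbin{P_i^2}y_i^2$ forces $y_i^1\neq o_i^1$ (else $o_i\mathbin{P_i}y_i$), so the type-$1$ trading cycle through which agent $i$ receives $y_i^1=cTTC_i^1(R)$ has size $>1$. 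With this in hand the argument of Claim~\ref{claim:cttc2} goes through essentially verbatim: using \textsl{strategy-proofness} I would move agent $i$ to the preferences $\hat{R}_i\in\mathcal{R}_{\pi}$ of that proof (ranking $o_i^2$ first and $y_i^2$ second at type $2$, and $y_i^t$ first and $o_i^t$ second at every other type), so $f_i(\hat{R}_i,R_{-i})=y_i$ and, by anchor (i), $f^1(\hat{R}_i,R_{-i})=y^1$; then pass to $\bar{R}_i$ obtained from $\hat{R}_i$ by swapping the importance of types $1$ and $2$, use \textsl{individual rationality} to push agent $i$'s type-$2$ allotment to $o_i^2$, and \textsl{strategy-proofness} plus lexicographicity to push agent $i$'s type-$1$ allotment down to $o_i^1$; finally run the internal induction on the size of that type-$1$ cycle — shrinking it one step at a time via local deviations as in Claim~\ref{claim:cttc2} and invoking the internal induction hypothesis together with Fact~\ref{fact:restricted} — to exhibit a \textsl{coordinatewise improvement} (two agents of the cycle swapping their type-$1$ objects), contradicting \textsl{coordinatewise efficiency}.

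The one point that genuinely needs care — and the reason the paper omits the write-up — is the domain bookkeeping accompanying these perturbations. Changes to agent $i$'s marginal orders stay inside $\mathcal{R}_{\pi}$ and so preserve both anchors; the perturbation $\bar{R}_i$ changes agent $i$'s importance order and thus leaves $\mathcal{R}_{\pi}$, but it is used only via direct appeals to \textsl{individual rationality}, \textsl{strategy-proofness}, and \textsl{coordinatewise efficiency}, which are independent of the domain; and any perturbation inside the internal induction that modifies the type-$1$ marginal preferences of agents of $\bar{S}$ on the cycle can only shrink the set of non-restricted agents, so the facts it relies on for the modified profile (namely $f_j=cTTC_j$ on the remaining part of $\bar{S}$, $f^1=cTTC^1$, and execution of the shortened cycle) follow from the induction hypothesis of Lemma~\ref{lemma:step} and Fact~\ref{fact:restricted}, exactly as in the restricted-domain proof of Claim~\ref{claim:cttc2}. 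Once this is verified nothing new remains, which is why the claim is stated without proof.
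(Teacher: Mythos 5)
Your proposal matches the paper's (omitted) argument: the paper justifies this claim exactly by noting that agents in $\bar{S}$ already receive their cTTC allotments (your case $i\in\bar{S}$, where coordinatewise individual rationality of $TTC^2$ finishes immediately) and that the proof of Claim~\ref{claim:cttc2}~/~\ref{claim:cttc6} then carries over for the remaining agents with restricted preferences (your case $i\notin\bar{S}$). Your additional care about which perturbed profiles stay within reach of the induction hypothesis of Lemma~\ref{lemma:step}, the already-proven claims for $\bar{S}$, and Fact~\ref{fact:restricted} is exactly the bookkeeping the paper leaves implicit, so nothing further is needed.
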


\begin{claim}
    For each $R=(R_{\bar{S}},R_{-\bar{S}})\in (\mathcal{R}_s\setminus{\mathcal{R}_{\pi}})^{\bar{S}} \times \mathcal{R}_{\pi}^{-\bar{S}}$,
     $f^2(R)=cTTC^2(R)$.
\end{claim}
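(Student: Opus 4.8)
The plan is to transcribe, essentially verbatim, the proof of Claim~\ref{claim:cttc7}, which in turn is the proof of Claim~\ref{claim:cttc3} with one substitution: every appeal to \textsl{individual rationality} of $f$ is replaced by an appeal to the immediately preceding claim of this induction step, namely that $f^2_i(R)\mathbin{R^2_i}o^2_i$ for every $i\in N$ on the domain $(\mathcal{R}_s\setminus\mathcal{R}_\pi)^{\bar S}\times\mathcal{R}_\pi^{-\bar S}$. Fix $R=(R_{\bar S},R_{-\bar S})$ in this domain and write $y\equiv f(R)$ and $x\equiv cTTC(R)$. The first claim of this induction step already gives $f^2_i(R)=cTTC^2_i(R)$ for every $i\in\bar S$, so only the agents in $N\setminus\bar S$ — who carry restricted preferences in $\mathcal{R}_\pi$ — still need to be pinned down, and the statement reduces to showing that every top trading cycle of $TTC^2$ run on $R^2$ is executed at $y$. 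As in Claim~\ref{claim:cttc3} we proceed step by step: once all cycles removed at earlier steps of $TTC^2$ are known to be executed at $y$, we take a cycle $C$ with agent set $S_C$ at the current step and prove it is executed by induction on $|S_C|$.

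For $|S_C|=1$ the single agent $i$ points to his own type-$2$ endowment, so $o^2_i$ is his most preferred remaining type-$2$ object; since $f^2_i(R)\mathbin{R^2_i}o^2_i$ by the preceding claim and $f^2_i(R)$ is itself a remaining object (earlier cycles being executed), antisymmetry yields $f^2_i(R)=o^2_i$. For $|S_C|=K$, write $S_C=\{1,\dots,K\}$ with $C=(1\to o^2_2\to 2\to\cdots\to K\to o^2_1\to 1)$ and suppose for contradiction that some agent fails to receive the object he points to; since the agents of $\bar S$ already receive their $cTTC$ allotments, this agent lies in $N\setminus\bar S$, say it is agent~$2$. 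Following the two steps of Claim~\ref{claim:cttc3}: first report for agent~$2$ a preference $\hat{R}_2\in\mathcal{R}_\pi$ in which, at type~$2$, only $o^2_3$ is acceptable besides his endowment (set $\hat{R}\equiv(\hat{R}_2,R_{-2})$), so that the preceding claim together with \textsl{strategy-proofness} forces $f^2_2(\hat{R})=o^2_2$ and hence $f^2_1(\hat{R})\neq o^2_2$; then additionally report for agent~$1$ a preference $\hat{\hat{R}}_1\in\mathcal{R}_\pi$ in which, at type~$2$, only $o^2_2$ and $o^2_3$ are acceptable besides his endowment (set $\hat{\hat{R}}\equiv(\hat{\hat{R}}_1,\hat{R}_{-1})$), and compare with the auxiliary profile $(\tilde{R}_1,\hat{\hat{R}}_{-1})$ in which $\tilde{R}^2_1$ has $o^2_3$ on top, at which the type-$2$ trading cycle $C'=(1\to o^2_3\to 3\to\cdots\to K\to o^2_1\to 1)$ of $K-1$ agents forms and is executed by the induction hypothesis; \textsl{strategy-proofness} then forces $f^2_1(\hat{\hat{R}})=o^2_3$, whence the preceding claim forces $f^2_2(\hat{\hat{R}})=o^2_2$. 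But then at $\hat{\hat{R}}$ agents~$1$ and~$2$ can swap their type-$2$ allotments, each becoming strictly better off under $\hat{\hat{R}}^2_1$ and $\hat{R}^2_2$ respectively; by separability this is a \textsl{coordinatewise improvement} over $f(\hat{\hat{R}})$, contradicting \textsl{coordinatewise efficiency} of $f$. Hence $C$ is executed, and peeling through all steps of $TTC^2$ gives $f^2(R)=cTTC^2(R)$.

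The one part I would actually write out, and which I expect to be the main obstacle, is the bookkeeping that knits together the three nested inductions in play — on $|\bar S|$ (the ambient Lemma~\ref{lemma:step}), on the ordinal position of the cycle within $TTC^2$, and on $|S_C|$. One must check that each preference replacement keeps the profile inside a domain on which the appropriate already-established statement holds: replacing a marginal of an agent of $N\setminus\bar S$ leaves $\bar S$ and hence the preceding claim available, while if the predecessor (agent~$1$ above) happens to lie in $\bar S$, passing to $\hat{\hat{R}}_1\in\mathcal{R}_\pi$ shrinks $\bar S$ and the outer induction hypothesis on $|\bar S|$ takes over; one must also verify that the shrunken cycle $C'$ is handled by the $|S_C|=K-1$ instance already proved rather than by an unproved one, and that the cycles peeled off at earlier steps of $TTC^2$ remain executed after the replacements (guaranteed, as in Claims~\ref{claim:cttc1} and \ref{claim:cttc5} together with Fact~\ref{fact:restricted}). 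Once this is spelled out the argument is a verbatim copy of Claim~\ref{claim:cttc3}, so no genuinely new idea is required.
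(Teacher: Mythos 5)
Your proposal is correct and follows exactly the route the paper intends: the paper omits this proof, stating it is obtained from Claim~7 (itself a copy of Claim~3 with \textsl{individual rationality} replaced by the preceding ``coordinatewise individual rationality at type-$2$'' claim), and noting that since agents in $\bar{S}$ already receive their cTTC allotments one only needs to pin down the agents with restricted preferences. Your additional bookkeeping about which induction hypothesis (on $|\bar{S}|$ versus on $|S_C|$) applies after each preference replacement is a faithful and correct filling-in of details the paper leaves implicit.
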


  Hence, we conclude that for each $R\in \mathcal{R}^N_s$, and each $\bar{S}\subseteq N$, such that $|\bar{S}|=K$ and $R_{-\bar{S}}\in \mathcal{R}_{\pi}^{-\bar{S}}$, $f(R)=cTTC(R)$. Thus, the proof of Lemma~\ref{lemma:step} is completed.
  Therefore, Theorem~\ref{thm:cTTC} is proven by applying Lemma~\ref{lemma:step} with $\bar{S}=N$.

\subsection{Proofs in Subsection~\ref{subs:Pairwise efficiency}}\label{appendix:bttc}
We only show the uniqueness here.
Before doing it, we redefine bTTC for lexicographic preferences as follows.

\subsubsection*{Alternative definition of bTTC}\ \\
We restate bTTC for lexicographic preferences by adjusting the multiple-type top trading cycles (mTTC) algorithm from \citet{feng2020}.\medskip

\noindent\textbf{The bundle top trading cycles (bTTC) algorithm / mechanism.} \smallskip

\noindent\textbf{Input.} A multiple-type housing market $(N,e,R)$ with $R\in\mathcal{R}^N_l$.\smallskip

\noindent\textbf{Step~1.} \textbf{\textit{Building step.}} Let $N(1)=N$ and $U(1)=O$. We construct a directed graph $G(1)$ with the set of nodes $N(1)\cup U(1)$. For each $o\in U(1)$, we add an edge from the object to its owner and for each $i\in N(1)$, we add an edge from the agent to his most preferred object in $O$ (according to the linear representation of $R_i$). For each edge  $(i,o)\in N\times O$  we say that agent $i$ points to object $o$.\smallskip

\noindent\textbf{\textit{Implementation step.}} A \textit{trading cycle} is a directed cycle in graph $G(1)$. Given the finite number of nodes, at least one trading cycle exists. We assign to each agent $i$ in a trading cycle the object that he pointed to, and denote the object assigned to him in this step by $a_i(1)$. Moreover, let $e_i(1)$ be the whole endowment of object $a_i(1)$'s owner, and assign the allotment $x_i(1)=\{e_i(1)\}$ to agent $i$. If agent $i\in N$  was not part of a trading cycle, then $x_i(1)=\emptyset$.\smallskip

\noindent\textbf{\textit{Removal step.}} We remove all agents and objects that were assigned in the implementation step, let $N(2)$ and $U(2)$ be the remaining agents and objects, respectively. Go to Step~$2$.\smallskip

In general, at Step $q\ (\geq 2)$ we have the following:\smallskip

\noindent\textbf{Step~$\bm{q}$.} If $U(q)$ (or equivalently $N(q)$) is empty, then stop; otherwise do the following.\smallskip

\noindent\textbf{\textit{Building step.}} We construct a directed graph $G(q)$ with the set of nodes $N(q)\cup U(q)$. For each $o\in U(q)$, we add an edge from the object to its owner and for each $i\in N$, we add an edge from the agent to his most preferred feasible continuation object in $U_i(q)$  (according to the linear representation of $R_i$).\smallskip

\noindent\textbf{\textit{Implementation step.}} A \textit{trading cycle} is a directed cycle in graph $G(q)$. Given the finite number of nodes, at least one trading cycle exists. We assign to each agent $i$ in a trading cycle the object that he pointed to, and denote the object assigned to him in this step by $a_i(q)$.
Moreover, let $e_i(q)$ be the whole endowment of object $a_i(q)$'s owner, and assign the allotment $x_i(q)=\{e_i(q)\}$ to agent $i$. If agent $i\in N$  was not part of a trading cycle, then $x_i(q)=\emptyset$.\smallskip

\noindent\textbf{\textit{Removal step.}} We remove all agents and objects that were assigned in the implementation step, let $N(q+1)$ and $U(q+1)$ be the remaining agents and objects, respectively. Go to Step~$q+1$.\smallskip

\noindent\textbf{Output.} The bTTC algorithm terminates when all objects in $O$ are assigned  (it takes at most $n$ steps). Assume that the final step is Step~$q^*$. Then, the final allocation is $x(q^*)=\{x_1(q^*),\ldots,x_n(q^*)\}$.\smallskip

The \textit{bundle top trading cycles mechanism (bTTC)}, $bTTC$, assigns to each market $R\in \mathcal{R}^N_l$ the allocation $x(q^*)$ obtained by the bTTC algorithm.\medskip

\begin{example}[\textbf{bTTC}]\ \\
    Consider $R \in\mathcal{R}^N_l$  with $N=\{1,2,3\}$, $T= \{H(ouse),C(ar)\}$, $O=\{H_1,H_2,H_3,C_1,C_2,C_3\}$, and
    $$\bm{R_1}:H_2,H_3,\bm{H_1},C_3,C_2,\bm{C_1},$$
    $$\bm{R_2}:C_1,\bm{C_2},C_3, H_3,\bm{H_2},H_1,$$
    $$\bm{R_3}:H_2,H_1,\bm{H_3},C_1,\bm{C_3},C_2.$$	
    
    The bTTC allocation at $R$ is obtained as follows.
    
    \noindent \textbf{Step~$\bm{1}$.} \textbf{\textit{Building step.}} $G(1)=(N\cup O,E(1))$ with set of directed edges \newline $E(1)=\{(H_1,1),(H_2,2),(H_3,3),(C_1,1),(C_2,2),(C_3,3), (1,H_2),(2,C_1),(3,H_2)\}$.\medskip
    
    \noindent\textbf{\textit{Implementation step.}} The trading cycle $1\to H_2\to 2\to C_1\to 1$ forms. Then, $a_1(1)=H_2$, $a_2(1)=C_1$, and $e_1(1)=\{H_2,C_2\}$, $e_2(1)=\{H_1,C_1\}$;  thus, $x_1(1)=\{H_2,C_2\}$, $x_2(1)=\{H_1,C_1\}$, and $x_3(1)=\emptyset$.\medskip
    
    \noindent\textbf{\textit{Removal step.}}
    $N(2)={3}$, $U(2)=\{H_3,C_3\}$.\medskip
    
    \noindent \textbf{Step~$\bm{2}$.} \textbf{\textit{Building step.}} $G(2)=(N(2)\cup U(2),E(2))$ with set of directed edges $E(2)=\{(H_3,3),(C_3,3),(3,H_3)\}$.\medskip
    
    \noindent\textbf{\textit{Implementation step.}} The trading cycle $3\to H_3\to 3$ forms. Then,  $a_3(2)=H_3$ and $e_3(2)=\{H_3,C_3\}$; $x_1(2)=\{H_2,C_2\}$, $x_2(2)=\{H_1,C_1\}$, and $x_3(2)=\{H_3,C_3\}$.\medskip
    
    \noindent\textbf{\textit{Removal step.}} $N(3)=\emptyset$ and $U(3)=\emptyset$.\medskip
    
    Thus, the bTTC algorithm computes the allocation $x=((H_2,C_2) ,(H_1,C_1),(H_3,C_3))$.\hfill~$\diamond$ \label{example:bTTC}
\end{example}

Let $R\in \mathcal{R}^N_l$, let $\mathcal{C}(R)$ be a set of top trading cycles that are obtained at step~1 of the re-defined bTTC above at $R$. We say that a trading cycle $C$ is a \textit{first step top trading cycle} if $C\in \mathcal{C}(R)$.
For each first step top trading cycle $C$, let $S_C\subseteq N$ be the set of agents who are involved in $C$, and for each $i\in S_C$, let $c_i$ be the object that agent $i$ points at in $C$, and $t_i$ be the type of object $c_i$, i.e., $c_i\in O^{t_i}$.
We say that a trading cycle $C$ is \textit{executed} at $f(R)$ if 
for each $i\in S_C$, agent $i$ receives $c_i$ at $f(R)$. 
Moreover, for each $i\in S_C$ and $c_i\in O$, let $i'$ be the owner of $c_i$. Since $i$ and $i'$ are involved in $C$, $i' \in S_C$. We say that a trading cycle C is \textit{fully executed} at $f( R)$ if for each $i\in S_C$, agent $i$ receives $e_{i'}$ at $f( R)$, i.e.,
$f_i(R) = e_{i'}$.\medskip

Next, we show the characterization of bTTC for lexicographic preferences. 

\begin{theorem}
    A mechanism $f:\mathcal{R}^N_l\to X$ is \textsl{individually rational}, \textsl{strategy-proof}, \textsl{non-bossy}, and \textsl{pairwise efficient} if and only if it is bTTC. \label{thm:bttc3}
\end{theorem}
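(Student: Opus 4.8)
Here I would simply cite and adapt known facts: $bTTC$ is individually rational and \emph{group} strategy-proof (hence also strategy-proof and non-bossy) by \citet{fkkgsp}, and it is \textsl{pairwise efficient} by the argument behind Result~\ref{result3}. Concretely, $bTTC(R)$ is a permutation of endowments, say $bTTC_i(R)=e_{\sigma(i)}$; a profitable pairwise swap of whole allotments between $i$ and $j$ would give $e_{\sigma(j)}\mathbin{P_i}e_{\sigma(i)}$ and $e_{\sigma(i)}\mathbin{P_j}e_{\sigma(j)}$, contradicting pairwise efficiency of the TTC outcome of the Shapley--Scarf market $R|^e$. So the content of the theorem is the uniqueness direction.

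\textbf{Uniqueness: overall architecture.} Fix $f:\mathcal{R}^N_l\to X$ satisfying the four properties. First I would record that $f$ is \textsl{monotonic} (Lemma~\ref{lemma:mon}) so that Facts~\ref{fact1} and \ref{fact2} are available. The plan then mirrors the proof of Theorem~\ref{thm:cTTC}. (i) Restrict to a subdomain $\widehat{\mathcal{R}}^N\subsetneq\mathcal{R}^N_l$ on which the bТTC trading cycles are forced in a transparent way --- the analogue of the common-importance-order domain $\mathcal{R}^N_{\pi}$ used for cТTC, chosen so that an agent joining a first-step bТTC cycle has the whole bundle he is meant to receive sitting near the top of his preference, in a way that individual rationality (which is weak under lexicographic preferences) can exploit coordinate by coordinate. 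On $\widehat{\mathcal{R}}^N$ I would prove the analogue of Proposition~\ref{proposition:cttc1}: $f(R)=bTTC(R)$. (ii) Extend from $\widehat{\mathcal{R}}^N$ to all of $\mathcal{R}^N_l$ by replacing agents' preferences one at a time, exactly as in Lemmas~\ref{lemma:basis} and \ref{lemma:step}, using \textsl{strategy-proofness} to show a re-optimizing agent still gets his bТTC bundle --- here it is crucial that $bTTC$ depends on $R_i$ only through $R_i|^e$, which is precisely what the manipulations control.

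\textbf{The key step on the restricted domain.} Induct on the step number $q$ of the bТTC algorithm. For $q=1$, take a first-step top trading cycle $C$ with agent set $S_C$ and show, by a sub-induction on $|S_C|$, that $C$ is \emph{fully} executed at $f(R)$: each $i\in S_C$ receives the whole endowment $e_{i'}$ of the owner $i'$ of the object $i$ points at. The base case $|S_C|=1$ (an agent pointing at his own top object) follows from individual rationality and the restricted structure. The inductive step follows the two-stage manipulation scheme of Claim~\ref{claim:cttc1}: if some agent in $S_C$ fails to get his bТTC allotment, truncate two agents' marginal preferences in succession so as to create a strictly smaller trading cycle (executed by the inductive hypothesis), and then read off two agents who would both strictly gain by swapping their entire allotments --- a \textsl{pairwise efficiency} violation. \textsl{Non-bossiness}, via \textsl{monotonicity} and Fact~\ref{fact2}, is used to promote each intermediate conclusion about a single agent's allotment to a conclusion about the whole allocation, which is what makes the successive manipulations composable. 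Once first-step cycles are pinned down, the same reasoning forces the next deleted layer, yielding an analogue of Fact~\ref{fact:restricted}; removing the traded agents and endowments leaves a smaller bТTC instance and the induction on $q$ closes.

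\textbf{Main obstacle.} The hard part is the sub-induction just described, and within it the passage from ``agent $i$ receives the single object he points to'' to ``agent $i$ receives the whole endowment $e_{i'}$.'' Under lexicographic preferences individual rationality constrains an agent's coordinates only down to the first type in which he strictly improves, so it cannot by itself force whole-bundle trades; the missing leverage comes precisely from \textsl{pairwise efficiency} (which acts on entire allotments) together with \textsl{non-bossiness}, and arranging the chain of preference truncations so that each one yields a legitimate smaller cycle and ultimately a two-agent whole-bundle swap is the delicate, bookkeeping-heavy core of the argument. This is also why, in contrast to Result~\ref{result3}, \textsl{non-bossiness} cannot be omitted: in a multiple-type market an agent trades with different coalitions in different types, so fixing one of his allotments no longer fixes the rest.
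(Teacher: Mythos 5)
Your outline of the uniqueness argument matches the paper's strategy in its essentials: induct on the bTTC step, sub-induct on the size $|S_C|$ of a first-step trading cycle, use truncation manipulations plus \textsl{individual rationality} and \textsl{strategy-proofness} to force execution of the cycle, derive a two-agent whole-bundle swap that violates \textsl{pairwise efficiency} when execution fails, and use \textsl{non-bossiness}/\textsl{monotonicity} (Lemma~\ref{lemma:mon}, Fact~\ref{fact2}) to promote single-coordinate conclusions to whole-allocation ones. Two points, however, separate your proposal from a complete proof. First, the restricted-subdomain-then-extend architecture you import from the cTTC proof is not how the paper handles this theorem, and it is not clear it can be made to work: a first-step bTTC cycle typically contains agents pointing at objects of \emph{different} types, hence with different most-important types, so there is no natural analogue of the common-importance-order domain $\mathcal{R}^N_{\pi}$; the paper instead runs the induction directly on all of $\mathcal{R}^N_l$ (the domain-replacement argument only appears later, when passing from $\mathcal{R}^N_l$ to $\mathcal{R}^N_s$ and $\mathcal{R}^N$ in Theorems~\ref{thm:bttc} and \ref{thm:bttc2}). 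Your unspecified $\widehat{\mathcal{R}}^N$ is therefore a detour you would either have to drop or construct and justify from scratch.

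Second, and more importantly, the step you correctly flag as the ``main obstacle'' --- passing from ``agent $i$ receives the object $c_i$ he points to'' to ``agent $i$ receives the whole endowment $e_{i'}$'' --- is left entirely unexecuted, and it is precisely where the paper spends most of its effort. The missing device is the following: if $y_i^{t}\neq o_{i'}^{t}$ for some type $t$, apply \textsl{monotonicity} to replace $R_i$ by lexicographic preferences that rank $y_i^\tau$ first in every type $\tau$ and make $t$ the most important type (and, when the offending object belongs to an agent $j\notin S_C$, also adjust $R_j$ so that $j$ points at an object inside the original cycle). This keeps $f$ at $y$ but creates a \emph{new} first-step cycle of strictly smaller size through the owner of the offending object; the already-established execution of that smaller cycle assigns some object to two different agents, a contradiction. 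Without this construction (the case analysis $j\in S_C$ versus $j\notin S_C$ in Lemma~\ref{lemma:bttceff2}, and Cases~1 and 2 of Lemma~\ref{lemma:bttceff1}), \textsl{pairwise efficiency} alone does not deliver full execution, since it only rules out mutually profitable swaps of \emph{entire} allotments and says nothing about a single agent holding a mismatched coordinate. So the proposal is a faithful road map of the paper's route, but the decisive argument at its hardest juncture is acknowledged rather than supplied.
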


\subsubsection*{Proof of Theorem~\ref{thm:bttc3}}\ \\

Let $f:\mathcal{R}^N_l\to X$ be \textsl{individually rational}, \textsl{strategy-proof}, \textsl{non-bossy}, and \textsl{pairwise efficient}. Note that by Lemma~\ref{lemma:mon}, $f$ is \textsl{monotonic}. 

We first explain the intuition of the proof. 
Consider a first step top trading cycle that forms at the first step of bTTC. First, we show that if this first-step top trading cycle is formed by only one or two agents, it is fully executed under $f$ (Lemma~\ref{lemma:bttceff1}) . Then, we extend this result to any number of agents under $f$ (Lemma~\ref{lemma:bttceff2}). 
Once we have shown that agents who trade at the first step of bTTC
always receive their bTTC allotments under $f$, we can consider agents who trade at later steps of bTTC. The full execution of 
second step top trading cycles can be shown by following the same proof arguments as for first step top trading cycles; etc.
The formal proof for first step top trading cycles now follows.

\begin{lemma}
    If a mechanism $f:\mathcal{R}^N_l\to X$ is \textsl{individually rational}, \textsl{strategy-proof}, \textsl{non-bossy}, and \textsl{pairwise efficient}, then for each $R\in \mathcal{R}^N_l$, each first step top trading cycle $C(\in \mathcal{C}(R))$ with $|S_C|\leq 2$, $C$ is fully executed at $f(R)$.  \label{lemma:bttceff1}
   \end{lemma}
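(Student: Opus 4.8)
The plan is to fix $R\in\mathcal{R}^N_l$, write $x\equiv f(R)$, and use throughout that $f$ is \textsl{monotonic} (Lemma~\ref{lemma:mon}). The key observation is that, since $R_i$ is lexicographic, the object $c_i$ that an agent $i$ of a first step top trading cycle $C$ points to is the $R^{t_i}_i$-best object of $i$'s most important type $t_i=\pi_i(1)$. I would split full execution into two layers: (A) every $i\in S_C$ receives $c_i$, i.e.\ $f^{t_i}_i(R)=c_i$; and (B) this upgrades to $f_i(R)=e_{i'}$, where $i'$ owns $c_i$. The two engines are: \textsl{individual rationality} combined with \textsl{monotonicity} — which, via Fact~\ref{fact1}, lets me replace an agent's marginal preferences by ``sharp'' ones (the relevant object promoted to the top) without changing $f(R)$, after which \textsl{individual rationality} pins allotments down coordinate by coordinate in order of importance, exactly in the spirit in which \textsl{strategy-proofness} and \textsl{non-bossiness} were used in the proof of Claim~\ref{claim:cttc1}; and \textsl{pairwise efficiency}, which converts a failure of full execution into a strictly profitable whole-bundle swap between the (at most two) agents of $C$.

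Case $|S_C|=1$. Here $c_i=o^{t_i}_i$ is $i$'s own endowed object of his top type, so layer~A is immediate from \textsl{individual rationality}: $f^{t_i}_i(R)=o^{t_i}_i$. For layer~B, suppose $f_i(R)\neq e_i$ and let $\tau$ be the $\pi_i$-most important type with $f^\tau_i(R)\neq o^\tau_i$; \textsl{individual rationality} then forces $f^\tau_i(R)\mathbin{P^\tau_i}o^\tau_i$, so $i$ holds a type-$\tau$ object strictly better than his own. Following the type-$\tau$ trading cycle that $x$ induces and moving, through \textsl{monotonicity} (sharpening) and \textsl{strategy-proofness}, to an auxiliary profile at which $f_i(R)$ is $i$'s essentially unique acceptable improvement over $e_i$, one derives a contradiction with \textsl{pairwise efficiency}; iterating down the importance order gives $f_i(R)=e_i$.

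Case $|S_C|=2$. Write $C=(i\to c_i\to i'\to c_{i'}\to i)$, so $c_i=o^{t_i}_{i'}$, $c_{i'}=o^{t_{i'}}_i$, $t_i=\pi_i(1)$, $t_{i'}=\pi_{i'}(1)$. For layer~A, if $f^{t_i}_i(R)\neq c_i$, then by \textsl{individual rationality} $f^{t_i}_i(R)$ is weakly $R^{t_i}_i$-worse than $c_i$; sharpening $i$'s top marginal so that only $c_i$ sits above $o^{t_i}_i$ (a \textsl{monotonic} transformation at $f_i(R)$ that leaves $c_i$ on top, hence does not change $f(R)$) and invoking \textsl{individual rationality} again forces $f^{t_i}_i(R)=o^{t_i}_i$; symmetrically one may force $f^{t_{i'}}_{i'}(R)=o^{t_{i'}}_{i'}$. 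As in the ``Step~1 / Step~2'' device of Claim~\ref{claim:cttc1}, a further exchange argument shows these two equalities together violate \textsl{pairwise efficiency}, so in fact $f^{t_i}_i(R)=c_i$ and $f^{t_{i'}}_{i'}(R)=c_{i'}$. For layer~B, knowing $f^{t_i}_i(R)=o^{t_i}_{i'}$ and $f^{t_{i'}}_{i'}(R)=o^{t_{i'}}_i$, sharpen the remaining marginals of $i$ and $i'$ so that $e_{i'}$ (resp.\ $e_i$) is their unique acceptable bundle strictly above what they currently hold; \textsl{monotonicity} keeps $f(R)$ fixed, and then any discrepancy between $f_i(R)$ and $e_{i'}$ or between $f_{i'}(R)$ and $e_i$ makes $\{i,i'\}$ a pair that can strictly improve by exchanging their whole allotments — contradicting \textsl{pairwise efficiency}. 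Hence $f_i(R)=e_{i'}$ and $f_{i'}(R)=e_i$.

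The main obstacle is layer~B, and within it the coordination across object types. Because \textsl{pairwise efficiency} only ever compares whole bundles, a spurious per-type gain of an agent inside $C$ cannot be attacked directly; one must first engineer, using \textsl{monotonicity} and \textsl{strategy-proofness}, an auxiliary profile on which the missing trade becomes a strictly Pareto-improving whole-bundle swap for the two cycle agents. This is precisely the role of \textsl{non-bossiness} here — it is what lets us move between profiles without perturbing $f(R)$ — in contrast to the Shapley--Scarf characterization (Result~\ref{result3}), where each agent trades in a single coalition and \textsl{non-bossiness} is redundant (cf.\ the remark on \textsl{non-bossiness} following Theorem~\ref{thm:bttc2} and Example~\ref{example:notNB}).
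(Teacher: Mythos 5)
Your ``layer~A'' (each cycle agent receives the object he points to) is essentially the paper's Claim~\ref{claim:bttceff1}, though one justification is off: sharpening $R^{t_i}_i$ to $\hat{R}^{t_i}_i\colon c_i,o^{t_i}_i,\ldots$ is \emph{not} in general a monotonic transformation at $f_i(R)$ --- if $f^{t_i}_i(R)$ were some object strictly between $c_i$ and $o^{t_i}_i$, that object is demoted below $o^{t_i}_i$, so the lower contour set at $f_i(R)$ shrinks and \textsl{monotonicity} gives you nothing. The paper instead pins down the allotment at the sharpened profile by \textsl{individual rationality} (which restricts it to $\{c_i,o^{t_i}_i\}$) together with \textsl{strategy-proofness} (which rules out $c_i$). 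This is repairable.

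The genuine gap is in your ``layer~B.'' Your stated engine --- that \textsl{pairwise efficiency} ``converts a failure of full execution into a strictly profitable whole-bundle swap between the (at most two) agents of $C$'' --- cannot work. For $|S_C|=1$ there is no pair inside $C$ to swap. For $|S_C|=2$, once layer~A holds, agent $i$ already owns $c_i$, his $R^{t_i}_i$-best object of his most important type; exchanging whole bundles with $i'$ would hand him $y_{i'}^{t_i}\neq c_i$ and thus make him strictly worse off lexicographically, so no pairwise improvement between the two cycle members can ever arise. The spurious coordinate $y_i^{t}=o_j^{t}$ necessarily involves the agent himself or a third agent $j\notin S_C$, and the paper's contradiction is not a direct \textsl{pairwise-efficiency} violation at all: after freezing $f$ at $y$ via genuine monotonic transformations (all marginals topped by $y$) and a \textsl{strategy-proofness}/\textsl{non-bossiness} step for the third agent, one \emph{re-orders the importance order} of the deviating agent so that the spurious trade $(i\to o_j^{t}\to j\to \cdots\to i)$ becomes a first-step two-agent top trading cycle of the modified market, and then invokes the already-established execution claim for that new cycle to contradict who holds $o_1^{t_1}$ (resp.\ $o_1^{t_K}$) under $y$. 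That re-engineering of a new first-step cycle, and the appeal to the execution claim rather than to \textsl{pairwise efficiency} directly, is the idea your proposal is missing.
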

 \begin{proof}[\textbf{Proof}]
 Let $C\in\mathcal{C}(R)$ be a first step top trading cycle that consists of agents $S_C$ with $|S_C|\leq 2$. 
 
We show it by two steps. First, we show that $C$ is executed.
    \begin{claim}
   $C$ is executed. \label{claim:bttceff1}
   \end{claim}
   \begin{proof}

   When $|S_C|=1$. In this case, agent $i\in S_C$ points to one of his endowed objects, i.e., $c_i=o_i^{t_i}$ and hence $C=(i\to c_i\to i)$. 
   Since preferences are lexicographic, i.e., $R_i\in \mathcal{R}_l$, 
   agent $i$ will be strictly worse off if he receives any other type-$t_i$ objects. Thus, $C$ must be executed by \textsl{individual rationality} of $f$.

   When $|S_C|=2$. Without loss of generality, assume that $S_C=\{1,2\}$. By contradiction, assume that $C$ is not executed. Thus, there is an agent $i\in S_C$ does not receive his most preferred object $c_i$. Without loss of generality, let $i=2$. 
   
   Let $\hat{R}_2$ be such that agent $2$ only wants to receive type-$t_2$ object $c_2$ and no other objects, i.e.,
     $$\hat{R}^{t_2}_2:c_2(=o_1^{t_2}),o^{t_2}_2,\ldots,$$
   $$\text{ for each }t\in T\setminus \{t_2\}:\hat{R}_2^t:o_2^t,\ldots, \text{ and }$$
   $$\hat{\pi}_2=\pi_2:t_2,\ldots.$$
   
    Note that at $\hat{R}_2$, if $2$ does not receive $c_2$, then from individual rationality of $f$, he must receive his full endowment $e_2=(o_2^1,\ldots,o_2^m)$. 
    Let 
    $$\hat{R}\equiv (\hat{R}_2,R_{-2}).$$
    
    By \textsl{individual rationality} of $f$, $f_2^{t_2}(\hat{R})\in \{c_2,o_2^{t_2}\}$. By \textsl{strategy-proofness} of $f$, $f_2^{t_2}(R)\neq c_2$ implies that $f_2^{t_2}(\hat{R})\neq c_2$, otherwise instead of $R_2$, agent $2$ has an incentive to misreport $\hat{R}_2$ at $R$. Thus, $f_2^{t_2}(\hat{R})=o_2^{t_2}$. Then, by \textsl{individual rationality} of $f$, $f_2(\hat{R})=e_2$. Thus, agent $1$ cannot receive $c_{1}(\in e_2)$ from agent $2$ because it is assigned to agent $2$. 
    
    Let $\bar{R}_1$ be such that be such that agent $1$ only wants to receive type-$t_1$ object $c_1$ and no other objects, i.e.,

$$\bar{R}^{t_1}_1:c_1(=o_2^{t_1}),o_1^{t_1},\ldots,$$  
$$\text{ for each }t\in T\setminus \{t_1\}:\bar{R}_1^t:o_1^{t},\ldots \text{ and }$$
   $$\bar{\pi}_1=\hat{\pi}_1=\pi_1:t_1,\ldots.$$     

Note that at $\bar{R}_1$, if agent $1$ does not receive $c_1$, then from individual rationality of $f$, he must receive his full endowment $e_1$. 
Let $$\bar{R}\equiv (\bar{R}_1,\hat{R}_2,\hat{R}_3,\ldots,\hat{R}_n)=(\bar{R}_1,\hat{R}_2,R_3,\ldots,R_n).$$


By \textsl{individual rationality} of $f$, $f_1^{t_1}(\bar{R})\in \{c_1,o_1^{t_1}\}$. By \textsl{strategy-proofness} of $f$, $f_1^{t_1}(\hat{R})\neq c_1$ implies that $f_1^{t_1}(\bar{R})\neq c_1$, otherwise instead of $\hat{R}_1$, agent $1$ has an incentive to misreport $\bar{R}_1$ at $\hat{R}$. Thus, $f_1^{t_1}(\bar{R})=o_1^{t_1}$. Then, by \textsl{individual rationality} of $f$, $f_1(\bar{R})=e_1$, and in particular, $f_1^{t_2}(\bar{R})=o_1^{t_2}=c_2$.
Moreover, by \textsl{individual rationality} of $f$, $f_2(\bar{R})=e_2$, and in particular, $f_2^{t_1}(\bar{R})=o_2^{t_1}=c_1$. This implies that $f_2^{t_1}(\bar{R}) \mathbin{P_1^{t_1}} f^{t_1}_1(\bar{R}) $ and $f_1^{t_2}(\bar{R}) \mathbin{P_2^{t_2}} f^{t_2}_2(\bar{R}) $ and hence $f_2(\bar{R})  \mathbin{P_1}f_1(\bar{R}) $ and $f_1(\bar{R})  \mathbin{P_2}f_2(\bar{R}) $, in which contradicts with \textsl{pairwise efficiency} of $f$.

Overall, by the contradiction above we show that at $\bar{R}$, agent $1$ receives $c_1$. Thus, by \textsl{strategy-proofness} of $f$, he also receives $c_1$ at $\hat{R}$; otherwise he has an incentive to misreport $\bar{R}_1$ at $\hat{R}$. Together with \textsl{individual rationality} of $f$, it implies that agent $2$ receives $c_2$ at $\hat{R}$. Therefore, by \textsl{strategy-proofness} of $f$, agent $2$ also receives $c_2$ at $R$; otherwise he has an incentive to misreport $\hat{R}_1$ at $R$.  \medskip
   \end{proof}

Next, we show that $C$ is fully executed. There are two cases.

    \noindent\textbf{Case~1.} $|S_C|=1$. 
   In this case, agent $i\in S_C$ points to one of his endowed objects, i.e., $c_i\in e_i$. Without loss of generality, assume that $S_C=\{1\}$ and $\pi_1:t_1,\ldots$. Thus, $C=(1\to o_1^{t_1}\to 1)$.

   Let $y\equiv f(R)$. By contradiction, suppose that $y_1\neq e_1$. 
   Note that by Claim~\ref{claim:bttceff1}, $y_1^{t_1}=o_1^{t_1}$.  Let $t\in T\setminus\{t_1\}$ be such that $y_1^t\neq o_1^t$. Without loss of generality, assume that agent $1$ receives agent $2$'s endowment of type-$t$ at $y$, i.e., $y_1^t=o_2^t$.
   
   Let $\hat{R}\in \mathcal{R}^N_l$ be such that each agent $j$ positions $y_j$ at the top and changes his importance order as $\pi_1$, i.e., for each agent $j\in N$,  (i) $\hat{\pi}_j=\pi_1:t_1,\ldots$, and (ii) for each $\tau \in T$, $\hat{R}_{j}^\tau:y_{j}^\tau,\ldots$
   
   It is easy to see that $\hat{R}$ is a monotonic transformation of $R$ at $y$. Thus, by monotonicity of $f$, $f(\hat{R})=y$. 
   
   Let $\bar{R}_2$ be such that
   $$\bar{\pi}_2=\hat{\pi}_2(=\pi_1),$$
   $$\bar{R}_2^{t_1}:o_1^{t_1},y_2^{t_1},\ldots, \text{ and }$$
   $$\text{For each }\tau \in T\setminus\{t_1\}, \bar{R}_2^{\tau}=\hat{R}_2^{\tau}.$$
   
   Let $$\bar{R}\equiv (\bar{R}_2,\hat{R}_{-2}).$$
   
   Note that by strategy-proofness of $f$, for type-$t_1$, agent $2$ either receives $o_1^{t_1}$ or $y_2^{t_1}$; otherwise he has an incentive to misreport $\hat{R}_2$ at $\bar{R}$.
   Moreover, $C$ is still a first top trading cycle at $\bar{R}$, i.e., $C\in \mathcal{C}(\bar{R})$. Thus, by Claim~\ref{claim:bttceff1}, $C$ is executed and hence agent $1$ receives $o_1^{t_1}$ at $f(\bar{R})$. See the figure below for the graphical explanation.

   \begin{center}
       \begin{tikzpicture}[scale=0.2]
       \tikzstyle{every node}+=[inner sep=0pt]
\draw [black] (23.4,-20.3) circle (3);
\draw (23.4,-20.3) node {$1$};
\draw [black] (46.9,-20.3) circle (3);
\draw (46.9,-20.3) node {$o_1^{t_1}$};
\draw [black] (46.9,-38.7) circle (3);
\draw (46.9,-38.7) node {$2$};
\draw [black] (23.4,-38.7) circle (3);
\draw (23.4,-38.7) node {$o_2^t$};
\draw [black] (43.9,-20.3) -- (26.4,-20.3);
\fill [black] (26.4,-20.3) -- (27.2,-20.8) -- (27.2,-19.8);
\draw [black] (26.4,-20.3) -- (43.9,-20.3);
\fill [black] (43.9,-20.3) -- (43.1,-19.8) -- (43.1,-20.8);
\draw (35.15,-20.8) node [below] {$\hat{R}_1$};
\draw [black] (26.4,-38.7) -- (43.9,-38.7);
\fill [black] (43.9,-38.7) -- (43.1,-38.2) -- (43.1,-39.2);
\draw [black] (46.9,-35.7) -- (46.9,-23.3);
\fill [black] (46.9,-23.3) -- (46.4,-24.1) -- (47.4,-24.1);
\draw (47.4,-29.5) node [right] {$\bar{R}_2$};
       \end{tikzpicture}
\end{center}

   Thus, agent $2$ still receives $y_2^{t_1}$, and hence by Fact~\ref{fact2}, 
   \begin{equation}
       f(\bar{R})=f(\hat{R})=y \text{ and particularly, } f_1^{t_1}(\tilde{R})=y_1^{t_1}=o_1^{t_1}. \label{equ111}
   \end{equation}

    
   Let $\tilde{R}_1$ be such that agent $1$ only changes his importance order as $t$ is the most important, i.e., $\tilde{\pi}_1:t,\ldots$ and $\tilde{R}_1=(\hat{R}_1^1,\ldots,\hat{R}_1^m,\tilde{\pi}_1)$.
   
   Let $$\tilde{R}\equiv (\tilde{R}_1,\bar{R}_{-1}).$$
   
   By monotonicity of $f$, $f(\tilde{R})=f(\bar{R})=y$. However, at $\tilde{R}$, there is a first step top trading cycle $C'\in \mathcal{C}(\tilde{R})$ consisting of agents $1$ and $2$, i.e., $C'=(1\to o_2^t(=y_1^t)\to 2\to o_1^{t_1}\to 1)$. See the figure below for the graphical explanation.

   \begin{center}
       \begin{tikzpicture}[scale=0.2]
       \tikzstyle{every node}+=[inner sep=0pt]
\draw [black] (23.4,-20.3) circle (3);
\draw (23.4,-20.3) node {$1$};
\draw [black] (46.9,-20.3) circle (3);
\draw (46.9,-20.3) node {$o_1^{t_1}$};
\draw [black] (46.9,-38.7) circle (3);
\draw (46.9,-38.7) node {$2$};
\draw [black] (23.4,-38.7) circle (3);
\draw (23.4,-38.7) node {$o_2^t$};
\draw [black] (26.4,-38.7) -- (43.9,-38.7);
\fill [black] (43.9,-38.7) -- (43.1,-38.2) -- (43.1,-39.2);
\draw [black] (46.9,-35.7) -- (46.9,-23.3);
\fill [black] (46.9,-23.3) -- (46.4,-24.1) -- (47.4,-24.1);
\draw (47.4,-29.5) node [right] {$\bar{R}_2$};
\draw [black] (43.9,-20.3) -- (26.4,-20.3);
\fill [black] (26.4,-20.3) -- (27.2,-20.8) -- (27.2,-19.8);
\draw [black] (23.4,-23.3) -- (23.4,-35.7);
\fill [black] (23.4,-35.7) -- (23.9,-34.9) -- (22.9,-34.9);
\draw (22.9,-29.5) node [left] {$\tilde{R}_1$};
       \end{tikzpicture}
\end{center}
   
   By Claim~\ref{claim:bttceff1}, $C'$ is executed at $f(\tilde{R})$. Thus, $f_2^{t_1}(\tilde{R})=o_1^{t_1}$, which contradicts with the fact that $f_1^{t_1}(\tilde{R})=y_1^{t_1}=o_1^{t_1}$ (see (\ref{equ111})).

    \noindent\textbf{Case~2.} $|S_C|=2$. Without loss of generality, assume that $S_C=\{1,2\}$. Thus, $C=(1\to c_1(=o_2^{t_1})\to 2\to c_2(=o_1^{t_2})\to 1)$.
    By contradiction, assume that $C$ is not fully executed. Without loss of generality, assume that agent $1$ does not receive agent $2$'s full endowments, i.e., $f_1(R)\neq e_2$. Note that by Claim~\ref{claim:bttceff1}, $f_1^{t_1}(R)=c_1=o_2^{t_1}$. Thus, there is a type $t\in T\setminus\{t_1\}$ such that $f_1^{t}(R)\neq o_2^t$. Without loss of generality, assume that agent $1$ receives agent $i$'s endowment of type-$t$, i.e., $f_1^{t}(R)=o_i^t$. 
   Let $y\equiv f(R)$. There are two sub-cases.
   
     \noindent{Sub-case~1.} $i=1$. Let $\hat{R}_1$ be such that 
     $$\text{for each } t\in T, \hat{R}_1^t:y_1^t,\ldots, \text{ and}$$
     $$\hat{\pi}_1:t,\ldots$$
     
     By \textsl{monotonicity} of $f$, $f(\hat{R}_1,R_{-1})=f(R)=y$. Then, we are back to Case~1. 
     
      \noindent{Sub-case~2.} $i\neq 1$. Without loss of generality, assume that $i=3$. Thus, $y_2^{t_2}=o_1^{t_2}$, $y_1^{t_1}=o_2^{t_1}$, and $y_1^t=o_3^t$. We will obtain a contradiction to complete the proof of this sub-case. Let $\hat{R}_3$ be such that 
     $$R_3^{t_2}:o_1^{t_2},y_3^{t_2},\ldots $$ 
     $$\text{for each } t\in T\setminus\{t_2\}, \hat{R}_3^t:y_3^t,\ldots, \text{ and}$$
     $$\hat{\pi}_3:t_2,\ldots$$
     
     Let $$\hat{R}\equiv (\hat{R}_3,R_{-3}).$$
     
     Note that $C$ is still a first step top trading cycle at $\hat{R}$, i.e., $C\in\mathcal{C}(\hat{R})$. Thus, by Claim~\ref{claim:bttceff1}, $C$ is executed. See the figure below for the graphical explanation.

\begin{center}
       \begin{tikzpicture}[scale=0.2]
       \tikzstyle{every node}+=[inner sep=0pt]
       \draw [black] (36,-30.9) circle (3);
       \draw (36,-30.9) node {$o_1^{t_2}$};
       \draw [black] (36,-17.2) circle (3);
       \draw (36,-17.2) node {$1$};
       \draw [black] (52.1,-17.1) circle (3);
       \draw (52.1,-17.1) node {$o_2^{t_1}$};
       \draw [black] (52.1,-30.9) circle (3);
       \draw (52.1,-30.9) node {$2$};
       \draw [black] (18.9,-30.9) circle (3);
       \draw (18.9,-30.9) node {$3$};
       \draw [black] (18.9,-17.1) circle (3);
       \draw (18.9,-17.1) node {$o_3^t$};
       \draw [black] (36,-27.9) -- (36,-20.2);
       \fill [black] (36,-20.2) -- (35.5,-21) -- (36.5,-21);
       \draw [black] (39,-17.18) -- (49.1,-17.12);
       \fill [black] (49.1,-17.12) -- (48.3,-16.62) -- (48.3,-17.62);
       \draw (44.05,-17.66) node [below] {$\hat{R}_1(=R_1)$};
       \draw [black] (52.1,-20.1) -- (52.1,-27.9);
       \fill [black] (52.1,-27.9) -- (52.6,-27.1) -- (51.6,-27.1);
       \draw [black] (49.1,-30.9) -- (39,-30.9);
       \fill [black] (39,-30.9) -- (39.8,-31.4) -- (39.8,-30.4);
       \draw (44.05,-30.4) node [above] {$\hat{R}_2(=R_2)$};
       \draw [black] (18.9,-20.1) -- (18.9,-27.9);
       \fill [black] (18.9,-27.9) -- (19.4,-27.1) -- (18.4,-27.1);
       \draw [black] (21.9,-30.9) -- (33,-30.9);
       \fill [black] (33,-30.9) -- (32.2,-30.4) -- (32.2,-31.4);
       \draw (27.45,-31.4) node [below] {$\hat{R}_3$};
       \end{tikzpicture}
       \end{center}
      Hence, agent $3$ cannot receive $o_1^{t_2}(=c_2)$. Thus, by \textsl{strategy-proofness} of $f$, he still receives $y_3$, i.e., $f_3(\hat{R})=y_3$. Therefore, by \textsl{non-bossiness} of $f$, $f(\hat{R})=y$.     
     
     Let $\bar{R}_1$ be such that 
     $$\text{for each }t\in T, \bar{R}_1^t: y_1^t,\ldots, \text{ and }$$
     $$\bar{\pi}_1:t,\ldots$$
Let $$\bar{R}\equiv (\bar{R}_1,\hat{R}_{-1}).$$

Then, since $\bar{R}_1$ is a monotonic transformation of $\hat{R}_1$ at $y$, $f(\bar{R})=y$. In particular,
\begin{equation}
   f_2^{t_1}(\bar{R})=o_1^{t_2}(=c_2). \label{eeeee}
\end{equation}

Note that at $\bar{R}$, there is a first step top trading cycle $C'=(1\to y_1^t(=o_3^t)\to 3\to o_1^{t_2}\to 1)$ that involves two agents. Thus, by Claim~\ref{claim:bttceff1}, $C'$ is executed. See the figure below for the graphical explanation.

\begin{center}
       \begin{tikzpicture}[scale=0.2]
       \tikzstyle{every node}+=[inner sep=0pt]
       \draw [black] (36,-30.9) circle (3);
       \draw (36,-30.9) node {$o_1^{t_2}$};
       \draw [black] (36,-17.2) circle (3);
       \draw (36,-17.2) node {$1$};
       \draw [black] (52.1,-17.1) circle (3);
       \draw (52.1,-17.1) node {$o_2^{t_1}$};
       \draw [black] (52.1,-30.9) circle (3);
       \draw (52.1,-30.9) node {$2$};
       \draw [black] (18.9,-30.9) circle (3);
       \draw (18.9,-30.9) node {$3$};
       \draw [black] (18.9,-17.1) circle (3);
       \draw (18.9,-17.1) node {$o_3^t$};
       \draw [black] (36,-27.9) -- (36,-20.2);
       \fill [black] (36,-20.2) -- (35.5,-21) -- (36.5,-21);
       \draw [black] (52.1,-20.1) -- (52.1,-27.9);
       \fill [black] (52.1,-27.9) -- (52.6,-27.1) -- (51.6,-27.1);
       \draw [black] (49.1,-30.9) -- (39,-30.9);
       \fill [black] (39,-30.9) -- (39.8,-31.4) -- (39.8,-30.4);
       \draw (44.05,-30.4) node [above] {$\bar{R}_2(=R_2)$};
       \draw [black] (18.9,-20.1) -- (18.9,-27.9);
       \fill [black] (18.9,-27.9) -- (19.4,-27.1) -- (18.4,-27.1);
       \draw [black] (21.9,-30.9) -- (33,-30.9);
       \fill [black] (33,-30.9) -- (32.2,-30.4) -- (32.2,-31.4);
       \draw (27.45,-30.4) node [above] {$\bar{R}_3(=\hat{R}_3)$};
       \draw [black] (33,-17.18) -- (21.9,-17.12);
       \fill [black] (21.9,-17.12) -- (22.7,-17.62) -- (22.7,-16.62);
       \draw (27.45,-16.62) node [above] {$\bar{R}_1$};
       \end{tikzpicture}
       \end{center}

It implies that 
$f_3^{t_2}(\bar{R})=o_1^{t_2}$, which contradicts with $f(\bar{R})=y=f(\hat{R})$ and (\ref{eeeee}). 
\end{proof}

\begin{lemma}
    If a mechanism $f:\mathcal{R}^N_l \to X$ is \textsl{individually rational}, \textsl{strategy-proof}, \textsl{non-bossy}, and \textsl{pairwise efficient}, then for each $R\in \mathcal{R}^N_l$, each first step top trading cycle $C(\in \mathcal{C}(R) )$ is fully executed under $f$ at $R$. \label{lemma:bttceff2}
   \end{lemma}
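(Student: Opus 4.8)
The plan is to prove Lemma~\ref{lemma:bttceff2} by induction on the size $|S_C|$ of the first step top trading cycle, taking Lemma~\ref{lemma:bttceff1} as the base case $|S_C|\leq 2$. So fix $R\in\mathcal{R}^N_l$ and a first step top trading cycle $C\in\mathcal{C}(R)$ with $|S_C|=K\geq 3$, and assume that in every market every first step top trading cycle with fewer than $K$ agents is fully executed under $f$. Relabel agents so that $S_C=\{1,\dots,K\}$ and $C=(1\to c_1\to 2\to\cdots\to K\to c_K\to 1)$, where $c_i\in O^{t_i}$ is owned by agent $i+1$ (indices mod $K$). Because $C$ forms at step~$1$ of bTTC, where $U(1)=O$, each $c_i$ is agent $i$'s overall most preferred object, so $t_i$ is $i$'s most important type and $c_i$ his $R_i^{t_i}$-best type-$t_i$ object. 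Mirroring the two-stage pattern of the proof of Lemma~\ref{lemma:bttceff1}, I will first show that $C$ is \emph{executed} (for each $i\in S_C$, the type-$t_i$ coordinate of $f_i(R)$ is $c_i$) and then that $C$ is \emph{fully executed} ($f_i(R)=e_{i+1}$ for each $i\in S_C$).

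\emph{Stage 1: $C$ is executed.} Suppose not; relabeling cyclically we may assume $f_2^{t_2}(R)\neq c_2$. Mimicking Step~1 and Step~2 of the proof of Claim~\ref{claim:cttc1}, I first let agent~$2$ report a lexicographic preference keeping $t_2$ most important and making $c_2$ the only type-$t_2$ object acceptable above his endowment $o_2^{t_2}$; by \textsl{individual rationality} and \textsl{strategy-proofness} this forces him to keep $o_2^{t_2}$, hence (again by \textsl{individual rationality}) his whole endowment $e_2$, so $c_1\in e_2$ becomes unavailable to agent~$1$. I then pass, via a \textsl{strategy-proofness} chain exactly as in Claim~\ref{claim:cttc1} (controlling the coordinates away from $t_1$ with the help of Facts~\ref{fact1} and~\ref{fact2}), to a report $\tilde R_1$ under which agent~$1$ points \emph{directly} at $c_2$, so that the \emph{collapsed} cycle $C'=(1\to c_2\to 3\to c_3\to\cdots\to K\to c_K\to 1)$ of size $K-1$ forms at step~$1$. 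By the induction hypothesis $C'$ is fully executed, so agent~$1$ receives $e_3$ there, and transporting this back through the chain yields a profile at which agent~$1$ receives $e_3$ while \textsl{individual rationality} pins agent~$2$ down to $e_2$. But then the pair $\{1,2\}$ could each be made strictly better off by exchanging their whole bundles: agent~$1$ would trade the type-$t_1$ coordinate $o_3^{t_1}$ of $e_3$ for his best object $c_1$, and agent~$2$ would trade the type-$t_2$ coordinate $o_2^{t_2}$ of $e_2$ for his best object $c_2$, and since $t_1$ (resp.\ $t_2$) is the most important type of agent~$1$ (resp.\ $2$), each is lexicographically improved. This contradicts \textsl{pairwise efficiency}, so $C$ is executed at $R$. (As in Fact~\ref{fact:restricted}, the possible coexistence of other cycles in $\mathcal{C}(R)$ is harmless: all deviations above stay within $S_C$ and the objects owned by $S_C$.)

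\emph{Stage 2: $C$ is fully executed.} Here I follow Cases~1 and~2 of the proof of Lemma~\ref{lemma:bttceff1}. Suppose $C$ is executed but not fully executed; then for some $i\in S_C$ and some type $\tau\neq t_i$ we have $f_i^\tau(R)\neq o_{i+1}^\tau$, say $f_i^\tau(R)=o_j^\tau$ with $j\neq i+1$; the sub-case $j=i$ reduces to $j\notin\{i,i+1\}$ after a monotonic transformation, just as in Sub-case~1 of Lemma~\ref{lemma:bttceff1}. Using \textsl{monotonicity} (Lemma~\ref{lemma:mon}) I replace every agent's preference by a monotonic transformation placing his realized allotment on top and making $\tau$ agent~$i$'s most important type, which leaves $f(R)$ unchanged; after this replacement a first step top trading cycle built from agents~$i$, $j$ and, if needed, $i+1$ involves at most two agents --- or, if the argument must descend once more inside $C$, strictly fewer than $K$ agents --- and must therefore be executed by Lemma~\ref{lemma:bttceff1} or fully executed by the induction hypothesis, contradicting $f_i^\tau(R)=o_j^\tau\neq o_{i+1}^\tau$. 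The bookkeeping (introducing the auxiliary profiles $\hat R$, $\bar R$, $\tilde R$, invoking Fact~\ref{fact2}, and tracking the graphs) is precisely that of Cases~1--2 of Lemma~\ref{lemma:bttceff1}, now carried out inside a $K$-agent cycle. Combining the two stages closes the induction and proves Lemma~\ref{lemma:bttceff2}.

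The main obstacle is Stage~1, and specifically the collapsing step. In the one-dimensional argument of Claim~\ref{claim:cttc1} every object of the cycle lies in a single type, so an agent whose pointer fails can be rerouted to the ``next'' object of that same type by a one-line change of his marginal; here the object $c_2$ to which agent~$1$ must be rerouted typically belongs to a \emph{different} type $t_2\neq t_1$, so rerouting forces a change of agent~$1$'s importance order, and one must design the intermediate lexicographic preferences so that at each stage the relevant agent simultaneously (a) cannot obtain his best object by a \textsl{strategy-proofness}-from-an-earlier-profile argument and (b) falls back onto exactly the object needed for the collapsed cycle --- this is where Facts~\ref{fact1} and~\ref{fact2} do the real work. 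The second delicate point is that the final welfare contradiction must be writable as a \emph{two-agent, whole-bundle} exchange even though $C$ may be long; this is exactly why \textsl{pairwise efficiency} (rather than a one-type notion such as \textsl{coordinatewise efficiency}, which instead singles out cTTC) is the right hypothesis, and why the collapsing step must drive the configuration down to the two specific agents whose endowments contain $c_1$ and $c_2$.
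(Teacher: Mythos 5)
Your overall architecture (induction on $|S_C|$ with Lemma~\ref{lemma:bttceff1} as the base, a first stage proving execution and a second proving full execution, and a final two-agent whole-bundle swap between agents $1$ and $2$ violating \textsl{pairwise efficiency}) matches the paper. But there is a genuine gap in Stage~1, precisely at the ``collapsing step'' that you yourself flag as the main obstacle: you reroute agent~$1$ to point \emph{directly at $c_2=o_3^{t_2}$}, an object of type $t_2\neq t_1$, which forces a change of agent~$1$'s importance order. None of the tools you invoke can bridge two reports of agent~$1$ with \emph{different} importance orders: Fact~\ref{fact1} and Fact~\ref{fact2} both explicitly require $\pi_i=\hat\pi_i$, and \textsl{monotonicity} (Lemma~\ref{lemma:mon}) does not apply because a reordering of types is not in general a monotonic transformation at the realized allotment. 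Worse, the final contradiction cannot be stated at the profile where the collapsed cycle forms: there agent~$1$ has $t_2$ on top and $c_2$ first in $R_1^{t_2}$, so he \emph{prefers} $e_3\ni c_2$ to $e_2$, and the pairwise swap with agent~$2$ is no longer improving for him. Transporting the conclusion ``agent~$1$ receives $e_3$'' back to a profile where $t_1$ is most important is exactly the step that \textsl{strategy-proofness} alone does not deliver, and your sketch does not supply it.

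The paper's device is different and avoids the problem entirely: agent~$1$ is rerouted \emph{within his own most important type $t_1$}. One sets $\bar R_1^{t_1}:c_1(=o_2^{t_1}),\,o_3^{t_1},\,o_1^{t_1},\ldots$ keeping $\bar\pi_1=\pi_1$; after \textsl{strategy-proofness} rules out $c_1$, dropping $c_1$ to obtain $\tilde R_1^{t_1}:o_3^{t_1},o_1^{t_1},\ldots$ is a monotonic transformation at $f(\bar R)$, so $f(\bar R)=f(\tilde R_1,\bar R_{-1})$ exactly. At $(\tilde R_1,\bar R_{-1})$ the collapsed first-step cycle is $C'=(1\to o_3^{t_1}\to 3\to c_3\to\cdots\to K\to c_K\to 1)$ of size $K-1$, and it is its \emph{full} execution (the induction hypothesis) that hands agent~$1$ the entire bundle $e_3$ --- and in particular its type-$t_2$ component $c_2$ --- while agent~$1$'s report still has $t_1$ most important with $c_1$ on top. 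Since \textsl{individual rationality} pins agent~$2$ to $e_2\ni c_1$, swapping the whole bundles $e_3$ and $e_2$ strictly improves both agents lexicographically, contradicting \textsl{pairwise efficiency}. In short: do not point agent~$1$ at $c_2$; point him at agent~$3$'s object \emph{of type $t_1$} and let full execution of the shorter cycle deliver $c_2$ as a by-product. With that correction, your Stage~2 sketch (which, as in the paper, splits into $j\in S_C$, handled by the induction hypothesis after a monotonic transformation, and $j\notin S_C$, which additionally needs \textsl{non-bossiness} to freeze the allocation after modifying agent~$j$'s report) goes through along the lines of Lemma~\ref{lemma:bttceff1}.
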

   \begin{proof}[\textbf{Proof}]
   Let $C\in\mathcal{C}(R)$ be a first step top trading cycle that consists of agents $S_C\subseteq N$. We prove this lemma by induction on $|S_C|$.
   
    \noindent\textbf{\textit{Induction Basis.}} $|S_C|\leq 2$. This is done by Lemma~\ref{lemma:bttceff1}. \smallskip
    
   \noindent\textbf{\textit{Induction hypothesis.}} Let $K\in \{3,\ldots,n\}$. Suppose that $C$ is fully executed when $|S_C|<K$. 
   
   \noindent\textbf{\textit{Induction step.}} Let $|S_C|=K$. Without loss of generality, assume that $S_C=\{1,\ldots,K\}$ and $C=(1\to c_1\to 2\to c_2\to\ldots \to K\to c_K\to 1)$. 
   
   Similar to Lemma~\ref{lemma:bttceff1}, we first show that $C$ is executed.
   \begin{claim}
   $C$ is executed.\label{claim:bttc3}
   \end{claim}
   \begin{proof}

    By contradiction, assume that $C$ is not executed. Thus, there is an agent $i\in S_C$ who does not receive $c_i$, i.e., $f_i^{t_i}(R)\neq c_i$. Without loss of generality, let $i=2$.
    
      Let $\hat{R}_2$ be such that agent $2$ only wants to receive type-$t_2$ object $c_2$ and no other objects, i.e.,
     $$\hat{R}^{t_2}_2:c_2(=o_3^{t_2}),o^{t_2}_2,\ldots,$$
   $$\text{ for each }t\in T\setminus \{t_2\}:\hat{R}_2^t:o_2^t,\ldots, \text{ and }$$
   $$\hat{\pi}_2=\pi_2:t_2,\ldots.$$
   
    Note that at $\hat{R}_2$, if agent $2$ does not receive $c_2$, then from individual rationality of $f$, he must receive his full endowment $e_2$.
   
   Let $\hat{R}\equiv (\hat{R}_2,R_{-2})$. We proceed in two steps.
   
   \noindent\textbf{Step~1.} We show that agent $2$ receives $c_2$ under $f$ at $\hat{R}$, i.e., $f_2^{t_2}(\hat{R})=c_2$.
   
   By \textsl{individual rationality} of $f$, $f_2^{t_2}(\hat{R})\in \{c_2,o_2^{t_2}\}$. By \textsl{strategy-proofness} of $f$, $f_2^{t_2}(R)\neq c_2$ implies that $f_2^{t_2}(\hat{R})\neq c_2$, otherwise instead of $R_2$, agent $2$ has an incentive to misreport $\hat{R}_2$ at $R$. Thus, $f_2^{t_2}(\hat{R})=o_2^{t_2}$. Then, by \textsl{individual rationality} of $f$, $f_2(\hat{R})=e_2$. Thus, agent $1$ cannot receive $c_{1}(\in e_2)$ from agent $2$ because it is assigned to agent $2$. 
   
   Let $y\equiv f(\hat{R})$. Overall, we find that
   \begin{equation}
      y_2=e_2\text{ and }y_1^{t_1}\neq c_1(=o_2^{t_1}). \label{equ1}
   \end{equation}

Let $\bar{R}_1$ be such that

$$\bar{R}^{t_1}_1:c_1(=o_2^{t_1}),o_3^{t_1},o_1^{t_1},\ldots,$$  
$$\text{ for each }t\in T\setminus \{t_1\}:\bar{R}_1^t:=\hat{R}_1^t(=R_1^t) \text{ and }$$
   $$\bar{\pi}_1=\hat{\pi}_1=\pi_1:t_1,\ldots$$     

Note that $\bar{R}_1$ and $\hat{R}_1$ only differ in type-$t_1$ marginal preferences. Let $$\bar{R}\equiv (\bar{R}_1,\hat{R}_2,\hat{R}_3,\ldots,\hat{R}_n).$$

To obtain the contradiction, we want to show that at $\bar{R}$, agent $1$ receives $c_1$ and agent $2$ receives $c_2$, i.e., $f_1^{t_1}(\bar{R})=c_1=o_2^{t_1}$ and $f_2^{t_2}(\bar{R})=c_2=o_3^{t_2}$. \medskip

By \textsl{individual rationality} of $f$, $f_1^{t_1}(\bar{R})\in \{o_2^{t_1},o_3^{t_1},o_1^{t_1}\}$. By \textsl{strategy-proofness} of $f$, $f_1^{t_1}(\hat{R})\neq o_2^{t_1}$ implies that $f_1^{t_1}(\bar{R})\neq o_2^{t_1}$, otherwise instead of $\hat{R}_1$, agent $1$ has an incentive to misreport $\bar{R}_1$ at $\hat{R}$.

Thus, $f_1^{t_1}(\bar{R})\in \{o_3^{t_1},o_1^{t_1}\}$. Next, we show that $f_1(\bar{R})=e_3$ and hence  $f_1^{t_2}(\bar{R})=o_3^{t_2}=c_2$. 

Let $\tilde{R}_1$ be such that  

$$\tilde{R}^{t_1}_1:o_3^{t_1},o_1^{t_1},\ldots,$$  
$$\text{ for each }t\in T\setminus \{t_1\}:\tilde{R}_1^t:=\bar{R}_1^t(=R_1^t) \text{ and }$$
$$\tilde{\pi}_1=\bar{\pi}_1=\hat{\pi}_1=\pi_1:t_1,\ldots$$   

Since $f_1^{t_1}(\bar{R})\neq c_1$, $\tilde{R}_1$ is a monotonic transformation of $\bar{R}_1$ at $f(\bar{R})$.  Thus, 
\begin{equation}
    f(\bar{R})=f(\tilde{R}_1,\bar{R}_{-1}). \label{equ12345}
\end{equation}

Note that at $(\tilde{R}_1,\bar{R}_{-1})$, there is a first step top trading cycle $C'=(1\to o_3^{t_1}\to 3\to c_3\to\ldots \to K\to c_K\to 1)$. Since $C'\in \mathcal{C}(\tilde{R}_1,\bar{R}_{-1})$ and $|S_{C'}|=K-1$, by the Induction hypothesis, $C'$ is fully executed. Thus, $f_1(\tilde{R}_1,\bar{R}_{-1})=e_3$, and in particular, $f_1^{t_2}(\tilde{R}_1,\bar{R}_{-1})=o_3^{t_2}=c_2$. Together with (\ref{equ12345}), we conclude that $f_1^{t_2}(\bar{R})=o_3^{t_2}=c_2$ and $f_1(\bar{R})=e_3$. Therefore, $f_2^{t_2}(\bar{R})\neq o_3^{t_1}=c_2$. Hence, by \textsl{individual rationality} of $f$, $f_2(\bar{R})=e_2$, and in particular, $f_2^{t_1}(\bar{R})=o_2^{t_1}=c_1$. 

This implies that $c_1=f_2^{t_1}(\bar{R}) \mathbin{\bar{P}_1^{t_1}} f^{t_1}_1(\bar{R}) $ and $c_2=f_1^{t_2}(\bar{R}) \mathbin{\bar{P}_2^{t_2}} f^{t_2}_2(\bar{R}) $. Hence, $f_2(\bar{R})  \mathbin{\bar{P}_1}f_1(\bar{R}) $ and $f_1(\bar{R})  \mathbin{\bar{P}_2}f_2(\bar{R}) $, in which contradicts with \textsl{pairwise efficiency} of $f$.

Overall, by contradiction we show that at $\bar{R}$, agent $1$ receives $c_1$.  Together with \textsl{individual rationality} of $f$, it implies that agent $2$ receives $c_2$ at $\bar{R}$. 
Subsequently, by \textsl{strategy-proofness} of $f$, agent $1$ also receives $c_1$ at $\hat{R}$; otherwise he has an incentive to misreport $\bar{R}_1$ at $\hat{R}$. Again, together with \textsl{individual rationality} of $f$, it implies that agent $2$ receives $c_2$ at $\hat{R}$.

   \noindent\textbf{Step~2.} We show that agent $2$ receives $c_2$ under $f$ at $R$, i.e., $f_2^{t_2}(R)=c_2$.
   
   Note that $c_2$ is agent $2$'s most preferred type-$t_2$ object at $R_2$. By \textsl{strategy-proofness} of $f$, $f_2(R)\mathbin{R_2}f_2(\hat{R})$. Hence, $f_2^{t_2}(R) \mathbin{R^{t_2}_2} f_2^{t_2}(\hat{R}) $, which implies that $f_2^{t_2}(R)=c_2$.   
   \end{proof}

   Next, we show that $C$ is fully executed at $f(R)$. Let $x\equiv bTTC(R)$, $y\equiv f(R)$. Note that if $C$ is fully executed, then for each $i\in S_C$, $y_i=f_i(R)=x_i$.
   
   By contradiction, suppose that there is an agent $i\in S_C$ such that $y_i \neq x_i$. Without loss of generality, let $i=1$. By Claim~\ref{claim:bttc3}, $C$ is executed under $f$ at $R$. In particular,
   \begin{equation}\label{y1t1}
       y_1^{t_1}=o_2^{t_1}=x_1^{t_1} \mbox{ and } y_K^{t_K}=o_1^{t_K}=x_K^{t_K}.
   \end{equation}
    Since $y_1 \neq x_1(=e_2)$, there is a type $t \in T\setminus \{t_1\}  $
   and an agent $j \neq 2$ such that  $y_1^t =o_j^t$.  There are two cases.
   \smallskip
   
   \noindent \textbf{Case 1: $j\in S_C$}.
   Let $\hat{R}_1$ such that agent $1$ positions $y_1$ at the top and moves $t$ to  the most important, i.e.,
   (i) $\hat{\pi}_1:t,\ldots$; and (ii) for each $\tau\in T$, $\hat{R}_1^\tau: y_1^\tau,\ldots$. Since $\hat{R}_1$ is a monotonic transformation of $R_1$ at $y$, we have
   \begin{equation}\label{whywhy}
       f(\hat{R}_1,R_{-1})=f(R)=y \text{ and particularly, }f_1^{t_1}(\hat{R}_1,R_{-1})=o_2^{t_1} .
   \end{equation}
   Note that there is a first step top trading cycle $C'\equiv (1\to o_j^{t}\to j\to o_{j+1}^{t_j}\to j+1\to \cdots \to K\to o_1^{t_K}\to 1)$ at $(\hat{R}_1,R_{-1})$. i.e., $C'\in \mathcal{C}(\hat{R}_1,R_{-1})$. Since $j\neq 2$, $C'$ contains less than $K$ agents. Thus, by the induction hypothesis, $C'$ is fully executed at $f(\hat{R}_1,R_{-1})$. Therefore, $f_1(\hat{R}_1,R_{-1})=e_j$ and hence
   $f_1^{t_1}(\hat{R}_1,R_{-1})=o_j^{t_1}$, which contradicts with the fact that $f_1^{t_1}(\hat{R}_1,R_{-1})=o_2^{t_1}$ (see (\ref{whywhy})).
   \bigskip
   
   \noindent \textbf{Case 2: $j\not \in S_C$}.
   Let $\hat{R}_j$ be such that
   $$\hat{R}_j^{t_K}: o_1^{t_K},y_j^{t_K},\ldots$$
   $$\text{for each } \tau \in T\backslash \{t_K\},\hat{R}_j^\tau: y_j^\tau,\ldots, \text{ and}$$
   $$\hat{\pi}_j:t_K,\ldots$$

   Let $$\hat{R}\equiv (\hat{R}_j,R_{-j}).$$
   Note that $C$ is still a first step top trading cycle at $\hat{R}$, and hence, by Claim~\ref{claim:bttc3}, $C$ is executed. In particular, with (\ref{y1t1}), we have
   \begin{equation}\label{o1t1}
       f_K^{t_K}(\hat{R})=y_K^{t_K}=o_1^{t_K}.
   \end{equation}
   
   So, agent $j$ does not receive $o_1^{t_K}$ at $f(\hat{R})$. So, by strategy-proofness of $f$, $f_j(\hat{R})=y_j$; otherwise he has an incentive to misreport $R_j$ at $\hat{R}$. So, by non-bossiness of $f$, $f(\hat{R})=y$.
   
   Let $\bar{R}_1$ be such that agent $1$ positions $y_1$ at the top and moves $t$ to  the most important, i.e.,
   $$\text{for each }\tau\in T,\bar{R}_1^\tau: y_1^\tau,\ldots \text{ and}$$
   $$\bar{\pi}_1:t,\ldots$$
   Let
   $$\bar{R}\equiv (\bar{R}_1,\hat{R}_{-1} ).$$
   Since $\bar{R}_1$ is a monotonic transformation of $ \hat{R}_1(=R_1)$ at $y$, we have
   \begin{equation}\label{bary}
   f(\bar{R} )=y\text{ and particularly, }f_K^{t_K}(\bar{R})=o_1^{t_K}.
   \end{equation}

   Note that $C'\equiv (1\to y_1^t(=o_j^t) \to j \to o_1^{t_K}\to 1)$ is a first step top trading cycle at $\bar{R}$, i.e., $C' \in \mathcal{C}(\bar{R})$. See the figure below for the graphical explanation.
   \begin{center}
       \begin{tikzpicture}[scale=0.2]
       \tikzstyle{every node}+=[inner sep=0pt]
\draw [black] (23.4,-20.3) circle (3);
\draw (23.4,-20.3) node {$1$};
\draw [black] (46.9,-20.3) circle (3);
\draw (46.9,-20.3) node {$o_1^{t_K}$};
\draw [black] (46.9,-38.7) circle (3);
\draw (46.9,-38.7) node {$j$};
\draw [black] (23.4,-38.7) circle (3);
\draw (23.4,-38.7) node {$o_j^t$};
\draw [black] (26.4,-38.7) -- (43.9,-38.7);
\fill [black] (43.9,-38.7) -- (43.1,-38.2) -- (43.1,-39.2);
\draw [black] (46.9,-35.7) -- (46.9,-23.3);
\fill [black] (46.9,-23.3) -- (46.4,-24.1) -- (47.4,-24.1);
\draw (47.4,-29.5) node [right] {$\bar{R}_j(=\hat{R}_j)$};
\draw [black] (43.9,-20.3) -- (26.4,-20.3);
\fill [black] (26.4,-20.3) -- (27.2,-20.8) -- (27.2,-19.8);
\draw [black] (23.4,-23.3) -- (23.4,-35.7);
\fill [black] (23.4,-35.7) -- (23.9,-34.9) -- (22.9,-34.9);
\draw (22.9,-29.5) node [left] {$\bar{R}_1$};
       \end{tikzpicture}
\end{center}
Thus, by Claim~\ref{claim:bttc3}, cycle $C'$ is executed at $f(\bar{R})$. Therefore, $f_j^{t_K}(\bar{R})=o_1^{t_K}$. Since $j\not \in S_C$, this
   contradicts with the fact that $f_K^{t_K}(\hat{R})=o_1^{t_K}$ (see (\ref{bary})).    
\end{proof}
 By Lemma~\ref{lemma:bttceff2}, we have shown that agents who trade at step~1 of the bTTC algorithm always receive their bTTC allotments under $f$. Next, we can consider agents who trade at step~2 of the bTTC algorithm by
following the same proof arguments as for first step trading cycles, and so on. Thus, the proof of Theorem~\ref{thm:bttc3} is completed.

\subsubsection*{Proof of Theorems~\ref{thm:bttc} and \ref{thm:bttc2}}\ 
Next, we extend Theorem~\ref{thm:bttc3} from lexicographic preferences to separable preferences.\medskip

Let $f:\mathcal{R}^N_s\to X$ be \textsl{individually rational}, \textsl{strategy-proof}, \textsl{non-bossy}, and \textsl{pairwise efficient}. Note that by Lemma~\ref{lemma:mon}, $f$ is \textsl{monotonic}.
    
    Let $S\subseteq N$ and $R \in \mathcal{R}^N_s$ be such that only agents in $S$ do no have lexicographic preferences, i.e., $R_S\not \in \mathcal{R}_l^S$ and $R_{-S}\in \mathcal{R}_l^{-S}$. We show that $f(R)=bTTC(R)$ by induction on $|S|$.
    
    \noindent We first consider $S=\{i\}$, i.e., $|S|=1$ as the induction basis.
    
    Let $x\equiv f(R)$ and $y\equiv bTTC(R)$. 
    
    Let $\hat{R}_i\in \mathcal{R}_l$ be such that for each $t\in T$, $\hat{R}_i^t:x_i^t,\ldots$
    
    By \textsl{monotonicity} of $f$, $f(\hat{R}_i,R_{-i}) =x$. Note that $(\hat{R}_i,R_{-i}) \in \mathcal{R}^N_l$. Thus, by Theorem~\ref{thm:bttc3}, $f$ coincides with bTTC, i.e., $bTTC(\hat{R}_i,R_{-i}) =f(\hat{R}_i,R_{-i}) =x$.
    
    Let $\bar{R}_i\in \mathcal{R}_l$ be such that (a) $\bar{\pi}_i=\hat{\pi}_i$; and (b) for each $t\in T$, $\bar{R}_i^t:y_i^t,\ldots$
    
    By \textsl{monotonicity} of $bTTC$, $bTTC(\bar{R}_i,R_{-i})=y$. 
    Note that $(\bar{R}_i,R_{-i}) \in \mathcal{R}^N_l$. Thus, again by Theorem~\ref{thm:bttc3}, $f(\bar{R}_i,R_{-i})=bTTC(\bar{R}_i,R_{-i})=y$.
    
    By \textsl{strategy-proofness} of $bTTC$, $y_i =bTTC_i(R)\mathbin{R_i}bTTC_i (\hat{R}_i,R_{-i})= x_i$; by \textsl{strategy-proofness} of $f$, $x_i=f_i(R) \mathbin{R_i} f_i(\bar{R}_i,R_{-i})=y_i$.
    Thus, $x_i=y_i$. Subsequently, by \textsl{non-bossiness} of $bTTC$, $x=bTTC(\hat{R}_i,R_{-i})=bTTC(\bar{R}_i,R_{-i})=y$.
    
    We can apply repeatedly the same argument to obtain that for $|S|\in \{2,\ldots,n\}$, and for each profile $R\in \mathcal{R}^N_s$
    where exactly $|S|$ agents have non-lexicographic preferences, $f(R)=bTTC(R)$. 
    Thus, for each $R\in \mathcal{R}^N_s$, $f(R)=bTTC(R)$. Thus, the proof of Theorem~\ref{thm:bttc2} is completed.\medskip

Theorem~\ref{thm:bttc} can be proven by exactly the same way to Theorem~\ref{thm:bttc2} and hence we omit it.

\subsection{Proof of Theorem~\ref{thm:mtpe}}
\label{appendix:thm:mtpe}
Let $N=\{1,2\}$ and $T=\{1,2,3\}$. Let $R\in \mathcal{R}^N_l$ be such that 

$$R_1:o_2^1,\bm{o_1^1},\bm{o_1^3},o_2^3,o_2^2,\bm{o_1^2}.$$
$$R_2:o_1^1,\bm{o_2^1},o_1^3,\bm{o_2^3},o_1^2,\bm{o_2^2}.$$

So, agent $1$ would like to trade type-$1$ and type-$2$ but not type-$3$, and agent $2$ would like to trade all types.

Let $f$ be \textsl{individually rational} and \textsl{$T'$-types pairwise efficient}. Thus, by \textsl{$T'$-types pairwise efficiency} of $f$, two agents trade in type-$1$ and type-$2$, i.e.,
 $f^1_1(R)=o_2^1,f^1_2(R)=o_1^1$ and $f^2_1(R)=o_2^2,f^2_2(R)=o_1^2$. We only need to consider the allocation in type-$3$. There are two cases.\smallskip
 
 \noindent\textbf{Case~1.} Agents trade in type-$3$, i.e., $f^3_1(R)=o_2^3$ and $f^3_2(R)=o_1^3$. Then $f_1(R)= (o_2^1,o_2^2,o_2^3)$.

 Let $R'_1: \bm{o_1^3},o_2^3,o_2^1,\bm{o_1^1},o_2^2,\bm{o_1^2}.$ By \textsl{individual rationality} of $f$, agent $1$ receives his type-$3$ endowment, i.e., $f_1^3(R'_1,R_2)=o_1^3$. 
 By \textsl{$T'$-types pairwise efficiency} of $f$, agents $1$ and $2$ still trade in type-$1$ and type-$2$ at $(R'_1,R_2)$, i.e., 
 $f^1_1(R'_1,R_2)=o_2^1,f^1_2(R'_1,R_2)=o_1^1$ and $f^2_1(R'_1,R_2)=o_2^2,f^2_2(R'_1,R_2)=o_1^2$. Thus, $f_1(R'_1,R_2)=(o_2^1,o_2^2,o_1^3)\mathbin{P_1} (o_2^1,o_2^2,o_2^3)=f_1(R)$, which implies that $f$ is not \textsl{strategy-proof}.\smallskip

\noindent\textbf{Case~2.} Agents do not trade in type-$3$, i.e., $f^2_1(R)=o_1^2$ and $f^2_2(R)=o_2^2$. Then $f_2(R)=(o_1^1,o_1^2,o_2^3)$.

Let $y=(y_1,y_2)=((o_2^1,o_1^2,o_2^3),(o_1^1,o_2^2,o_1^3))$, i.e., agents only trade in type-$1$ and type-$3$.

Let $R'_2:o_1^3,\bm{o_2^3},\bm{o_2^1},o_1^1, \bm{o_2^2},o_1^2.$ 
If agents do not trade in type-$3$ at $(R_1,R'_2)$, then by \textsl{individual rationality} of $f$, $f_2(R_1,R'_2)=o_2$ and hence $f(R_1,R'_2)=e$. This contradicts \textsl{$T'$-types pairwise efficiency} of $f$, since $y_1\mathbin{P_1}o_1$ and $y_2\mathbin{P_2}o_2$. Thus, type-$3$ is traded at $(R_1,R'_2)$. Therefore, by \textsl{individual rationality} of $f$, type-$1$ is also traded, otherwise $o_1\mathbin{P_1} f_1(R_1,R'_2)$. Thus, $f_2^1(R_1,R'_2)=o_1^1$ and $f_2^3(R_1,R'_2)=o_1^3$. So, $f_2(R_1,R'_2)\mathbin{P_2}(o_1^1,o_1^2,o_2^3)=f_2(R)$, which implies that $f$ is not \textsl{strategy-proof}.

The extension for more than two agents and three types is easy and thus we omit it.

\subsection{An example for endnote~\ref{myfootnote}}\label{footnoteexample}
The following efficiency property is an adaptation of \citet{papai2007}'s \textsl{restricted efficiency}.\footnote{In \citet{papai2007}, \textsl{restricted efficiency} is defined originally as follows. Let $X'\subsetneq X$ and $f:\mathcal{R}^N\to X'$. $f$ is \textit{restricted efficient} if for each $R\in \mathcal{R}^N$, there does not exist $x\in X'$ such that $x$ Pareto dominates $f(R)$ at $R$.}

Let $y\in X$ and $Y\equiv \{y,e\}$.
$f:\mathcal{R}^N\to Y$ is $Y$-\textit{restricted efficient} if for each $R\in \mathcal{R}^N$, there does not exist $x\in Y$ such that $x$ Pareto dominates $f(R)$ at $R$.

This is a weak efficiency property since it rules out the extremely inefficient no-trade mechanism. Also, it is easy to see that cTTC and bTTC do not satisfy this efficiency property. However, due to its restriction (feasibility of only two allocations), this property is uninteresting.
Next, we show that this property is compatible with \textsl{individual rationality} and \textsl{group strategy-proofness}.

Let $f:\mathcal{R}^N\to Y$ be such that for each $R\in \mathcal{R}^N$, if for each $i\in N$, $y\mathbin{R_i} e$, then $f(R)=y$, otherwise $f(R)=e$. This mechanism resembles a form of unanimous voting, i.e., it always selects the status quo allocation ($e$) unless all agents unanimously prefer $y$ to $e$. By the definition of $f$, it is easy to see it is \textsl{individually rational}, \textsl{group strategy-proof}, and $Y$-\textit{restricted efficient}.

\subsection{An example for endnote~\ref{footnotece}}\label{footnoteceexample}
Consider the case  where there are only two agents and two types, and $O=\{H_1,H_2,C_1,C_2\}$. Consider the following preferences.
$$R_1:(H_2,C_2),(H_1,C_1),(H_2,C_1),(H_1,C_2);$$
$$R_2:(H_1,C_1),(H_2,C_2),(H_1,C_2),(H_2,C_1).$$

Note that $R$ is strict but not separable. The unanimously best allocation is $x=( (H_2,C_2),(H_1,C_1)  )$, and there is another coordinatewise efficient allocation 
$y=( (H_1,C_1),(H_2,C_2) )$.

\section{Appendix: independence of the properties}\label{section:Appendix2}
We provide several examples to establish the logical independence of the properties in our characterizations.\smallskip

\begin{tabular}{|l|l|l|l|l|l|l|}
    \hline
     & Example~3 (NT) & Example~4 (SD) & Example~5 (MSIR) & Example~9 & cTTC & bTTC \\\hline
    IR         & +         & $-$          & +         & +         & +    & +    \\\hline
    SP         & +         & +         & $-$          & +         & +    & +    \\\hline
    NB         & +         & +         & +         & $-$          & +    & +    \\\hline
    GSP         & +         & +         & $-$          & $-$          & $-$    & +    \\\hline
    PE         & $-$          & +         & +         & $-$          & $-$     & $-$     \\\hline
    CE         & $-$         & +         & +         & $-$          & +    & $-$     \\\hline
    pE         & $-$          & +         & +         & +         & $-$     & +   \\\hline
    \end{tabular}
\smallskip

     \textbf{Satisfaction of properties of mechanisms}. The notation ``+'' (``$-$'') in a cell means that the property is satisfied (violated) by the corresponding mechanism. 
     Abbreviations in the first column respectively refer to
     \textsl{individual rationality},   
     \textsl{strategy-proofness}, 
     \textsl{non-bossiness}, 
     \textsl{group strategy-proofness}, 
     \textsl{Pareto efficiency}, 
     \textsl{coordinatewise efficiency}, and
     \textsl{pairwise efficiency}.

     \subsection{Theorems~\ref{thm:cTTC}, \ref{thm:imposs}, \ref{thm:pce}, and \ref{thm:mtpe}}
     The following examples establish the logical independence of the properties in Theorem~\ref{thm:cTTC}. We label examples by the property that is not satisfied.

\begin{example}[\textbf{\textsl{Coordinatewise efficiency}}]\ \\
    The no-trade mechanism that always assigns the endowment allocation to each market is \textsl{individually rational}, \textsl{group strategy-proof} (and hence
    \textsl{strategy-proof} and \textsl{non-bossy}), but not \textsl{coordinatewise efficient}.\hfill~$\diamond$
    \label{example:notFESP}
\end{example}

\begin{example}[\textbf{\textsl{Individual rationality}}]\ \\
    By ignoring property rights that are established via the endowments, we can easily adjust the well-known mechanism of serial dictatorship to our setting: based on an ordering of agents, we let agents sequentially choose their allotments. Serial dictatorship mechanisms have been shown in various resource allocation models to satisfy \textsl{Pareto efficiency} (and hence \textsl{coordinatewise efficiency} and \textsl{pairwise efficiency}), \textsl{group strategy-proofness} (and hence
     \textsl{strategy-proofness} and \textsl{non-bossiness}); since property rights are ignored, they violate \textsl{individual rationality}.\hfill~$\diamond$
     \label{example:notIR}
    \end{example}

    \begin{example}[\textbf{\textsl{Strategy-proofness}}]\ \\
        \citet{biro2021} show that their Multiple-Serial-IR mechanisms are \textsl{individually rational} and \textsl{Pareto efficient} (and hence \textsl{coordinatewise efficient}), but not \textsl{strategy-proof}.
        
        A Multiple-Serial-IR mechanism is determined by a fixed order of the agents. At any preference profile and following the order, the rule lets each agent pick her most preferred allotment from the available objects such that this choice together with previous agents' choices is compatible with an \textsl{individually rational} allocation. Formally,\smallskip
	
\noindent \textbf{Input.}
	An order $\delta=(i_1,\ldots,i_n)$ of the agents and a multiple-type housing market $R\in \mathcal{R}^N$. \smallskip
	
\noindent \textbf{Step~0.} Let $Y(0)$ be the set of individually rational allocations in $X$.\smallskip
	
\noindent \textbf{Step~1.}
	Let $Y_1$ be the set of agent $i_1$'s allotments that are compatible with some allocation in $Y(0)$, i.e., $Y_1$ consists of all $y_{i_1}\in\Pi_{t\in T}O^t$ for which there exists an allocation $x\in Y(0)$ such that $x_{i_1}=y_{i_1}$.\smallskip
	
	\noindent Let $y^*_{i_1}$ be agent $i_1$'s most preferred allotment in $Y_1$, i.e., for each $y_{i_1}\in Y_1$, $y^*_{i_1} \mathbin{R_i}y_{i_1} $.\smallskip
	
	\noindent Let $Y(1)\subseteq Y(0)$ be the set of allocations in $Y(0)$ that are compatible with $y^*_{i_1}$, i.e., $Y(1)$ consists of all $x\in Y(0)$ with $x_{i_1}=y^*_{i_1}$.  \smallskip

\noindent \textbf{Step~$\bm{k=2,\ldots,n}$.}
	Let $Y_k$ be the set of agent $i_k$'s allotments that are compatible with some allocation in $Y(k-1)$.\smallskip
	
\noindent Let $y^*_{i_k}$ be agent $i_k$'s most preferred allotment in $Y_k$.\smallskip
	
\noindent Let $Y(k)\subseteq Y(k-1)$ be the set of allocations in $Y(k-1)$ that are compatible with $y^*_{i_k}$. \smallskip
	
\noindent \textbf{Output.}
	The allocation of the Multiple-Serial-IR mechanism associated with $\delta$ at $R$ is $MSIR(\delta,R)\equiv (y_1^*,y_2^*,\ldots,y_n^*)$.\smallskip
	
\noindent Given an order $\delta$, the associated Multiple-Serial-IR mechanism $\Delta$ assigns to each market $R$ the allocation $\Delta (R) \equiv MSIR(\delta,R)$.\smallskip

By the definition of Multiple-Serial-IR mechanisms, it is easy to see that these mechanisms are \textsl{individually rational} and \textsl{Pareto efficient}.\smallskip

We include a simple illustrative example for $n=2$ agents and $m=2$ types for completeness.\smallskip

Let $N=\{1,2\}$ and $T=\{H(ouse),C(ar)\}$. For each $i\in N$, let $(H_i,C_i)$ be agent $i$'s endowment. Let $R\in \mathcal{R}_l^N$ be given by
$$\bm{R_1}:H_2,\bm{H_1},C_2,\bm{C_1},$$
$$\bm{R_2}:H_1,\bm{H_2},\bm{C_2},C_1.$$
Consider the Multiple-Serial-IR mechanism $\Delta$ induced by $\delta=(1,2)$, i.e., agent $1$ moves first (note that since there are only two agents, when agent $1$ picks his allotment, the final allocation is completely determined). Since allocation $x\equiv ( (H_2,C_2),(H_1,C_1) )$ is \textsl{individually rational} at $R$ and $x_1=(H_2,C_2)$ is agent $1$'s most preferred allotment, $\Delta(R)=x$.

Next, consider $\bm{R'_2}:\bm{C_2},C_1,H_1,\bm{H_2}$. Note that at $(R_1,R'_2)$, only ${y\equiv  ((H_2,C_1),  (H_1,C_2)  )}$ and $e$ are \textsl{individually rational}. Thus, agent $1$ can only pick $y_1$ or $o_1$. Since $y_1\mathbin{R_1} o_1$, agent $1$ picks $y_1$ and hence $\Delta(R_1,R'_2)=y$. Finally, we see that $y_2\mathbin{R_2} x_2$, which implies that agent $2$ has an incentive to misreport $R'_2$ at $R$. Hence, the Multiple-Serial-IR mechanism induced by $\delta=(1,2)$ is not \textsl{strategy-proof}.
        \label{example:notSP}
 \hfill~$\diamond$
        \end{example}

Examples~\ref{example:notFESP}, \ref{example:notIR}, and \ref{example:notSP} are well defined for strict preferences and establish the logical independence of the properties in Theorems~\ref{thm:imposs}, \ref{thm:pce}, and \ref{thm:mtpe}.

\subsection{Theorems~\ref{thm:bttc}, \ref{thm:bttc2},  \ref{thm:coale}, and \ref{thm:bttc3}}
The following examples establish the logical independence of the properties in Theorem~\ref{thm:bttc3}.

\begin{example}[\textbf{\textsl{Pairwise efficiency}}]\ \\
    Same as example~\ref{example:notFESP}, the no-trade mechanism is \textsl{individually rational}, \textsl{strategy-proof}, and \textsl{non-bossy}, but not \textsl{pairwise efficient}.
    \hfill~$\diamond$\label{example:notBESP}
\end{example}

\begin{example}[\textbf{\textsl{Individual rationality}}]\ \\
    Same as example~\ref{example:notIR}, serial dictatorship mechanisms are \textsl{group strategy-proof} (hence
     \textsl{strategy-proof} and \textsl{non-bossy}), and \textsl{Pareto efficient} (hence \textsl{pairwise efficient}), but not \textsl{individually rational}.\hfill~$\diamond$ \label{example:notIR2}
    \end{example}

    \begin{example}[\textbf{\textsl{Strategy-proofness}}]\ \\
        Same as example~\ref{example:notSP}, Multiple-Serial-IR mechanisms satisfy \textsl{individual rationality} and \textsl{Pareto efficiency} (and hence \textsl{pairwise efficiency}), but violate \textsl{strategy-proofness}.

        We show that multiple-Serial-IR mechanisms are \textsl{non-bossy}.\medskip

   Let $\delta=(i_1,\ldots,i_n)$ be an order of the agents and let $\Delta$ denote the associated Multiple-Serial-IR mechanism. 
	
	Let $R\in \mathcal{R}_l^N$, $i\in N$, and $R'_i\in \mathcal{R}_l$. Let $R'\equiv (R'_i,R_{-i})$, $x\equiv \Delta(R)$, and $y \equiv \Delta (R')$. Assume $y_i=x_i$. We show that $y=x$.

Let $i_k \equiv i$. Since $y_i = x_i$ and for each $\ell=2,\ldots,k-1,k+1,\ldots,n$, $R'_{i_\ell}=R_{i_\ell}$, agent $i_1$'s choice at Step~$1$ under $R'$ is restricted in the same way as agent $i_1$'s choice at Step~$1$ under $R$. Thus, since $R'_{i_1}=R_{i_1}$, we have $y_{i_1}=x_{i_1}$. Similar arguments show that for each $\ell=2,\ldots,k-1,k+1,\ldots,n$, $y_{i_\ell}=x_{i_\ell}$.
Hence, $\Delta$ is \textsl{non-bossy}.
\hfill~$\diamond$
            \label{example:notSP2}
           \end{example}

           \begin{example}[\textbf{\textsl{Non-bossiness}}]\ \\ 
            \label{example:notNB} 
            Note that when there are only two agents, \textsl{non-bossiness} is trivially satisfied. Thus, at least we need three agents. So,
            consider markets with three agents and two types, i.e., $N=\{1,2,3\}$ and $T=\{1,2\}$. 
            
            Let $\mathcal{\hat{R}}\subsetneq \mathcal{R}^N_l$ be a set of markets such that for each $R\in \mathcal{\hat{R}}$, $R_1|^e$ is such that there is an agent $i\in \{2,3\}$, and agent $1$ positions agent $i$'s full endowment at the top, i.e., for each $t\in T$, $R_1^t:o_i^t,\ldots$.
            
            Let $y\in X$ be such that (i) $y_1=e_i$, (ii) $y_i=(e_1^1,e_j^2)$ and $y_j=(e_j^1,e_1^2)$, where $\{i,j\}=\{2,3\}$.
            
            Let $f$ be such that
            \begin{equation*}
            f(R)=\left\{
            \begin{aligned}
            y & , & R  \in \mathcal{\hat{R}} \text{ and } y \text{ Pareto dominates } bTTC(R),\\
            bTTC(R) & , & \text{ otherwise. }
            \end{aligned}
            \right.
            \end{equation*}
            
            Note that for $R\in \mathcal{\hat{R}}$, if $f_1(R)\neq e_i$, then there is an agent $k\in \{2,3\}$ (possibly $k=i$) who receives $e_i$ and prefers $e_i$ to $y_k$, i.e., $f_k(R)=bTTC_k(R)=e_i \mathbin{P_k}y_k$.
            
            It is easy to see that $f$ inherits \textsl{individually rational} and \textsl{pairwise efficiency} from bTTC. By the definition of $f$, one can verify that $f$ is \textsl{bossy}. We show that $f$ is \textsl{strategy-proof}.\medskip
            
            We first show that agent $1$ has no incentive to misreport. By the definition of $f$, it is easy to see that for each $R\in\mathcal{R}^N$, $f_1(R)=bTTC_1(R)$. Since $bTTC$ is \textsl{strategy-proof}, agent $1$ has no incentive to misreport.
            
            
            

            Next, we show that agents $2$ and $3$ have no incentive to misreport.  
            
            For $R\not \in \mathcal{\hat{R}}$, $f_2(R)=bTTC_2(R)$ and $f_3(R)=bTTC_3(R)$. Since $bTTC$ is \textsl{strategy-proof}, agents $2$ and $3$ have no incentive to misreport.
            
            For $R\in \mathcal{\hat{R}}$, there are two cases.
            
            \noindent\textbf{Case~1.} $f(R)=bTTC(R)$. By the definition of $f$, there is an agent $k\in \{2,3\}$ such that $bTTC_k(R)\mathbin{P_k} y_k$. Let $R'_k\neq R_k$ be a misreporting. Then, $f_k(R'_k,R_{-k})\in \{bTTC_k(R'_k,R_{-k}),y_k \}$. Since $bTTC$ is \textsl{strategy-proof}, $bTTC_k(R)\mathbin{R_k}bTTC_k(R'_k,R_{-k})$, and hence  $bTTC_k(R)=f_k(R)\mathbin{R_k}f_k(R'_k,R_{-k})$.
            
            \noindent\textbf{Case~2.} $f(R)=y$. By the definition of $f$, $y_2\mathbin{R_2} bTTC_2(R)$ and $y_3\mathbin{R_3} bTTC_3(R)$. Let $k\in \{2,3\}$ and $R'_k\neq R_k$ be a misreporting. Since $bTTC$ is \textsl{strategy-proof}, $y_k \mathbin{R_k} bTTC_k(R)\mathbin{R_k}bTTC_k(R'_k,R_{-k})$. By the definition of $f$, $f_k(R'_k,R_{-k})\in \{bTTC_k(R'_k,R_{-k}),y_k \}$. Thus, $f_k(R)=y_k \mathbin{R_k} f_k(R'_k,R_{-k})$.
             \hfill~$\diamond$
                
               \end{example}   

Examples~\ref{example:notBESP}, \ref{example:notIR2}, \ref{example:notSP2}, and \ref{example:notNB} are well defined on the domain of separable preference (strict preference) profiles and establish the logical independence of the properties in Theorem~\ref{thm:bttc2} (Theorem~\ref{thm:bttc}). These examples also 
establish the logical independence of the properties in Theorem~\ref{thm:coale} as well.


\bibliographystyle{chicago}
\bibliography{references}

\end{document}